\documentclass[a4paper,UKenglish,cleveref,autoref,thm-restate]{lipics-v2021}

\pdfoutput=1 
\hideLIPIcs  

\title{Reconfiguring Subgraphs with Extra Resources} 

\author{Jason Fong}{Georgia Institute of Technology, United States}{jfong42@gatech.edu}{https://orcid.org/0009-0009-8379-9105}{}{}

\author{Jeffrey Kam}{University of Cambridge, United Kingdom}{hyk31@cantab.ac.uk}{https://orcid.org/0000-0002-4332-0026}{}

\author{Steven Wong}{University of Waterloo, Canada}{slnwong@uwaterloo.ca}{}{}{}

\authorrunning{J. Fong, J. Kam, and S. Wong} 

\Copyright{Jason Fong, Jeffrey Kam, and Steven Wong} 

\ccsdesc[500]{Theory of computation~Graph algorithms analysis}
\ccsdesc[500]{Theory of computation~Problems, reductions and completeness}

\keywords{Reconfiguration, Subgraph Reconfiguration, Treewidth, Pathwidth, Reconfiguration with Extra Buffer} 

\category{} 

\relatedversion{} 

\acknowledgements{Part of this work was presented in the Banff Reconfiguration Workshop (22w5090) and the authors would like to thank the participants for their suggestions. The authors would like to thank Naomi Nishimura and Benjamin Moore for the early discussions and feedback.}

\nolinenumbers 

\EventEditors{John Q. Open and Joan R. Access}
\EventNoEds{2}
\EventLongTitle{42nd Conference on Very Important Topics (CVIT 2016)}
\EventShortTitle{CVIT 2016}
\EventAcronym{CVIT}
\EventYear{2016}
\EventDate{December 24--27, 2016}
\EventLocation{Little Whinging, United Kingdom}
\EventLogo{}
\SeriesVolume{42}
\ArticleNo{23}

\usepackage{svg}
\usepackage{tikz}
\usetikzlibrary{patterns, decorations.pathmorphing}
\tikzset{
	baseline=(current bounding box.north),
	vertex/.style = {shape=circle, draw, minimum size=4mm, inner sep=0, font=\footnotesize},
	bvertex/.style = {shape=circle, draw, minimum size=1.5mm, inner sep=0, fill=black},
	edge/.style = {-, = latex'},
}
\tikzset{arc/.style = {->, = latex}}
\pgfdeclarelayer{bg}    
\pgfsetlayers{bg,main}  

\usepackage{multirow}   
\usepackage{makecell}   

\def\cP{\textsf{P}}
\def\cNP{\textsf{NP}}

\newcommand{\set}[1]{\{ #1 \}}

\newcommand{\floor}[1]{\lfloor #1 \rfloor}
\newcommand{\abs}[1]{| #1 |}

\begin{document}

\maketitle

\begin{abstract}
The \textsc{subgraph reconfiguration} problem asks whether one subgraph can be transformed into another via a sequence of local changes while maintaining a specified graph property.
In this work, we focus on the setting where the subgraph is specified by its set of edges.
Our contributions in this paper are twofold.
First, motivated by the contrast that path reconfiguration is $\textsf{NP}$-hard while tree reconfiguration is solvable in linear time, we prove two generalizations: (1) for any fixed $k$ at least one, reconfiguring connected graphs with pathwidth at most $k$ is $\textsf{NP}$-hard, and (2) for any fixed $k$ at least two, reconfiguring graphs with pathwidth at most $k$ is also $\textsf{NP}$-hard. 
En route to proving (2), we show a general hardness result that applies to a range of minor-closed graph classes, which we use to show planar graph reconfiguration is also $\textsf{NP}$-hard.
Second, given our negative results, we extend the problem to a resource-focused setting, asking how much additional buffer space is needed to turn a non-reconfigurable instance into a reconfigurable one. 
We show that $\Omega(n)$ extra buffer space is needed for planar graphs and graphs with bounded pathwidth and treewidth, while $O(1)$ extra buffer space is sufficient for cactus graphs in a restricted setting.
\end{abstract}

\section{Introduction}\label{sec:intro}

%

Combinatorial reconfiguration \cite{nishimura_18} is the study of transformations between configurations by making one local change at a time, providing a unifying framework for problems ranging from toy problems like the 15 puzzle to practical use-cases in power supply \cite{ito11} and quantum computing \cite{cimring_23, bansal_25}.
Our focus in this paper is on the \textsc{Subgraph reconfiguration} problem, first introduced by Hanaka et al. \cite{hanaka_20}.
In this setting, we are given an input graph $G$ and some graph property $\Pi$ (e.g. a graph is a tree). A configuration is a subgraph $H \subseteq G$ such that $H$ satisfies $\Pi$.
Specifically, we only consider the edge variant setting, where each configuration (a subgraph) is specified by its set of edges.
To align with existing literature, we imagine a placement of tokens on the edges of $G$ and the placement of tokens is a valid configuration if the token-induced subgraph, formed by edges with a token, satisfies $\Pi$.
The local change in our case, also known as a \emph{reconfiguration step}, is the movement of a token from one edge to an unoccupied edge, also known as the Token Jumping model.
We are most interested in the problem where given a source subgraph $E_s$ and target subgraph $E_t$ of the input graph $G$, both satisfying $\Pi$, whether there exists a sequence of local changes, also known as a \emph{reconfiguration sequence}, that transforms $E_s$ to $E_t$.
We call the problem the \textsc{$\Pi$ reconfiguration} problem.

Hanaka et al. \cite{hanaka_20} showed that \textsc{path reconfiguration} is \cNP-hard, whereas \textsc{tree reconfiguration} can be solved in linear time.
Motivated by this contrast, we investigate the extent to which structural graph properties influence reconfiguration complexity. 
In particular, we study the complexity of reconfiguration problems on graphs of bounded pathwidth and bounded treewidth, which are natural extensions of paths and trees, and we show that both problems are \textsf{NP}-hard.
Given the hardness results, we turn our focus to the resource-focused setting, where we are interested in knowing the impact of extra buffer space, i.e. a place to store tokens without invalidating the graph property, on instances of \textsc{subgraph reconfiguration}. 
In particular, we are interested in the minimum amount of buffer space required to turn a non-reconfigurable instance into a reconfigurable one.
We note that this model is similar to the Token Addition and Removal model \cite{ito11}, but with an additional constraint that we can only add as many tokens as we have removed thus far.

Prior work by Hanaka et al. \cite{hanaka_20} also considered other graph properties and their complexity under a combination of different rules (e.g. Token Sliding) and token-placement variants (e.g. placed on vertices). 
Demaine et al. \cite{demaine_19} considered a related path reconfiguration problem under the Token Sliding rule \cite{nishimura_18}, but their sliding rule concerns sliding the path as a whole, as opposed to sliding individual tokens. 
Other properties have been studied as well. For example, Bousquet et al. \cite{bousquet_20} considered the reconfiguration of spanning trees while preserving constraints on the number of leaves, while Eto et al. \cite{eto_22} studied the reconfiguration of $d$-regular induced subgraphs.
Extending upon the \textsc{Subgraph reconfiguration} problem, Moore et al. \cite{moore_18} also considered the reconfiguration of graph minors.

\begin{table}
\centering
\begin{tabular}{| c || c | c | c |}
    \hline
    \textbf{Property} & \textbf{Complexity} & \textbf{Constraints} & \textbf{Extra Buffer} \\
    \hline\hline
    Path  & \cNP-hard~\cite{hanaka_20} & Input is connected & $\Omega(n)$ [Theorem~\ref{thm:buffer_simple}] \\
    \hline
    Tree  & \cP~\cite{hanaka_20} & Input is connected & $0$ [Theorem~\ref{thm:buffer_simple}] \\
    \hline
    Cycle & \cP~\cite{hanaka_20} & Input is connected & $\infty$ [Theorem~\ref{thm:buffer_simple}] \\
    \hline
    \makecell{Connected\\ Pathwidth $\leq k$} & \cNP-hard~[Theorem~\ref{thm:pathwidth_k}] & Input is connected & $\Omega(n)$ [Theorem~\ref{thm:buffer_pw_tw}] \\
    \hline
    Treewidth $\leq k$ & \cNP-hard~[Theorem~\ref{thm:treewidth_reconf_np_hard}] & None & $\Omega(n)$ [Theorem~\ref{thm:buffer_pw_tw}] \\
    \hline
    Planar & \cNP-hard~[Theorem~\ref{thm:planar_reconf_np_hard}] & None & $\Omega(n)$ [Theorem~\ref{thm:buffer_pw_tw}] \\
    \hline
    \multirow{2}{*}{Cacti} & \multirow{2}{*}{Unknown}
        & None & $O(c_{\max})$ [Theorem~\ref{thm:buffer_cactus}] \\
    \cline{3-4}
        & & \makecell{Source and target are \\ triangular cacti} & $6$ [Theorem~\ref{thm:buffer_cactus}] \\
    \hline
\end{tabular}
\vspace{1mm}
\caption{
Summary of our main results.
Here, $n$ denotes the size of input, $c_{max}$ denotes the larger of the number of cycles in the source and target subgraphs, and an extra buffer of $\infty$ means no instance is reconfigurable even with extra buffer.
In the extra buffer setting, we assume the input graph is connected for the first four properties, as otherwise the problem trivializes when the source and target lie in different components.
}\label{table:main_result}
\end{table}

We want to emphasize here that the containment of graph properties does not imply containment of their reconfiguration complexities. 
If $\Pi \subseteq \Pi'$ and \textsc{$\Pi$-reconfiguration} is intractable (e.g. \cNP-hard), it does not imply that \textsc{$\Pi'$-reconfiguration} is also intractable.  
For example, \textsc{path reconfiguration} is \cNP-hard, but \textsc{tree reconfiguration} is solvable in linear time.
Conversely, if $\Pi \subseteq \Pi'$ and \textsc{$\Pi$-reconfiguration} is tractable (i.e. solvable in polynomial time), this does not imply \textsc{$\Pi'$-reconfiguration} is also tractable, as demonstrated by the hardness of reconfiguring graphs with treewidth at least $k$, for any fixed $k \geq 2$ (Theorem \ref{thm:treewidth_reconf_np_hard}), even though \textsc{tree reconfiguration} is tractable.
Restricting or relaxing the property can introduce obstructions that affect reconfiguration in either direction, so each property needs to be analyzed individually. 
This also underscores the importance of having general results that apply to a whole family of graph properties, such as our Theorem \ref{thm:minor_closed_hardness}, which applies to general minor-closed graph properties.

We organize the rest of the paper as follows.
Section \ref{sec:prelim} establishes the definitions and notations used throughout the paper.
Section \ref{sec:technical1} shows the hardness of reconfiguring graphs with bounded pathwidth.
Section \ref{sec:technical2} shows a general result for proving hardness of reconfiguring minor-closed properties that satisfy some special constraints, which allows us to show the hardness of reconfiguring both graphs with bounded treewidth and planar graphs.
Finally, in Section \ref{sec:technical3}, we introduce the problem of reconfiguring with extra buffer space and show both positive and negative results for a few graph classes. 
We conclude by discussing some future direction in Section \ref{sec:conclusion}.

\section{Preliminaries}\label{sec:prelim}

%

In this section, we introduce some notation and important background that will be used throughout the paper. We use an asterisk $(\ast)$ to denote results with deferred proofs in the Appendix. All graphs considered will be simple graphs.

First, we will formally introduce the subgraph reconfiguration problem.
Let $\Pi$ be a graph property (e.g. a graph is a path) and $G$ be an input graph.
A tuple $(G, E_s, E_t)$ is an instance of the \textsc{$\Pi$ reconfiguration} problem, where $E_s$ and $E_t$ are the edge subsets for the source and target respectively, and the subgraphs induced by $E_s$ and $E_t$ satisfy property $\Pi$.
A \emph{reconfiguration sequence} is a sequence $E_s = E_0, E_1, \ldots, E_\ell = E_t$ such that for $i \in \set{0, \ldots, \ell-1}$, $\abs{E_i} = \abs{E_{i+1}}$ and $\abs{E_{i} \Delta E_{i+1}} = 2$ where $\Delta$ denotes the symmetric difference, and for $j \in \set{0, \ldots, \ell}$, $E_{j}$ satisfies property $\Pi$.
We say that $(G, E_s, E_t)$ is \emph{reconfigurable} if there exists a reconfiguration sequence from $E_s$ to $E_t$. 
Otherwise, we say it is \emph{non-reconfigurable}.
The size of the instance $(G, E_s, E_t)$ is defined as the sum of $\abs{V(G)}$, $\abs{E(G)}$, $\abs{E_s}$, and $\abs{E_t}$.
We will assume throughout the paper that $\abs{E_s} = \abs{E_t}$ as otherwise reconfiguration is trivially impossible.
Also, in this paper we are only interested in the existence of such a reconfiguration sequence rather than finding a shortest one, which remains a direction for future work.

\begin{definition}[Tree Decomposition \cite{bodlaender_98}]\label{def:treewidth}
Let $G = (V,E)$ be a graph.
Let $T$ be a tree.
For each $t \in V(T)$, define a bag $B_t \subseteq V$.
Let $\mathcal{B} = \{B_t: t \in V(T)\}$ be a set of bags.
The pair $(T,\mathcal{B})$ is a \emph{tree decomposition} of $G$ if it satisfies the following:
\begin{enumerate}
    \item For $uv \in E$, $\set{u,v} \subseteq B_t$ for some $t \in V(T)$.
    \item Let $u \in V$. If $u \in B_{t_1}$ and $u \in B_{t_3}$ for some $t_1, t_3 \in V(T)$, then $u \in B_{t_2}$ for all $t_2$ on the unique path between $t_1$ and $t_3$ in $T$.
    \item $V = \bigcup_{t \in V(T)} B_t$.
\end{enumerate}
The \emph{width} of a tree decomposition is $\displaystyle \max_{t \in T} (\abs{B_t} - 1)$.
The \emph{treewidth} of $G$, denoted by $tw(G)$, is the minimum width of all tree decompositions of $G$. 
\end{definition}

\begin{definition}[Path Decomposition \cite{bodlaender_98}]\label{def:pathwidth}
    A path decomposition of $G = (V,E)$ is a tree decomposition where the underlying tree is a path. 
    It can be written as an ordered sequence of bags, $(B_i: 1 \leq i \leq m)$, where $m$ is the length of the path. 
    In particular, we can rewrite point $(2)$ as: For $u \in V$, if $u \in B_i$ and $u \in B_j$ for some $i,j \in \set{1, \ldots, n}$ where $i \leq j$, then $u \in B_k$ for all $i \leq k \leq j$.
    The \emph{pathwidth} of $G$, denoted by $pw(G)$, is the minimum width of all path decompositions of $G$. 
\end{definition}

We say $G$ has a \emph{$v$-path decomposition} if (1) there exists a path decomposition of $G$ with width $pw(G)$ and (2) $v \in V(G)$ is in the first bag of the decomposition.
For any fixed $k$, we say a graph $G$ has \emph{$k$-bounded treewidth} if it satisfies $tw(G) \leq k$, and \emph{connected $k$-bounded pathwidth} if $G$ is connected and satisfies $pw(G) \leq k$.

\begin{definition}[$k$-tree]
    A graph $G$ is called a \emph{$k$-tree} if it satisfies either (1) $G$ is a $(k+1)$-clique, or (2) there exists some vertex $v \in V(G)$ such that $G \setminus \set{v}$ is a $k$-tree, and $v$ has $k$ edges into $G \setminus \set{v}$.
\end{definition}
Note that $k$-trees are exactly the edge-maximal graphs with treewidth $k$, and any $k$-bounded treewidth graph is a subgraph of a $k$-tree \cite{nesetril_08}.

\begin{definition}[$1$-sum]
    Let $G_1$ and $G_2$ be distinct graphs with at least one vertex, and let $u \in V(G_1)$, $v \in V(G_2)$.
    Then the \emph{$1$-sum} of $G_1$ and $G_2$ at $u$ and $v$, denoted as $G_1 \oplus_{u, v} G_2$, is the graph obtained by identifying the vertex $u \in V(G_1)$ with vertex $v \in V(G_2)$. 
    We refer to this merged vertex as either $u$ or $v$ interchangeably.
    We write $G_1 \oplus_{v_1, u_1} G_2 \oplus_{v_2, u_2} \ldots \oplus_{v_{n-1}, u_{n-1}} G_n$ to refer to the iterative application of the $1$-sum operation from $G_1$ to $G_n$.
\end{definition}

\begin{definition}[$k$-subdivision]
    Let $G$ be a graph with $uv \in E(G)$, and let $k \geq 1$.
    A \emph{$k$-subdivision} of $uv$ results in a graph $G'$ where:
    \begin{enumerate}
        \item $V(G') = V(G) \cup \set{s_1, \ldots, s_k}$.
        \item $E(G') = (E(G) \setminus \set{uv}) \cup \set{us_1, \ldots, us_k, s_1v, \ldots, s_kv}$.
    \end{enumerate}
    We refer to these $s_i$'s as the intermediate vertices of $uv$ in $G'$.
    Furthermore, we write $E(uv, k, u)$ to denote the set of edges incident to $u$ introduced by the $k$-subdivision of $uv$, and similarly we write $E(uv, k, v)$ to denote the set of edges incident to $v$ introduced by the $k$-subdivision of $uv$.
\end{definition}

\begin{figure}[t]
\vspace{-0.25cm}
\centering
\begin{minipage}[b]{0.48\textwidth}
    \centering
    \subfloat[\centering $G_1$]{
    \begin{tikzpicture}[scale=0.7]
        \node[vertex] (v1) at (0,1) {$v_1$};
        \node[vertex] (v2) at (1,1) {$v_2$};
        \node[vertex] (v3) at (1,0) {$v_3$};
        \node[vertex] (v4) at (0,0) {$v_4$};
        \draw[edge] (v1) -- (v2);
        \draw[edge] (v2) -- (v3);
        \draw[edge] (v3) -- (v4);
        \draw[edge] (v4) -- (v1);
    \end{tikzpicture}
    }
    \quad
    \subfloat[\centering $G_2$]{
    \begin{tikzpicture}[scale=0.7]
        \node[vertex] (u1) at (0,0) {$u_1$};
        \node[vertex] (u2) at (1,-1) {$u_2$};
        \draw[edge] (u1) -- (u2);
    \end{tikzpicture}
    }
    \quad
    \subfloat[\centering]{
    \begin{tikzpicture}[scale=0.7]
        \node[vertex] (v1) at (0,1) {$v_1$};
        \node[vertex] (v2) at (1,1) {$v_2 / u_1$};
        \node[vertex] (v3) at (1,0) {$v_3$};
        \node[vertex] (v4) at (0,0) {$v_4$};
        \node[vertex] (u2) at (2,0) {$u_2$};
        \draw[edge] (v1) -- (v2);
        \draw[edge] (v2) -- (v3);
        \draw[edge] (v3) -- (v4);
        \draw[edge] (v4) -- (v1);
        \draw[edge] (v2) -- (u2);
    \end{tikzpicture}
    }
    \caption{Example of $G_1 \oplus_{v_2, u_1} G_2$}
    \label{fig:1_sum_example}
\end{minipage}
\hfill
\begin{minipage}[b]{0.48\textwidth}
    \centering
    \begin{tikzpicture}[scale=0.3]
        \node[vertex] (1) at (0,5) {$v_1$};
        \node[vertex] (2) at (5,5) {$v_2$};
        \node[vertex] (3) at (5,0) {$v_3$};
        \node[vertex] (4) at (0,0) {$v_4$};
        \draw[edge] (1) -- (2) node[midway, above] {a};
        \draw[edge] (2) -- (3) node[midway, right] {b};
        \draw[edge] (3) -- (4) node[midway, above] {c};
        \draw[edge] (4) -- (1) node[midway, left] {d};
    \end{tikzpicture}
    \quad
    \begin{tikzpicture}[scale=0.3]
        \node[vertex] (1) at (0,5) {$v_1$};
        \node[vertex] (2) at (5,5) {$v_2$};
    	\node[vertex] (3) at (5,0) {$v_3$};
    	\node[vertex] (4) at (0,0) {$v_4$};
    	
    	\draw[edge] (1) -- (2) node[midway, above] {a};
    	\draw[edge] (3) -- (4) node[midway, above] {c};
    	\draw[edge] (4) -- (1) node[midway, left] {d};
    
    	\node[vertex] (s1) at (3.5,2.5) {$s_{1}$};
        \node[vertex] (s2) at (5,2.5) {$s_{2}$};
        \node[vertex] (s3) at (6.5,2.5) {$s_{3}$};
    	
    	\draw[edge] (s1) -- (2);
    	\draw[edge] (s2) -- (2);
    	\draw[edge] (s3) -- (2);
    	\draw[edge] (s1) -- (3);
    	\draw[edge] (s2) -- (3);
    	\draw[edge] (s3) -- (3);
    \end{tikzpicture}
    \caption{Example of a $3$-subdivision of $v_2 v_3$}
    \label{fig:subdivision_example}
\end{minipage}
\vspace{-0.5cm}
\end{figure}

Finally, we introduce some terms and notations that we will use throughout the paper. We say a path $P = v_1 \ldots v_{k+1}$ has \emph{length} $k$. A tree $T$ with root vertex $r$ has \emph{depth} $k$ if the maximum distance from any leaf of $T$ to $r$ is $k$, where the distance is the length of the path between the two vertices. 
Furthermore, we say a tree is a \emph{perfect ternary tree} of depth $k$ if it is a ternary tree (i.e. each node has at most $3$ children) of depth $k$ with the maximum number of nodes. We write $[n]$ to denote the set $\set{1, \ldots, n}$ and $\mathbb{Z}_{\geq n}$ to denote $\set{n, n+1, \ldots}$.

\section{Graphs with Bounded Pathwidth}\label{sec:technical1}
Our goal in this section is to show the following result.

\begin{restatable}{theorem}{pathwidthbound}\label{thm:pathwidth_k}
 $(\ast)$ \textsc{connected $k$-bounded pathwidth reconfiguration} is $\cNP$-hard for any fixed $k \geq 1$.
\end{restatable}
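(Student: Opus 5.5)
The plan is to reduce from \textsc{path reconfiguration}, which Hanaka et al.~\cite{hanaka_20} showed to be \cNP-hard. Connected graphs of pathwidth $1$ are exactly caterpillars, so the case $k=1$ is already close to paths. For general $k$ we use a padding construction. Given an instance $(G, E_s, E_t)$ of \textsc{path reconfiguration}, we attach to $G$ a rigid gadget $H_k$ of pathwidth exactly $k$. The gadget is designed so that every configuration of the right size must contain all of $H_k$, together with a path hanging off a designated vertex $v$ of $H_k$. We take $H_k$ to be a graph having a $v$-path decomposition of width $k$; this is exactly the reason that notion was introduced. A natural choice is the $1$-sum of many copies of $K_{k+1}$, or of a perfect ternary tree structure, so that attaching one extra path at $v$ keeps the pathwidth at $k$, while attaching anything that branches, such as a second path, raises it to $k+1$.

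The construction proceeds as follows. Form $G' = G^{+} \oplus_{r, v} H_k$, where $G^{+}$ is $G$ together with a new root vertex $r$ adjacent to every vertex of $G$. Then set $E'_s = E(H_k) \cup E_s \cup \set{r u_s}$ and $E'_t = E(H_k) \cup E_t \cup \set{r u_t}$, where $u_s$ and $u_t$ are endpoints of the source and target paths. Equivalently, one can subdivide the connector using $E(uv,k,u)$-type edges so that these connector edges are interchangeable. Both $E'_s$ and $E'_t$ are connected and have pathwidth at most $k$, because $H_k$'s decomposition starts at $v$ and the pendant path adds only width $1 \le k$ bags at the front.

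For the forward direction, a path reconfiguration sequence in $G$ lifts directly. When the current path's attachment endpoint needs to change, we move the connector edge $r u$ to the other endpoint, or use an intermediate step in which the path hangs from $r$ by a different edge. For the backward direction, we must show that every valid configuration in $G'$ with $|E'_s|$ edges contains $E(H_k)$ and induces a single path on the $G$ side. The argument combines counting with structure. Removing any edge of $H_k$ leaves too few gadget edges, so a token must sit in $G$ or on the connector edges. Connectivity then forces all of these edges to form a subtree hanging from $r$. Pathwidth at most $k$, combined with the fact that $H_k$ is \emph{tight}, meaning it already uses width $k$ "on both sides" of $v$ in every optimal decomposition, forces this hanging subtree to be a path. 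Projecting to $G$ then gives a path reconfiguration sequence.

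The main obstacle is designing $H_k$ with this tightness property. Specifically, $H_k$ plus a pendant path at $v$ must have pathwidth $k$, while $H_k$ plus a pendant spider with three legs, or any tokens removed from $H_k$ and placed elsewhere, must exceed $k$ or disconnect the configuration. For the first property I would try $v$ as the root of a perfect ternary tree of depth $d$ in which each edge is replaced by a $K_{k+1}$-like block, appealing to the classical fact that ternary-branching structures raise pathwidth. Tokens "stolen" from $H_k$ must also be handled; to make $H_k$ robust to losing a few edges, I would repeat each gadget edge with multiplicity via $k$-subdivisions. If that fails, the fallback is to restrict the reduction to $k=1$ (caterpillars: a spine plus leaves) and lift to larger $k$ by induction, attaching a $v$-path-decomposable width-$k$ block.
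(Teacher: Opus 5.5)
\textbf{There is a genuine gap in the backward direction, and it rules out this route.} Nothing in your construction forces the $G$-side of a configuration to be a path.

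You choose $H_k$ with a $v$-path decomposition of width $k$, and you use exactly this to certify $E'_s$ and $E'_t$. The same concatenation argument shows more: $H_k \oplus_v X$ has pathwidth at most $k$ for \emph{any} connected $X$ hanging at $r=v$ that has a width-$\le k$ path decomposition whose last bag contains $r$. This covers caterpillars and stars, and for $k\ge 2$ also spiders and cycles through $r$. So the ``tightness'' you need (pendant path allowed, pendant spider forbidden) contradicts your own choice of $H_k$. For $k=1$ the property is literally ``is a caterpillar'', so legs can never be excluded.

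Concretely, your construction maps \emph{every} instance to a yes-instance. Suppose the $G$-side of $E'_s$ is $r x_1 x_2 \cdots x_{L+1}$. Jump $x_L x_{L+1}\to r x_{L+1}$, then $x_{L-1}x_L\to r x_L$, and so on down to $x_1x_2\to r x_2$. Each intermediate configuration is $H_k$ plus the path $r x_1\cdots x_j$ plus leaves $r x_{j+1},\dots,r x_{L+1}$. It has the width-$1$ decomposition $\{x_j,x_{j-1}\},\dots,\{x_2,x_1\},\{x_1,r\},\{r,x_{j+1}\},\dots,\{r,x_{L+1}\}$, which ends at $r$ and can be followed by the decomposition of $H_k$. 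You reach a star at $r$, jump its leaves onto the vertex set of the target path, and undo the analogous collapse of $E'_t$.

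The universal vertex is not the only culprit. Suppose $r$ were joined only to $u_s=u_t=d_2$, and let $G$ be the spider with centre $c$ and legs $ca_1a_2$, $cb_1b_2$, $cd_1d_2$. Take $E_s=\{a_2a_1,a_1c,cd_1,d_1d_2\}$ and $E_t=\{b_2b_1,b_1c,cd_1,d_1d_2\}$. This is a no-instance of \textsc{path reconfiguration}: the only $4$-edge paths are unions of two legs, and any two of them differ in two tokens. Yet in $G'$ the jumps $a_2a_1\to cb_1$ and then $ca_1\to b_1b_2$ are valid, because the intermediate $G$-side is a caterpillar. The same example defeats your fallback of reducing \textsc{path reconfiguration} to the case $k=1$.

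\textbf{What the paper does instead.} A pathwidth bound can at best force some region to be a caterpillar, never a path. So the paper reduces from \textsc{Hamiltonian $v$-path} rather than from \textsc{path reconfiguration}, and uses spiked graphs to convert ``caterpillar'' into ``Hamiltonian path''. The key tool is Lemma~\ref{lem:spike_hamiltonian}: a caterpillar that contains a path of length at least two ending at $v$ and uses a spike at every vertex of $G$ must contain a Hamiltonian $v$-path.

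For $k=1$, the paper uses a fork whose prongs are paths with $2n-1$ edges. Switching prongs requires parking $2n-1$ tokens in the $1$-spiked graph. That is exactly its acyclic capacity, namely a spanning tree plus all $n$ spikes.

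For $k\ge 2$, the paper hangs $g(k,n)$-spiked copies of $G$ on the tails of one branch of a $(k,f(k,n))$-claw, and takes $E_s,E_t$ to be almost-claws missing tails in different branches. A counting argument shows that before the first token reaches the missing target tail, a whole tail's worth of tokens has been parked in the spiked copies, which forces a spike at every vertex. Lemma~\ref{lem:pw_structural}(1), which says three branches of pathwidth at least $k$ raise the pathwidth, then forces at least one spiked region $R_i$ to have pathwidth $1$. At that point Lemma~\ref{lem:spike_hamiltonian} applies.
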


To do so, we first show how we can prove this for a simpler case, where the graphs are connected with pathwidth at most one (or equivalently, caterpillar graphs \cite{kinnersley_94}), and then give a proof overview on how to extend this to any fixed pathwidth $k$ of at least two. 

First, we define the \textsc{Hamiltonian $v$-path} problem as a decision problem, where given $(G,v)$ it asks whether there exists a Hamiltonian path in $G$ with one of its endpoints as $v$, which we write compactly as a \emph{Hamiltonian $v$-path}. Note that the \textsc{Hamiltonian $v$-path} problem is $\cNP$-hard, which follows from a simple reduction from the standard \textsc{Hamiltonian path} problem \cite[p.~199]{garey_79}. 
The high-level overview of this proof is then a reduction from the \textsc{Hamiltonian $v$-path} problem to the \textsc{caterpillar reconfiguration} problem. To show this, we first introduce some definitions and a technical lemma.

\begin{definition}
    A \emph{caterpillar} is a tree in which the removal of all the leaf vertices of the tree results in a path.
    We call the resulting path the \emph{spine} of the caterpillar, and we call the removed vertices the \emph{legs} of the caterpillar.
    Moreover, we call the legs that are adjacent to the endpoints of the spine and the endpoints of the \emph{spine} themselves collectively the \emph{ends} of the caterpillar.
\end{definition}

\begin{definition}\label{def:spiked_graph}
    Let $G = (V, E)$ be a graph where $V = \set{v_1, \ldots, v_n}$. 
    We define the \emph{$k$-spiked graph of $G$} as $S$ where
    \[V(S) = V \cup \bigcup_{i=1}^{n} \set{s_{i,j}: 1 \leq j \leq k} \text{ and } E(S) = E \cup \bigcup_{i=1}^{n} \set{v_i s_{i,j}: 1 \leq j \leq k}.\]
    This is the graph $G$ with $k$ new vertices and edges for each existing vertex.
    We call each $v_i s_{i,j}$ a \emph{spike} of $G$ and an edge a \emph{$v$-spike} if $v$ is one of its endpoints.
\end{definition}

\begin{restatable}{lemma}{spikedhamiltonian}\label{lem:spike_hamiltonian}
    $(\ast)$ Let $P$ be a path of length at least two, and let $p_1, p_2$ be the endpoints of $P$.
    Let $S$ be the $k$-spiked graph of a graph $G$ and let $v$ be a specified vertex of $G$.
    Consider the graph $P \oplus_{p_2, v} S$ and let $H$ be a subgraph of this graph.
    
    If $H$ is a caterpillar that contains all of $P$ and uses at least one $s$-spike for all $s \in V(G)$, then $G$ contains a Hamiltonian path starting at $v$.
\end{restatable}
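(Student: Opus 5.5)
The plan is to show that the spine of $H$, restricted to $G$, is a Hamiltonian path of $G$ starting at $v$. Since $H$ is a caterpillar, every vertex of $H$ is a spine vertex or a leg. A leg is a leaf whose unique neighbour lies on the spine. So I will show that every $s \in V(G)$ lies on the spine, and that the spine meets $V(G)$ in a subpath starting at $v$.

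\textbf{Step 1: every $s \in V(G)$ is a spine vertex.} For $s \neq v$, $H$ contains an $s$-spike $s\,s_{s,j}$. If $s$ were a leg, it would be a leaf of $H$, so this spike would be its only edge and $s_{s,j}$ would be its spine neighbour. But $s_{s,j}$ has degree one in the whole graph $P \oplus_{p_2,v} S$. Then $H$ would be just the single edge $s\,s_{s,j}$, contradicting that $H$ contains $P$ and $P$ has length at least two. For $s = v$, the vertex $v = p_2$ has its $P$-edge and a $v$-spike in $H$, so it has degree at least two and is not a leaf. Hence all of $V(G)$ lies on the spine.

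\textbf{Step 2: the spine contains the interior of $P$.} Every internal vertex of $P$ has degree two in $H$, because $P \subseteq H$ and those vertices have no other edges in the ambient graph. So these vertices are not leaves and lie on the spine. Also, each spike vertex $s_{i,j}$ is a leaf of $H$ whenever present, so it is a leg. Therefore the spine uses only vertices of $P$ and of $V(G)$.

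\textbf{Step 3: read off the path.} The spine is a path containing $p_2 = v$, all of $V(G)$, and, because $P$ has length at least two, at least one internal vertex of $P$, which I call $p$. Every vertex of $V(P)\setminus\{v\}$ is separated from $V(G)\setminus\{v\}$ by the cut vertex $v$. Since the spine is a path inside a tree, its portion on the $P$ side of $v$ and its portion on the $G$ side of $v$ lie on opposite sides of $v$ along the spine. Thus $v$ splits the spine into a piece inside $P$, which contains $p$ and is nonempty, and a piece inside $G$. The $G$-side piece contains all of $V(G)\setminus\{v\}$, uses only edges of $G$, and, together with $v$, forms a subpath of the spine ending at $v$. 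This is a Hamiltonian path of $G$ with endpoint $v$.

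The main obstacle is Step 3: making sure $v$ is an end of the $G$-portion and not an interior vertex of it. This is exactly why $P$ must have length at least two. It forces a spine vertex $p$ on the far side of $v$, so $v$ has a spine neighbour in $P$ and at most one spine neighbour in $G$. I will also check the degenerate case $|V(G)|=1$ separately; it holds trivially.
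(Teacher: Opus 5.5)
Your proof is correct and uses the same idea as the paper. The spikes force every vertex of $G$ to be a non-leaf, hence a spine vertex, and the path $P$ of length at least two puts a spine vertex on the far side of $v$, so the $G$-portion of the spine is a Hamiltonian path ending at $v$. The paper phrases this as a contradiction at the first branching vertex when walking from $p_1$, whereas you read the path directly off the spine via the cut vertex $v$; this also avoids the paper's unstated assumption that the neighbours at a branching vertex are vertices of $G$ rather than spike leaves.
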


To show a reduction from the \textsc{Hamiltonian $v$-path problem}, we need to construct an instance $(G_{cater}, E_s, E_t)$ on which it is hard to reconfigure. Let $(G,v)$ be an instance of the \textsc{Hamiltonian $v$-path problem}.
Let $S$ be a $k$-spiked graph of $G$ (Definition \ref{def:spiked_graph}). We construct the input graph $G_{cater}$ by gluing $S$ with a ``fork''-like structure with two branches at the vertex $v$.
The source configuration $E_s$ will have all tokens on one branch, and the target configuration $E_t$ will have all tokens on the other branch. 
One cannot simply move tokens from one branch to another due to connectivity constraints and the ``fork''-like structure, which would make the token-induced subgraph not be a caterpillar. 
This forces the tokens to use edges in the graph $S$, and by Lemma \ref{lem:spike_hamiltonian}, we must use edges in a way that forms a Hamiltonian $v$-path, thus concluding our proof. 
We formalize the construction of $G_{cater}$ with the definition below.

\begin{definition}\label{def:aux-cater}
    Let $G = (V, E)$ be a graph where $V = \{v_1, \ldots, v_n\}$.
    Let $v$ be a vertex of $G$.
    We define the reduction graph $G_{cater}(v)$ as follows:
    \begin{enumerate}
        \item Let $P_1 = p_1^1 \ldots p_{1}^{2n}$ be a path of length $2n - 1$.
        \item Let $P_2 = p_2^1 \ldots p_{2}^{2n}$ be a path of length $2n - 1$.
        \item Let $Y$ be a claw (a star with $3$ edges) with vertices $\{a, b, c, r\}$, where $r$ is the center vertex, with an additional vertex $d$ and edge $c d$.
        \item Let $S$ be the $1$-spiked graph of $G$.
    \end{enumerate}
    Then $G_{cater}(v)$ is the graph formed by connecting $S, P_1, P_2$ by adding the edges $p_{1}^{2n} a$, $p_{2}^{2n} b$, and $d v$.
\end{definition}

We are now ready to prove the main result of this section.

\begin{theorem}\label{thm:caterpillar}
 The \textsc{caterpillar reconfiguration} problem is $\cNP$-hard.
\end{theorem}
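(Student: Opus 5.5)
The plan is a polynomial-time reduction from \textsc{Hamiltonian $v$-path}, which is \cNP-hard as noted above. Given an instance $(G,v)$ with $n=\abs{V(G)}$, build $G_{cater}(v)$ as in Definition~\ref{def:aux-cater}. Take $E_s$ to be a caterpillar that contains the whole branch $P_1$ together with $p_1^{2n}a$ and the part of $Y$ at $r$. Take $E_t$ to be the mirror image, using $P_2$, $p_2^{2n}b$ and $Y$. Both have the same size and both are caterpillars. The construction is clearly polynomial, and what remains is to prove that $(G_{cater}(v),E_s,E_t)$ is reconfigurable if and only if $G$ has a Hamiltonian $v$-path.

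\emph{Forward direction.} Let $v=u_1u_2\cdots u_n$ be a Hamiltonian $v$-path. Move the tokens like a snake: repeatedly take the token from the far end of $P_1$ and place it at the head. The head travels along $r\,c\,d\,v$, then along $u_1u_2\cdots u_n$, and spikes $u_is_{i,1}$ are added as legs whenever a token has nowhere further to go. The path has only $n-1$ edges, so the spikes absorb the excess, giving up to $2n-1$ edges in $S$, which is enough to pull the whole tail past $a$. Every intermediate configuration is a spine plus single-edge legs, hence a caterpillar. Once the tail has cleared $a$, grow the tail end into $b$ and then along $P_2$. Fill it by removing tokens from the head in reverse order of insertion (first spikes, then path edges), which again keeps a caterpillar at every step. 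The snake ends exactly on $E_t$.

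\emph{Backward direction.} This is the main obstacle. I must show that every reconfiguration sequence passes through a configuration $H$ that contains $r\,c\,d\,v$ (a path of length at least two ending at $v$) and at least one spike at every vertex of $G$. Lemma~\ref{lem:spike_hamiltonian}, applied with that path as $P$, then yields the Hamiltonian $v$-path. The intended argument has three steps.

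\begin{enumerate}
\item The branches are long, with $2n-1$ edges each, and the tokens form a connected tree. So the tokens cannot sit on both branches simultaneously in large numbers while the extra structure at $r$ (the pendant $c$--$d$ path) is occupied: three branches of length at least two at $r$ would form a spider that is not a caterpillar.
\item Hence, at the first moment the $P_2$ side receives substantially many tokens, almost all tokens must have left $P_1$ and be parked beyond $d$, inside $S$.
\item The edges of $S$ usable in a caterpillar hanging off $v$ are spine edges of $G$ plus single spikes. Counting $2n-O(1)$ parked tokens against $n-1$ possible spine edges plus at most one spike per vertex forces a spike at every vertex of $G$.
\end{enumerate}

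I expect the delicate part to be choosing $E_s$, $E_t$ and the token count so that step 1 really holds. In particular, the direct route $P_1\to a\to r\to b\to P_2$ is itself a path, so the claw and the edge $cd$ must be occupied in a way that blocks it. If this fails, I would try lengthening the pendant at $c$ or adding extra tokens on $Y$ until every crossing at $r$ provably requires emptying $P_1$ into $S$.
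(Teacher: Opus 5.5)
Your overall plan (reduce from \textsc{Hamiltonian $v$-path} on $G_{cater}(v)$, move the tokens like a snake in the forward direction, use Lemma~\ref{lem:spike_hamiltonian} in the backward direction) is the paper's. However, the proposal leaves open exactly the step that carries the proof, and the choices you sketch would not close it. As stated, $E_s$ is underdetermined. The natural reading (the tail plus the claw edges at $r$, with the head later travelling along $r\,c\,d\,v$) makes every instance reconfigurable: jump the $rc$ token to $bp_2^{2n}$, then slide the resulting path $P_1\,a\,r\,b\,p_2^{2n}$ into $P_2$ without ever touching $S$. Adding $cd$ but not $dv$ still fails, since the $cd$ token can jump to $bp_2^{2n}$, leaving $c$ as a single leg at $r$. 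The paper takes $E_s=E(P_1)\cup E(Y)\cup\{p_1^{2n}a,dv\}$ and its mirror image. That gives $2n+5$ tokens, exactly one more than the $2n+4$ edges of $E(P_1)\cup\{p_1^{2n}a,ar,rb,rc,bp_2^{2n}\}$. This one unit of excess is what proves your step~1. Look at the \emph{first} move onto $bp_2^{2n}$. At that moment $P_2$ is still empty, since it hangs off $b$ only through $bp_2^{2n}$. So if $rc$ or $cd$ carried no token, all $2n+5$ tokens would have to lie in those $2n+4$ edges, which is impossible. Hence $r\,c\,d$ is occupied. A caterpillar cannot have three legs of length two at $r$, so one of $ra$, $ap_1^{2n}$ is empty, which forces $P_1$ to be completely empty.

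Your step~3 also cannot tolerate slack. A forest in the $1$-spiked graph has at most $(n-1)+n=2n-1$ edges, and Lemma~\ref{lem:spike_hamiltonian} needs a spike at \emph{every} vertex. With $2n-O(1)$ parked tokens the argument breaks already for a deficit of one. Take $G=K_{1,3}$ with centre $x$ and leaves $v,y,z$. The tokens $vx,xy,xz$ plus the spikes at $v,x,y$, together with $c\,d\,v$, form a caterpillar with spine $d\,v\,x\,y$, yet $G$ has no Hamiltonian $v$-path. So ``substantially many'' and ``almost all'' must become exact statements. At that first crossing, tokens outside $S$ can only sit on $ra,rb,rc,cd,dv,bp_2^{2n}$. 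Therefore at least, and hence exactly, $2n-1$ tokens lie in $S$: a spanning tree of $G$ plus all $n$ spikes. The lemma then applies with $P=c\,d\,v$. (Your fallback of lengthening the pendant at $c$ would also work, but only with this same exact count. In the pathwidth-$k$ generalisation the paper instead gets slack by putting $g(k,n)$ spikes on each vertex.)
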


\begin{proof}
    We will show this by a reduction from the \textsc{Hamiltonian $v$-path} problem.
    Let $(G, v)$ be an instance of the \textsc{Hamiltonian $v$-path} problem.
    Using $G$, we create the reduction graph $G_{cater}(v)$ from Definition \ref{def:aux-cater}.
    Let $E_s = E(P_1) \cup E(Y) \cup \{p_{1}^{2n} a, d v\}$ and $E_t = E(P_2) \cup E(Y) \cup \{p_{2}^{2n} b, d v\}$.
    Then we want to show that $G$ has a Hamiltonian path starting at $v$ if and only if the instance $(G_{cater}(v), E_s, E_t)$ is reconfigurable. For an illustrated example, see Figure \ref{fig:source_config} for the source configuration $E_s$ and Figure \ref{fig:target_config} for the target configuration $E_t$.
    
    Suppose that $G$ has a Hamiltonian path starting at $v$.
    Let this Hamiltonian path be $e_1 \ldots e_{n - 1}$, where $e_1$ is the edge starting at $v$.
    Note that the number of edges on this path is $n - 1$, and that the number of spikes in $G_{cater}(v)$ is $n$.
    Then one possible reconfiguration sequence from $E_s$ to $E_t$ in $G_{cater}(v)$ can be formed by performing the following:
    \begin{enumerate}
        \item First, move (token jump) $p_1^1 p_1^2$ to $e_1$, $p_1^2 p_1^3$ to $e_2$, $\ldots$, and $p_1^{n - 1} p_1^n$ to $e_{n - 1}$.
        \item Then, move $p_1^n p_1^{n + 1}$ to $v_1 s_1$, $p_1^{n + 1} p_1^{n + 2}$ to $v_2 s_2$, $\ldots$, $p_1^{2n - 1} p_1^{2n}$ to $v_{n} s_{n}$.
        \item Then, move $p_1^{2n}a$ to $p_1^{2n}b$.
        \item Finally, reverse all steps in $(2)$ and $(1)$, but now for the path $p_2^1 \ldots p_2^{2n}$.
    \end{enumerate}
                
    Note that each step in the reconfiguration sequence from (1) to (3) satisfies the caterpillar property because at each step, the removal of all the leaves in the subgraph formed by the tokens results in a path (we first move the spine, and then convert spine edges to legs).
    Since (4) just reverses all the steps in (2) and (1), each step also satisfies the caterpillar property.
    Hence, there exists a reconfiguration sequence from $E_s$ to $E_t$, so $(G_{cater}(v), E_s, E_t)$ is reconfigurable. 

    \begin{figure}[t]
\centering

\makebox[0.9\textwidth][c]{%

\begin{subfigure}[c]{0.28\textwidth}
\resizebox{\linewidth}{!}{%
\begin{tikzpicture}
    \tikzset{
        source/.style = {color=violet, prefix after command= {\pgfextra{\tikzset{every label/.style={violet}}}}},
    }
    \node[vertex, source] (r) at (0, 0) {$r$};
    \node[vertex, source] (a) at (-0.7, 0.7) {$a$};
    \node[vertex, source] (b) at (-0.7, -0.7) {$b$};
    \node[vertex, source] (c) at (1, 0) {$c$};
    \node[vertex, source] (d) at (2, 0) {$d$};
    
    \draw[edge, source] (r) -- (a);
    \draw[edge, source] (r) -- (b);
    \draw[edge, source] (r) -- (c);
    \draw[edge, source] (c) -- (d);
    
    \node[vertex, label={above:\small$p_1^1$}, source] (pt1) at (-3.4, 1.4) {};
    \node[source] (pt2) at (-2.4, 1.4) {$\cdots$};
    \node[vertex, label={above:\small$p_{2n}^1$}, source] (pt3) at (-1.4, 1.4) {};
    
    \draw[edge, source] (pt1) -- (pt2);
    \draw[edge, source] (pt2) -- (pt3);
    
    \node[vertex, label={below:\small$p_1^2$}] (pb1) at (-3.4, -1.4) {};
    \node[] (pb2) at (-2.4, -1.4) {$\cdots$};
    \node[vertex, label={below:\small$p_{2n}^2$}] (pb3) at (-1.4, -1.4) {};
    
    \draw[edge] (pb1) -- (pb2);
    \draw[edge] (pb2) -- (pb3);
    
    \node[vertex, source, fill=white] (v) at (3, 0) {$v$};
    \begin{pgfonlayer}{bg}
        \draw[] (4,0) circle (1);
    \end{pgfonlayer}
    \node[] at (4, 0) {$S$};
    
    \node[bvertex] (p1) at (3.4, 0.5) {};
    \node[bvertex] (p2) at (4, 0.8) {};
    \node[bvertex] (p3) at (5, 0) {};
    \node[bvertex] (p4) at (4.5, -0.7) {};
    \node[bvertex] (p5) at (3.5, -0.5) {};
    
    \node[bvertex] (sv) at (3, 1) {};
    \node[bvertex] (s1) at (3.4, 1.5) {};
    \node[bvertex] (s2) at (4, 1.8) {};
    \node[bvertex] (s3) at (5, -1) {};
    \node[bvertex] (s4) at (4.5, -1.7) {};
    \node[bvertex] (s5) at (3.5, -1.5) {};
    
    \draw[edge] (v) -- (p1);
    \draw[edge] (p1) -- (p2);
    \draw[edge] (p2) -- (p3);
    \draw[edge] (p3) -- (p4);
    \draw[edge] (p4) -- (p5);
    
    \draw[edge] (v) -- (sv);
    \draw[edge] (p1) -- (s1);
    \draw[edge] (p2) -- (s2);
    \draw[edge] (p3) -- (s3);
    \draw[edge] (p4) -- (s4);
    \draw[edge] (p5) -- (s5);
    
    \draw[edge, source] (pt3) -- (a);
    \draw[edge] (pb3) -- (b);
    \draw[edge, source] (d) -- (v);
\end{tikzpicture}
}
\caption{Initial}\label{fig:source_config}
\end{subfigure}

\hfill

\begin{subfigure}[c]{0.28\textwidth}
\resizebox{\linewidth}{!}{%
\begin{tikzpicture}[scale=0.7]
    \tikzset{
        source/.style = {color=violet, prefix after command= {\pgfextra{\tikzset{every label/.style={violet}}}}},
    }
    \node[vertex, source] (r) at (0, 0) {$r$};
    \node[vertex, source] (a) at (-0.7, 0.7) {$a$};
    \node[vertex, source] (b) at (-0.7, -0.7) {$b$};
    \node[vertex, source] (c) at (1, 0) {$c$};
    \node[vertex, source] (d) at (2, 0) {$d$};
    
    \draw[edge, source] (r) -- (a);
    \draw[edge, source] (r) -- (b);
    \draw[edge, source] (r) -- (c);
    \draw[edge, source] (c) -- (d);
    
    \node[vertex, label={above:\small$p_1^1$}] (pt1) at (-3.4, 1.4) {};
    \node[] (pt2) at (-2.4, 1.4) {$\cdots\textcolor{violet}{\cdots}$};
    \node[vertex, label={above:\small$p_{2n}^1$}, source] (pt3) at (-1.4, 1.4) {};
    
    \draw[edge] (pt1) -- (pt2);
    \draw[edge, source] (pt2) -- (pt3);
    
    \node[vertex, label={below:\small$p_1^2$}] (pb1) at (-3.4, -1.4) {};
    \node[] (pb2) at (-2.4, -1.4) {$\cdots$};
    \node[vertex, label={below:\small$p_{2n}^2$}] (pb3) at (-1.4, -1.4) {};
    
    \draw[edge] (pb1) -- (pb2);
    \draw[edge] (pb2) -- (pb3);
    
    \node[vertex, source, fill=white] (v) at (3, 0) {$v$};
    \begin{pgfonlayer}{bg}
        \draw[] (4,0) circle (1);
    \end{pgfonlayer}
    \node[] at (4, 0) {$S$};
    
    \node[bvertex, source] (p1) at (3.4, 0.5) {};
    \node[bvertex, source] (p2) at (4, 0.8) {};
    \node[bvertex, source] (p3) at (5, 0) {};
    \node[bvertex, source] (p4) at (4.5, -0.7) {};
    \node[bvertex, source] (p5) at (3.5, -0.5) {};
    
    \node[bvertex] (sv) at (3, 1) {};
    \node[bvertex] (s1) at (3.4, 1.5) {};
    \node[bvertex] (s2) at (4, 1.8) {};
    \node[bvertex] (s3) at (5, -1) {};
    \node[bvertex] (s4) at (4.5, -1.7) {};
    \node[bvertex] (s5) at (3.5, -1.5) {};
    
    \draw[edge, source] (v) -- (p1);
    \draw[edge, source] (p1) -- (p2);
    \draw[edge, source] (p2) -- (p3);
    \draw[edge, source] (p3) -- (p4);
    \draw[edge, source] (p4) -- (p5);
    
    \draw[edge] (v) -- (sv);
    \draw[edge] (p1) -- (s1);
    \draw[edge] (p2) -- (s2);
    \draw[edge] (p3) -- (s3);
    \draw[edge] (p4) -- (s4);
    \draw[edge] (p5) -- (s5);
    
    \draw[edge, source] (pt3) -- (a);
    \draw[edge] (pb3) -- (b);
    \draw[edge, source] (d) -- (v);
\end{tikzpicture}
}
\caption{Step $1$}
\end{subfigure}

\hfill

\begin{subfigure}[c]{0.28\textwidth}
\resizebox{\linewidth}{!}{%
\begin{tikzpicture}
    \tikzset{
        source/.style = {color=violet, prefix after command= {\pgfextra{\tikzset{every label/.style={violet}}}}},
    }
    \node[vertex, source] (r) at (0, 0) {$r$};
    \node[vertex, source] (a) at (-0.7, 0.7) {$a$};
    \node[vertex, source] (b) at (-0.7, -0.7) {$b$};
    \node[vertex, source] (c) at (1, 0) {$c$};
    \node[vertex, source] (d) at (2, 0) {$d$};
    
    \draw[edge, source] (r) -- (a);
    \draw[edge, source] (r) -- (b);
    \draw[edge, source] (r) -- (c);
    \draw[edge, source] (c) -- (d);
    
    \node[vertex, label={above:\small$p_1^1$}] (pt1) at (-3.4, 1.4) {};
    \node[] (pt2) at (-2.4, 1.4) {$\cdots$};
    \node[vertex, source, label={above:\small$p_{2n}^1$}] (pt3) at (-1.4, 1.4) {};
    
    \draw[edge] (pt1) -- (pt2);
    \draw[edge] (pt2) -- (pt3);
    
    \node[vertex, label={below:\small$p_1^2$}] (pb1) at (-3.4, -1.4) {};
    \node[] (pb2) at (-2.4, -1.4) {$\cdots$};
    \node[vertex, label={below:\small$p_{2n}^2$}] (pb3) at (-1.4, -1.4) {};
    
    \draw[edge] (pb1) -- (pb2);
    \draw[edge] (pb2) -- (pb3);
    
    \node[vertex, source, fill=white] (v) at (3, 0) {$v$};
    \begin{pgfonlayer}{bg}
        \draw[] (4,0) circle (1);
    \end{pgfonlayer}
    \node[] at (4, 0) {$S$};
    
    \node[bvertex, source] (p1) at (3.4, 0.5) {};
    \node[bvertex, source] (p2) at (4, 0.8) {};
    \node[bvertex, source] (p3) at (5, 0) {};
    \node[bvertex, source] (p4) at (4.5, -0.7) {};
    \node[bvertex, source] (p5) at (3.5, -0.5) {};
    
    \node[bvertex, source] (sv) at (3, 1) {};
    \node[bvertex, source] (s1) at (3.4, 1.5) {};
    \node[bvertex, source] (s2) at (4, 1.8) {};
    \node[bvertex, source] (s3) at (5, -1) {};
    \node[bvertex, source] (s4) at (4.5, -1.7) {};
    \node[bvertex, source] (s5) at (3.5, -1.5) {};
    
    \draw[edge, source] (v) -- (p1);
    \draw[edge, source] (p1) -- (p2);
    \draw[edge, source] (p2) -- (p3);
    \draw[edge, source] (p3) -- (p4);
    \draw[edge, source] (p4) -- (p5);
    
    \draw[edge, source] (v) -- (sv);
    \draw[edge, source] (p1) -- (s1);
    \draw[edge, source] (p2) -- (s2);
    \draw[edge, source] (p3) -- (s3);
    \draw[edge, source] (p4) -- (s4);
    \draw[edge, source] (p5) -- (s5);
    
    \draw[edge, source] (pt3) -- (a);
    \draw[edge] (pb3) -- (b);
    \draw[edge, source] (d) -- (v);
\end{tikzpicture}
}
\caption{Step $2$}
\end{subfigure}

}

\makebox[0.9\textwidth][c]{%

\begin{subfigure}[c]{0.28\textwidth}
\resizebox{\linewidth}{!}{%
\begin{tikzpicture}[scale=0.7]
    \tikzset{
        source/.style = {color=violet, prefix after command= {\pgfextra{\tikzset{every label/.style={violet}}}}},
    }
    \node[vertex, source] (r) at (0, 0) {$r$};
    \node[vertex, source] (a) at (-0.7, 0.7) {$a$};
    \node[vertex, source] (b) at (-0.7, -0.7) {$b$};
    \node[vertex, source] (c) at (1, 0) {$c$};
    \node[vertex, source] (d) at (2, 0) {$d$};
    
    \draw[edge, source] (r) -- (a);
    \draw[edge, source] (r) -- (b);
    \draw[edge, source] (r) -- (c);
    \draw[edge, source] (c) -- (d);
    
    \node[vertex, label={above:\small$p_1^1$}] (pt1) at (-3.4, 1.4) {};
    \node[] (pt2) at (-2.4, 1.4) {$\cdots$};
    \node[vertex, label={above:\small$p_{2n}^1$}] (pt3) at (-1.4, 1.4) {};
    
    \draw[edge] (pt1) -- (pt2);
    \draw[edge] (pt2) -- (pt3);
    
    \node[vertex, label={below:\small$p_1^2$}] (pb1) at (-3.4, -1.4) {};
    \node[] (pb2) at (-2.4, -1.4) {$\cdots$};
    \node[vertex, source, label={below:\small$p_{2n}^2$}] (pb3) at (-1.4, -1.4) {};
    
    \draw[edge] (pb1) -- (pb2);
    \draw[edge] (pb2) -- (pb3);
    
    \node[vertex, source, fill=white] (v) at (3, 0) {$v$};
    \begin{pgfonlayer}{bg}
        \draw[] (4,0) circle (1);
    \end{pgfonlayer}
    \node[] at (4, 0) {$S$};
    
    \node[bvertex, source] (p1) at (3.4, 0.5) {};
    \node[bvertex, source] (p2) at (4, 0.8) {};
    \node[bvertex, source] (p3) at (5, 0) {};
    \node[bvertex, source] (p4) at (4.5, -0.7) {};
    \node[bvertex, source] (p5) at (3.5, -0.5) {};
    
    \node[bvertex, source] (sv) at (3, 1) {};
    \node[bvertex, source] (s1) at (3.4, 1.5) {};
    \node[bvertex, source] (s2) at (4, 1.8) {};
    \node[bvertex, source] (s3) at (5, -1) {};
    \node[bvertex, source] (s4) at (4.5, -1.7) {};
    \node[bvertex, source] (s5) at (3.5, -1.5) {};
    
    \draw[edge, source] (v) -- (p1);
    \draw[edge, source] (p1) -- (p2);
    \draw[edge, source] (p2) -- (p3);
    \draw[edge, source] (p3) -- (p4);
    \draw[edge, source] (p4) -- (p5);
    
    \draw[edge, source] (v) -- (sv);
    \draw[edge, source] (p1) -- (s1);
    \draw[edge, source] (p2) -- (s2);
    \draw[edge, source] (p3) -- (s3);
    \draw[edge, source] (p4) -- (s4);
    \draw[edge, source] (p5) -- (s5);
    
    \draw[edge] (pt3) -- (a);
    \draw[edge, source] (pb3) -- (b);
    \draw[edge, source] (d) -- (v);
\end{tikzpicture}
}
\caption{Step $3$}
\end{subfigure}

\hfill

\begin{subfigure}[c]{0.28\textwidth}
\resizebox{\linewidth}{!}{%
\begin{tikzpicture}[scale=0.7]
    \tikzset{
        source/.style = {color=violet, prefix after command= {\pgfextra{\tikzset{every label/.style={violet}}}}},
    }
    \node[vertex, source] (r) at (0, 0) {$r$};
    \node[vertex, source] (a) at (-0.7, 0.7) {$a$};
    \node[vertex, source] (b) at (-0.7, -0.7) {$b$};
    \node[vertex, source] (c) at (1, 0) {$c$};
    \node[vertex, source] (d) at (2, 0) {$d$};
    
    \draw[edge, source] (r) -- (a);
    \draw[edge, source] (r) -- (b);
    \draw[edge, source] (r) -- (c);
    \draw[edge, source] (c) -- (d);
    
    \node[vertex, label={above:\small$p_1^1$}] (pt1) at (-3.4, 1.4) {};
    \node[] (pt2) at (-2.4, 1.4) {$\cdots$};
    \node[vertex, label={above:\small$p_{2n}^1$}] (pt3) at (-1.4, 1.4) {};
    
    \draw[edge] (pt1) -- (pt2);
    \draw[edge] (pt2) -- (pt3);
    
    \node[vertex, label={below:\small$p_1^2$}] (pb1) at (-3.4, -1.4) {};
    \node[] (pb2) at (-2.4, -1.4) {$\cdots\textcolor{violet}{\cdots}$};
    \node[vertex, source, label={below:\small$p_{2n}^2$}] (pb3) at (-1.4, -1.4) {};
    
    \draw[edge] (pb1) -- (pb2);
    \draw[edge, source] (pb2) -- (pb3);
    
    \node[vertex, source, fill=white] (v) at (3, 0) {$v$};
    \begin{pgfonlayer}{bg}
        \draw[] (4,0) circle (1);
    \end{pgfonlayer}
    \node[] at (4, 0) {$S$};
    
    \node[bvertex, source] (p1) at (3.4, 0.5) {};
    \node[bvertex, source] (p2) at (4, 0.8) {};
    \node[bvertex, source] (p3) at (5, 0) {};
    \node[bvertex, source] (p4) at (4.5, -0.7) {};
    \node[bvertex, source] (p5) at (3.5, -0.5) {};
    
    \node[bvertex] (sv) at (3, 1) {};
    \node[bvertex] (s1) at (3.4, 1.5) {};
    \node[bvertex] (s2) at (4, 1.8) {};
    \node[bvertex] (s3) at (5, -1) {};
    \node[bvertex] (s4) at (4.5, -1.7) {};
    \node[bvertex] (s5) at (3.5, -1.5) {};
    
    \draw[edge, source] (v) -- (p1);
    \draw[edge, source] (p1) -- (p2);
    \draw[edge, source] (p2) -- (p3);
    \draw[edge, source] (p3) -- (p4);
    \draw[edge, source] (p4) -- (p5);
    
    \draw[edge] (v) -- (sv);
    \draw[edge] (p1) -- (s1);
    \draw[edge] (p2) -- (s2);
    \draw[edge] (p3) -- (s3);
    \draw[edge] (p4) -- (s4);
    \draw[edge] (p5) -- (s5);
    
    \draw[edge] (pt3) -- (a);
    \draw[edge, source] (pb3) -- (b);
    \draw[edge, source] (d) -- (v);
\end{tikzpicture}
}
\caption{Step $4$ (reversing $2$)}
\end{subfigure}

\hfill

\begin{subfigure}[c]{0.28\textwidth}
\resizebox{\linewidth}{!}{%
\begin{tikzpicture}[scale=0.7]
    \tikzset{
        source/.style = {color=violet, prefix after command= {\pgfextra{\tikzset{every label/.style={violet}}}}},
    }
    \node[vertex, source] (r) at (0, 0) {$r$};
    \node[vertex, source] (a) at (-0.7, 0.7) {$a$};
    \node[vertex, source] (b) at (-0.7, -0.7) {$b$};
    \node[vertex, source] (c) at (1, 0) {$c$};
    \node[vertex, source] (d) at (2, 0) {$d$};
    
    \draw[edge, source] (r) -- (a);
    \draw[edge, source] (r) -- (b);
    \draw[edge, source] (r) -- (c);
    \draw[edge, source] (c) -- (d);
    
    \node[vertex, label={above:\small$p_1^1$}] (pt1) at (-3.4, 1.4) {};
    \node[] (pt2) at (-2.4, 1.4) {$\cdots$};
    \node[vertex, label={above:\small$p_{2n}^1$}] (pt3) at (-1.4, 1.4) {};
    
    \draw[edge] (pt1) -- (pt2);
    \draw[edge] (pt2) -- (pt3);
    
    \node[vertex, label={below:\small$p_1^2$}, source] (pb1) at (-3.4, -1.4) {};
    \node[source] (pb2) at (-2.4, -1.4) {$\cdots$};
    \node[vertex, label={below:\small$p_{2n}^2$}, source] (pb3) at (-1.4, -1.4) {};
    
    \draw[edge, source] (pb1) -- (pb2);
    \draw[edge, source] (pb2) -- (pb3);
    
    \node[vertex, source, fill=white] (v) at (3, 0) {$v$};
    \begin{pgfonlayer}{bg}
        \draw[] (4,0) circle (1);
    \end{pgfonlayer}
    \node[] at (4, 0) {$S$};
    
    \node[bvertex] (p1) at (3.4, 0.5) {};
    \node[bvertex] (p2) at (4, 0.8) {};
    \node[bvertex] (p3) at (5, 0) {};
    \node[bvertex] (p4) at (4.5, -0.7) {};
    \node[bvertex] (p5) at (3.5, -0.5) {};
    
    \node[bvertex] (sv) at (3, 1) {};
    \node[bvertex] (s1) at (3.4, 1.5) {};
    \node[bvertex] (s2) at (4, 1.8) {};
    \node[bvertex] (s3) at (5, -1) {};
    \node[bvertex] (s4) at (4.5, -1.7) {};
    \node[bvertex] (s5) at (3.5, -1.5) {};
    
    \draw[edge] (v) -- (p1);
    \draw[edge] (p1) -- (p2);
    \draw[edge] (p2) -- (p3);
    \draw[edge] (p3) -- (p4);
    \draw[edge] (p4) -- (p5);
    
    \draw[edge] (v) -- (sv);
    \draw[edge] (p1) -- (s1);
    \draw[edge] (p2) -- (s2);
    \draw[edge] (p3) -- (s3);
    \draw[edge] (p4) -- (s4);
    \draw[edge] (p5) -- (s5);
    
    \draw[edge] (pt3) -- (a);
    \draw[edge, source] (pb3) -- (b);
    \draw[edge, source] (d) -- (v);
\end{tikzpicture}
}
\caption{Step $4$ (reversing $1$)}\label{fig:target_config}
\end{subfigure}

}

\caption{The reconfiguration sequence for $(G'_{cater}(v), E_s, E_t)$ in Theorem \ref{thm:caterpillar}.}
\label{fig:caterpillar_reconf_hamiltonian}
\vspace{-0.5cm}
\end{figure}

    Now on the other hand, suppose that $(G_{cater}(v), E_s, E_t)$ is reconfigurable.
    This means that there exists a step in the reconfiguration sequence in which a token is moved to $p_2^{2n} b$, as this edge is part of the target solution.
    Consider the placement of tokens at this step.
    
    Note that until (at least) the edges $p_1^1 p_1^2, \ldots, p_1^{2n - 1} p_1^{2n}$ are moved, we cannot move any token to $p_2^{2n} b$, as otherwise the subgraph induced by the edges would not satisfy the caterpillar property. This is because either the token-induced graph becomes disconnected or has pathwidth at least two. 
    This is a total of $2n - 1$ tokens that need to be moved, and the only place for these tokens to move to is in $S$, as the subgraph at each step must be connected.

    Also, note that a caterpillar does not have cycles, as it is a tree.
    So the maximum number of tokens that can be placed in $S$ while maintaining caterpillar is $(n - 1) + n = 2n - 1$, which is when the tokens in $G$ form a spanning tree of $G$, since a spanning tree of a graph with $n$ vertices has $n - 1$ edges, and tokens are on all of the spikes of $S$ (there are $n$ spikes).
    This means that the $2n - 1$ tokens must have been all moved to $S$ during the reconfiguration.
    Now, restrict our view of the subgraph formed by the tokens to just those in $S$ along with the edges $c d, d v$ and call this $H$.
    As $cdv$ is a path of length $2$, $S$ is the $1$-spiked graph of $G$, and $H$ is a caterpillar, by Lemma \ref{lem:spike_hamiltonian}, we have that $G$ contains a Hamiltonian path starting at $v$.
    This reduction takes polynomial time since our construction depends linearly on the input size.
\end{proof}

Now, we will discuss how to extend the above result from pathwidth one into any pathwidth $k \geq 2$. More formally, we want to show the following theorem.

\pathwidthbound* \label{thm:pathwidth_k_proof_overview}

The proof structure remains similar. First, we start with an instance of the Hamiltonian $v$-path problem $(G,v)$. 
But unlike Definition \ref{def:aux-cater}, the construction here is a bit more involved. 
Roughly speaking, we construct the reduction graph $G_{path}(v)$ by performing the following:

\begin{enumerate}
    \item First, start with a perfect ternary tree $T$ of depth $k$. 
    Let $r$ be the root node, and $rv_1$, $rv_2$, and $rv_3$ be the three edges adjacent to $r$.
    \item For each leaf node $w$, we take the $1$-sum at $w$ with an endpoint of a path of length $f(k,n)$, where $n = \abs{G}$ and $f(k,n)$ is a function of order $O(3^{2k} n^4)$. We say a node $w$ belongs to branch $i$ if the shortest $(r,w)$-path contains the edge $rv_i$.
    \item Then, for each resulting leaf node $w$ that belongs to branch $3$, we make a $g(k,n)$-spiked graph with a copy of $G$, and take the $1$-sum at $w$ and $v'$ corresponding to $v$ in the copy, where $g(k,n)$ is a function of order $O(3^k n^3)$.
    
\end{enumerate}

With the reduction graph, we then need to define the source and target configuration $E_s$ and $E_t$ on $G_{path}(v)$ such that $(G_{path}(v), E_s, E_t)$ is reconfigurable if and only if $G$ has a Hamiltonian $v$-path. 
First, group the leaf nodes into three lists, where list $i$ belongs to branch $i$ for $i \in [3]$, and furthermore, assume an ordering of the nodes within the lists.
Then, we define $E_s$ and $E_t$ as follows: $E_s$ consists of all edges in $T$ with all the appended paths of length $f(k,n)$, but excludes the edges in the path $1$-summed with the first node in list $2$, which we call $P_{S \setminus T}$. Similarly, $E_t$ consists of all edges in $T$ with all the appended paths of length $f(k,n)$, but excludes the edges in the path $1$-summed with the first node in list $1$, which we call $P_{T \setminus S}$.

We claim that $G$ has a Hamiltonian $v$-path if and only if $(G_{path}(v), E_s, E_t)$ is reconfigurable. The bidirectional proof is structurally similar to that of Theorem \ref{thm:caterpillar}. For the forward direction, we show an explicit reconfiguration sequence utilizing the Hamiltonian $v$-path. For the backwards direction, we show that to move a token from $P_{S \setminus T}$ to $P_{T \setminus S}$, we must first move $O(f(k,n))$ tokens to edges on the spiked graph, which by a counting argument, means we have a structure that satisfies Lemma \ref{lem:spike_hamiltonian}, so we have a Hamiltonian $v$-path.

\section{Graphs with Bounded Treewidth and Planar Graphs}\label{sec:technical2}

%
Our main result in this section is a general theorem for showing hardness results for \emph{special} minor-closed graph classes, which contains the class of bounded treewidth graphs and planar graphs.
Specifically, we say a minor-closed graph class $\Pi$ is \emph{special} if it: (1) is closed under $k$-subdivision, (2) is closed under $1$-sum, (3) contains a graph with at least one edge. This section is organized into three parts. In Section \ref{subsec:reconf_special_minor}, we will prove the general theorem on the hardness of reconfiguring \emph{special} minor-closed graph classes, and then in Section \ref{subsec:reconf_treewidth} and  \ref{subsec:reconf_planar}, we show how to apply the theorem to get the desired hardness result for reconfiguring graphs with bounded treewidth and planar graphs, respectively.

\subsection{Special Minor-closed Graph Class}\label{subsec:reconf_special_minor}
First, let us introduce some common notation needed for the following sections. 
Let $\Pi$ be a graph property. The \textsc{Decide($\Pi$)} problem is the decision problem that asks, given a connected graph $G$ and an integer $\ell \leq \abs{E(G)}$, whether $G$ has a subgraph $H$ such that $\abs{E(H)} \geq \ell$ and $H$ satisfies $\Pi$. We write the input to the \textsc{Decide($\Pi$)} as $(G,\ell)$.
When the context is clear, we sometimes refer to the graph class $\Pi$ as a graph property, which simply means whether a graph belongs to $\Pi$.
Our goal in this section is to relate the hardness of \textsc{Decide($\Pi$)} to the hardness of \textsc{$\Pi$ reconfiguration}, as shown in the following theorem.

\begin{restatable}{theorem}{minorclosedhardness}\label{thm:minor_closed_hardness}
    Let $\Pi$ be a special minor-closed graph class.
    If \textsc{Decide($\Pi$)} is $\cNP$-hard, then \textsc{$\Pi$ reconfiguration} is $\cNP$-hard.
\end{restatable}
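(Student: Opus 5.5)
We reduce from \textsc{Decide($\Pi$)}, following the "fork" template of Theorem~\ref{thm:caterpillar}. Given $(G,\ell)$ with $G$ connected, we build $G'$ with two long pendant "reservoirs" $R_1,R_2$, each a graph in $\Pi$ with exactly $\ell$ edges. Because $\Pi$ is special, we can take each $R_i$ to be a suitable subdivision of a fixed graph $F\in\Pi$ with at least one edge, or a disjoint union/1-sum of such pieces. Both reservoirs hang off a vertex $v$ of $G$ through a short gadget. The source $E_s$ is $R_1$ plus the gadget, and the target $E_t$ is $R_2$ plus the gadget. The gadget must be chosen so that no configuration lies in $\Pi$ when it contains all of $R_1$, part of $R_2$, and the gadget. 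The natural candidate for this obstruction is the following. Pick a minimal excluded minor $M$ of $\Pi$; it exists because $\Pi$ is minor-closed and, being closed under subdivision and $1$-sum but presumably not all graphs, it is proper. Design the gadget plus the two reservoir attachments so that together they contain $M$ as a minor, while each side alone, i.e.\ the gadget plus one reservoir, is in $\Pi$. Edges of $G$ may be replaced by $k$-subdivisions, with the intermediate vertices labelled $E(uv,k,u)$ and so on, so that edge counts scale and each original edge is "worth" many tokens. This preserves membership in $\Pi$, by closure under subdivision and minors.

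\textbf{Forward direction.} Suppose $G$ has a subgraph $H\in\Pi$ with $|E(H)|\ge\ell$. Using $1$-sum closure, $H$ attached to the gadget and to any sub-part of a reservoir stays in $\Pi$. We move tokens from $R_1$ one at a time onto the (subdivided) edges of $H$, which keeps every intermediate configuration a $1$-sum or minor of $\Pi$-graphs. Once $R_1$ has been emptied down to the point where $R_2$ can be safely started, we fill $R_2$ and then retract the tokens from $H$ symmetrically. Intermediate configurations are subgraphs of a $1$-sum of members of $\Pi$, hence in $\Pi$ by minor-closedness.

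\textbf{Backward direction, the main obstacle.} At the first step where a token lands on $R_2$ (or on the critical edge of $R_2$), the obstruction forces essentially all $\ell$ tokens of $R_1$ to be elsewhere. Since $\Pi$ carries no connectivity requirement, tokens could be parked anywhere, including on spare gadget edges or on $R_2$ itself. The construction therefore has to make all edges outside $G$'s copy useless as storage: any token on $R_2$ or on the gadget beyond a bounded amount should already complete $M$ as a minor together with the remnants of $R_1$. If that holds, at least roughly $\ell$ tokens (up to the subdivision scaling) sit inside the subdivided copy of $G$ while the configuration is in $\Pi$. We then contract the subdivisions, which preserves $\Pi$ since $\Pi$ is minor-closed, and count. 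Each original edge of $G$ contributes at most $k$ subdivided paths, so about $\ell k$ occupied subdivided edges yield a subgraph of $G$ with at least $\ell$ edges in $\Pi$. Choosing $k$ larger than the constant size of the gadget absorbs the slack.

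Making this precise requires a careful choice of $M$-embedding. I would first try reservoirs that are long paths, sharing with a fixed copy of $M$ minus one edge in the gadget, so that completing both reservoir "ends" creates $M$. I expect this design step to be the hardest part of the proof.
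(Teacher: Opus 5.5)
There is a genuine gap. The whole backward direction rests on an obstruction gadget that you never construct, and the designs you point toward provably cannot work.

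First, anything attached through a single vertex is invisible to $\Pi$. If $R_2$ hangs off the rest of the configuration at one vertex, then (subgraph of gadget $+R_1$) $\cup$ (any part of $R_2$) is a $1$-sum, or disjoint union, of members of $\Pi$, and so lies in $\Pi$. You can therefore jump tokens from $R_1$ to $R_2$ one at a time and reach $E_t$ directly. Pendant reservoirs make every instance reconfigurable.

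Second, suppose instead that $R_1$ and $R_2$ are long paths that close up a copy of $M$ minus an edge, either each separately or jointly. Then the obstruction is only one token thick. Jump one token from $R_1$ to $R_2$, which breaks $R_1$'s connection, and keep shuttling. At every step at most one connection is complete, so no token ever needs to enter $G$. Your requirement that ``any token on $R_2$ or on the gadget beyond a bounded amount should already complete $M$'' is exactly what a bounded-size obstruction cannot deliver.

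The missing idea is to thicken the obstruction itself, not $G$. Use the paper's $k$-subdivision, which replaces an edge by $k$ parallel paths of length two, so that destroying one edge of the forbidden minor costs $\Theta(\ell)$ tokens. The paper's construction goes as follows.
\begin{enumerate}
\item Take a forbidden minor $H$ with the fewest edges, so $H$ minus any edge lies in $\Pi$ and $|E(H)|\ge 2$.
\item Replace $uv$ by $\ell+1$ such paths and every other edge by $2(\ell+1)$ of them, obtaining $H'$.
\item Use the disjoint union $H'\cup G$, with $E_s=E(H')$ minus the $uv$ bundle and $E_t=E(H')\setminus E(st,2(\ell+1),s)$.
\item Once $\ell+2$ tokens lie on the $uv$ bundle, some intermediate vertex of $uv$ is complete. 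Avoiding an $H$-minor then forces some other bundle to have all $2(\ell+1)$ of its paths broken, i.e.\ at least $2(\ell+1)$ empty edges.
\item Every non-$uv$ edge of $H'$ started occupied, so at least $2(\ell+1)-(\ell+2)=\ell$ tokens must sit on $G$. They induce a member of $\Pi$ by minor-closedness.
\end{enumerate}

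Subdividing $G$, as you propose, would also break your final count. Put tokens only on the edges $E(xy,k,x)$ for every $xy\in E(G)$. This gives a star forest with $k|E(G)|$ edges that lies in $\Pi$ but contracts to no edge of $G$ at all. So ``about $\ell k$ occupied subdivided edges'' does not yield a $\Pi$-subgraph of $G$ with $\ell$ edges.
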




\begin{proof}
    Suppose \textsc{Decide($\Pi$)} is $\cNP$-hard.
    Let $(G, \ell)$ be an input for the \textsc{Decide($\Pi$)} problem.
    Let $n = \abs{V(G)}$ and $m = \abs{E(G)}$.
    Our approach here is to construct an instance $\mathcal{I}$ of the \textsc{$\Pi$ reconfiguration} problem such that $G$ is a yes-instance for \textsc{Decide($\Pi$)} if and only if $\mathcal{I}$ is reconfigurable.
    
    By the Robertson-Seymour theorem \cite{robertson_04}, there exists a finite set $\mathcal{F}$ of forbidden minors that describes $\Pi$. More specifically, any graph $G \in \Pi$ must not contain any $F \in \mathcal{F}$ as minors.
    So, from the set of forbidden minors $\mathcal{F}$, we can choose an $H \in \mathcal{F}$ with the least amount of edges (arbitrarily when there are multiple). 
    
    We claim that $E(H) \geq 2$.
    Suppose for the contrary, let $H$ be a forbidden minor of $\Pi$ with only $0$ or $1$ edge. 
    Since $\Pi$ is \emph{special}, it contains a graph $G$ with at least one edge. In particular, since $\Pi$ is minor-closed, the single-edge graph is in $\Pi$. 
    Also, since it is closed under $1$-sum, we can construct an arbitrarily long path $P$ by $1$-summing the single edge.
    In particular, we have $\abs{V(P)} \geq \abs{V(H)}$, where $E(P) \geq 1$, so it contains $H$ as a minor since we can just remove the appropriate edges and vertices, contradicting that $H$ is a forbidden minor.
    Hence, $E(H) \geq 2$.
    
    Let $uv$ and $st$ be two edges of $H$ and let $H'$ be formed by a $2(\ell+1)$-subdivision on all the edges except $uv$, and an $(\ell+1)$-subdivision of edge $uv$.
    Note that $H' \not\in \Pi$, since $H'$ contains $H$ as a minor.
    
    Define $G_A = H' \cup G$ as the auxiliary graph.
    Let $E_s = E(H') \setminus (E(uv,\ell+1,u) \cup E(uv,\ell+1,v))$ and $E_t = E(H') \setminus E(st,2(\ell+1),s)$.
    Let $(G_A, E_s, E_t)$ be an instance to the \textsc{$\Pi$ reconfiguration} problem. It is easy to see that the instance can be constructed in time polynomial to the input size.
    
    Note that $E_s$ satisfies $\Pi$ since $E_s$ can be constructed from $H \setminus \set{uv}$ by taking a $2(\ell+1)$-subdivision on all its edges. 
    This is because $\abs{E(H) \setminus \set{uv}} < \abs{E(H)}$ and $H$ is a forbidden minor with minimum number of edges, so $H \setminus \set{uv}$ satisfies $\Pi$, and that $\Pi$ is closed under taking $2(\ell+1)$-subdivision.
    For $E_t$, we know that $H \setminus \set{st}$ satisfies $\Pi$, and that any single edge graph satisfies $\Pi$, since any single edge graph cannot be a forbidden minor as $E(H) \geq 2$. So, we can construct $E_t$ by taking a $2(\ell + 1)$-subdivision on all edges of $H \setminus \set{st}$ except $uv$, a $(\ell+1)$-subdivision on $uv$, and then $1$-summing $s$ with $k$ single edge graphs. Since $\Pi$ is closed under taking $k$-subdivision and $1$-sum, we have $E_t \in \Pi$ as well.
    Now, we are ready to show that $(G_{A}, E_s, E_t)$ is reconfigurable if and only if \textsc{Decide($\Pi$)} is a yes-instance.
    
    Suppose $(G_{A}, E_s, E_t)$ is reconfigurable.
    Since $E_t$ includes the subdivided edges of $uv$, at some point in the reconfiguration sequence, there must be more than $\ell+1$ tokens in $E(uv,\ell+1,u) \cup E(uv,\ell+1,v)$.
    Consider the step $t$ of the reconfiguration sequence, right after we place the $\ell+2$ tokens onto $E(uv,\ell+1,u) \cup E(uv,\ell+1,v)$.
    We claim that if there are $\ell+2$ tokens in $E(uv,\ell+1,u) \cup E(uv,\ell+1,v)$, there must be some edge $ab \in E(H) \setminus \set{uv}$ such that at least $2(\ell+1)$ tokens in $E(ab,2(\ell+1),a) \cup E(ab,2(\ell+1),b)$ must be moved elsewhere.
    Suppose not, for all $ab \in E(H) \setminus \set{uv}$, only at most $2\ell + 1$ tokens are moved elsewhere. 
    Choose an edge $ab \in E(H) \setminus \set{uv}$.
    Since $H'$ is constructed by taking $2(\ell+1)$-subdivision of all edges except $uv$, if we remove only at most $2\ell+1$ tokens on $E(ab,2(\ell+1),a) \cup E(ab,2(\ell+1),b)$, there will be an intermediate vertex $s_{ab}$ of $ab$ in $H'$ where there is a token placed on both $as_{ab}$ and $s_{ab}b$. 
    Furthermore, by construction of $H'$, if there are $\ell+2$ tokens in $E(uv,\ell+1,u) \cup E(uv,\ell+1,v)$, there must also be an intermediate vertex $s_{uv}$ where there is a token placed on both $us_{uv}$ and $s_{uv}v$.
    However, this means the subgraph induced by the tokens in $H'$ contains $H$ as a minor, since for any edge $xy \in H$, there is a token placed on both $xs_{xy}$ and $s_{xy}y$, for some intermediate vertex $s_{xy}$ of $xy$ in $H'$.
    This means the subgraph induced by the tokens does not satisfy $\Pi$, which is a contradiction.

    In particular, at step $t$, this means at least $2(\ell+1)$ tokens in $H' \setminus (E(uv,\ell+1,u) \cup E(uv,\ell+1,v))$ must be moved away into $E(uv,\ell+1,u) \cup E(uv,\ell+1,v)$ or to the edges of $G$.
    Note that there are only $\ell+2$ tokens on $E(uv,\ell+1,u) \cup E(uv,\ell+1,v)$, so, at least $2(\ell+1)-(\ell+2) = \ell$ tokens have to move onto edges of $G$. 
    So, if we let $G_t$ be the graph induced by the tokens on $G$ at step $t$, we have $\abs{E(G_t)} \geq \ell$ and that $G_t$ satisfies $\Pi$ as required, since $G_t$ is a token-induced subgraph in the reconfiguration sequence and $\Pi$ is minor-closed.

    For the backwards direction, suppose \textsc{Decide($\Pi$)} is a yes-instance, so $G$ has a subgraph $G' \in \Pi$ with $E(G') \geq \ell$. 
    Starting with $E_s$, we can first move $\ell+1$ edges of $E(st,2(\ell+1),s)$ to $E(uv,\ell+1,u)$. This satisfies $\Pi$ since removing edge from $E(st,2(\ell+1),s)$ maintains $\Pi$ and $1$-summing with an edge also maintains $\Pi$, since $\Pi$ is a \emph{special} minor-closed graph class.
    Then, we move $\ell$ tokens on edges of $E(st,2(\ell+1),s)$ to the edges corresponding to $E(G')$, which maintains $\Pi$ since $G' \in \Pi$ and $\Pi$ is minor-closed, so removing tokens edges on $E(st,2(\ell+1),s)$ still maintains $\Pi$.
    Then, for the last token in $E(st,2(\ell+1),s)$, we move it to an edge on $E(uv,\ell+1,v)$. 
    Note that the token-induced subgraph on $H'$ can be constructed from $H \setminus \set{st}$, which satisfies $\Pi$, by $1$-summing with $2(\ell+1)$-edges on $t$, a $(\ell+1)$-subdivision on $uv$, and then removing $\ell$ edges from $E(uv,\ell+1,v)$. 
    So, this maintains $\Pi$ since these are all operations for which $\Pi$ is closed under.
    Finally, we sequentially move the $\ell$ tokens in $E(G')$ to the unoccupied edges in $E(uv,\ell+1,v)$. 
    This maintains $\Pi$ by the same reasoning above, since the token-induced subgraph on $H'$ can be constructed from $H \setminus \set{st}$ by $1$-summing with $2(\ell+1)$-edges on $t$, a $(\ell+1)$-subdivision on $uv$, and then removing appropriate number of edges from $E(uv,\ell+1,v)$. The subgraph induced by tokens on $E(G')$ maintains $\Pi$ because $\Pi$ is minor-closed.
    So, each step in the reconfiguration sequence induces a graph that satisfies $\Pi$.
    Consequently, \textsc{$\Pi$ reconfiguration} problem is \cNP-hard.
\end{proof}

\subsection{Graphs with Bounded Treewidth}\label{subsec:reconf_treewidth}

Before we prove the hardness of reconfiguring graphs with bounded treewidth, we need to show that bounded treewidth as a graph class is \emph{special}. 

\begin{restatable}{lemma}{twkstruct}\label{lem:tw_k_structure}
    $(\ast)$ Let $k \geq 2$.
    Let $G$ and $H$ be two graphs of treewidth $\leq k$ and let $u \in V(G)$, $vw \in E(G)$, and $a \in V(H)$. Then the following holds:
    \begin{enumerate}
        \item $tw(G \oplus_{u,a} H) \leq k$.
        \item For any $m \geq 1$, if $G'$ is the result of applying an $m$-subdivision of $vw$, then $tw(G') \leq k$.
    \end{enumerate}
\end{restatable}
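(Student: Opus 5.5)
Both parts come from directly building tree decompositions of width at most $k$, using Definition~\ref{def:treewidth}; no earlier result beyond the definitions is needed.

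For part (1), fix tree decompositions $(T_G,\mathcal{B}_G)$ of $G$ and $(T_H,\mathcal{B}_H)$ of $H$, each of width at most $k$. Choose a node $x \in V(T_G)$ with $u \in B_x$ and a node $y \in V(T_H)$ with $a \in B_y$; such nodes exist by condition (3). Form the tree $T$ from the disjoint union $T_G \cup T_H$ by adding the edge $xy$. In every bag of $T_H$, rename $a$ to $u$, since the two are identified in $G \oplus_{u,a} H$. Then we check the three conditions:
\begin{itemize}
\item Every edge of the $1$-sum comes from $G$ or from $H$, so it is covered by a bag of the corresponding decomposition.
\item Every vertex appears in some bag.
\item For connectivity, vertices other than $u$ appear only on one side, so the condition is inherited. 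The bags containing $u$ form a subtree of $T_G$ containing $x$ and a subtree of $T_H$ containing $y$. These are joined by the edge $xy$, so together they form a connected subtree of $T$.
\end{itemize}
Bag sizes are unchanged, so the width is still at most $k$.

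For part (2), first observe that an $m$-subdivision of $vw$ keeps $vw$ out of $G'$ and adds $m$ new vertices $s_i$, each adjacent only to $v$ and $w$. Take a decomposition $(T,\mathcal{B})$ of $G$ of width at most $k$, and let $z$ be a node whose bag contains both $v$ and $w$; it exists because $vw \in E(G)$. For each $i$, attach a new leaf node $z_i$ to $z$ with bag $\{v, w, s_i\}$. This bag has size $3 \le k+1$ exactly because $k \geq 2$, and this is the only point where the hypothesis on $k$ is used. Keeping the old bags is harmless, even though $vw$ is no longer an edge, since a tree decomposition of $G$ covers every subgraph of $G$.

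We then verify the conditions for $G'$:
\begin{itemize}
\item The new edges $vs_i$ and $s_iw$ lie in $B_{z_i}$, and all other edges of $G'$ are edges of $G$.
\item Each $s_i$ appears in exactly one bag.
\item The sets of bags containing $v$ and containing $w$ stay connected, because each new leaf $z_i$ is adjacent to $z$, which already contains both $v$ and $w$.
\end{itemize}

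The argument is routine throughout. The only steps that need care are the connectivity condition for the merged vertex $u$ in part (1) and recording where $k \geq 2$ enters in part (2).
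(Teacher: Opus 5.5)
Your proposal is correct and takes the same route as the paper. In part (1) you join the two decompositions by an edge between bags containing the identified vertex, and in part (2) you hang new leaf bags $\{v,w,s_i\}$ off a bag containing both endpoints; your choice of a single node $z$ is slightly cleaner than the paper's wording, which read literally attaches a copy of the new bag to every bag containing the edge and would place $s_i$ in bags that are not connected.
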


\begin{theorem}\label{thm:treewidth_reconf_np_hard}
    \textsc{$k$-bounded treewidth reconfiguration} is \cNP-hard for any fixed $k \geq 2$.
\end{theorem}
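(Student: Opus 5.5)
The plan is to apply Theorem~\ref{thm:minor_closed_hardness} with $\Pi$ the class of graphs of treewidth at most $k$. This requires two things: that $\Pi$ is a \emph{special} minor-closed class, and that \textsc{Decide($\Pi$)} is \cNP-hard.

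For specialness, I will first note the standard fact that treewidth does not increase under taking minors. Deleting vertices or edges keeps any tree decomposition valid. Contracting an edge $xy$ into a vertex $z$ is handled by replacing $x$ and $y$ by $z$ in every bag; the bags containing $z$ are the union of two connected subtrees that share a bag (the one covering $xy$), so they are still connected. Hence $\Pi$ is minor-closed. Closure under $1$-sum and under $m$-subdivision is exactly Lemma~\ref{lem:tw_k_structure}, which applies because $k \geq 2$. Finally, the single edge has treewidth $1 \leq k$, so $\Pi$ contains a graph with at least one edge.

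For hardness of \textsc{Decide($\Pi$)}, I will reduce from a known \cNP-hard problem. The natural candidate is \textsc{Maximum Partial $k$-Tree}, or equivalently the edge-deletion problem ``delete at most $m-\ell$ edges so that treewidth is at most $k$''. For $k=2$ this asks for a maximum series--parallel subgraph, and the largest-subgraph problem for any nontrivial hereditary property closed under disjoint unions/components (Yannakakis; Lewis--Yannakakis edge-deletion results) is \cNP-hard. So I would invoke the Yannakakis-type theorem for edge deletion with respect to minor-closed properties of the form ``excludes a $2$-connected minor''. Treewidth $\le k$ excludes $K_{k+2}$, which is $2$-connected, and the class is closed under $1$-sums, which is precisely the setting of those results. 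If that citation needed adjusting, I would give a direct reduction: take an instance of a \cNP-hard problem and replace each edge by a gadget built from copies of $K_{k+1}$ (which lies in $\Pi$ and is edge-maximal), so that a large $\Pi$-subgraph corresponds to a solution. One must also ensure the input graph to \textsc{Decide} is connected, as the definition requires. This is easy: join components by bridges, which neither change treewidth (by $1$-sum closure) nor alter optimal counts except by a known additive amount.

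The main obstacle is this second part, pinning down \cNP-hardness of \textsc{Decide($\Pi$)} for every fixed $k\ge 2$ with the connectivity requirement. Once it is in hand, combining it with specialness and Theorem~\ref{thm:minor_closed_hardness} immediately gives that \textsc{$k$-bounded treewidth reconfiguration} is \cNP-hard.
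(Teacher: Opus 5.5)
\textbf{There is a genuine gap: the \cNP-hardness of \textsc{Decide($\Pi$)} is never established, and the justification you give for it is wrong.} The rest matches the paper. Like the paper, you apply Theorem~\ref{thm:minor_closed_hardness}, and you get specialness from minor-closedness, Lemma~\ref{lem:tw_k_structure}, and a one-edge graph; that part is fine. But hardness of \textsc{Decide($\Pi$)} is the only nontrivial ingredient, and you leave it open.

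The meta-theorem you invoke does not exist in the form you state. The general Lewis--Yannakakis theorem concerns node deletion, i.e.\ maximum \emph{induced} subgraphs. For edge deletion there is no blanket result for nontrivial hereditary properties closed under components. Your formulation would apply verbatim at $k=1$. Forests are a nontrivial hereditary, minor-closed class, closed under components and $1$-sums, and they exclude the $2$-connected minor $K_3$. Yet \textsc{Decide} for forests is trivially polynomial: a connected $G$ has a forest subgraph with at least $\ell$ edges iff $\ell \le |V(G)|-1$. So your hardness argument never uses $k \ge 2$, and as stated it proves something false. Your fallback, a ``gadget built from copies of $K_{k+1}$'', names neither a source problem nor a reason why optimal solutions would correspond.

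The missing idea is the one the paper uses: reduce from \textsc{Spanning $k$-Tree}, which is \cNP-hard for every fixed $k \ge 2$ \cite{cai_93}.
\begin{itemize}
\item A graph of treewidth at most $k$ on $n \ge k+1$ vertices is a subgraph of a $k$-tree on the same vertex set (pad a smaller subgraph with isolated vertices first).
\item Hence it has at most $kn - \frac{k(k+1)}{2}$ edges, with equality only when it is itself that $k$-tree.
\item Take $\ell = kn - \frac{k(k+1)}{2}$ with $n = |V(G)|$. Then $G$ has a subgraph of treewidth at most $k$ with at least $\ell$ edges iff $G$ has a spanning $k$-tree.
\end{itemize}
Connectivity of the input is automatic, since a graph with a spanning $k$-tree is connected, so restricting to connected inputs loses nothing. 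This gives hardness of \textsc{Decide($\Pi$)}, and Theorem~\ref{thm:minor_closed_hardness} then finishes the proof exactly as you planned.
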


\begin{proof}
    It is known that for any $k \geq 2$, the \textsc{spanning $k$-tree problem} is \cNP-hard \cite{cai_93}, where given a connected graph $G$, it asks whether $G$ contains a $k$-tree that uses all vertices of $G$.
    Note that by construction a $k$-tree with $n \geq k+1$ vertices has $kn-\frac{k^2+k}{2}$ edges, since we start with a $k+1$-clique, which has $\frac{k(k+1)}{2}$ edges, and for the remaining $n-k-1$ vertices, we add $k$ new edges to the graph, which means we have $\frac{k(k+1)}{2} + k(n-k-1) = kn-\frac{k^2+k}{2}$ edges in total.
    Recall any graph with treewidth at most $k$ is a subgraph of a $k$-tree \cite{nesetril_08}, so a graph with $n$ vertices, $k$-bounded treewidth, and $kn-\frac{k^2+k}{2}$ edges is necessarily a $k$-tree.
    In particular, this implies that \textsc{Decide($\Pi$)} is \cNP-hard by letting $\ell$ be $kn-\frac{k^2+k}{2}$, where $\Pi$ asks whether the graph has $k$-bounded treewidth.
    Furthermore, $\Pi$ contains a graph with at least one edge, namely a $k$-clique, and by Lemma \ref{lem:tw_k_structure}, $\Pi$ is a \emph{special} minor-closed graph class, since treewidth is a minor-closed property \cite{bodlaender_98}.
    So, by Theorem \ref{thm:minor_closed_hardness}, \textsc{$k$-bounded treewidth reconfiguration} is \cNP-hard.
\end{proof}

\subsection{Planar Graphs}\label{subsec:reconf_planar}

Here we show that \textsc{planar graph reconfiguration} is \cNP-hard by Theorem \ref{thm:minor_closed_hardness} as well.



\begin{theorem}\label{thm:planar_reconf_np_hard}
    \textsc{planar graph reconfiguration} is \cNP-hard.
\end{theorem}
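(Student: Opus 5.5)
We apply Theorem~\ref{thm:minor_closed_hardness} with $\Pi$ the class of planar graphs. This requires two things: that planar graphs form a special minor-closed class, and that \textsc{Decide($\Pi$)} is \cNP-hard.

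For the first point, planarity is minor-closed by Wagner's theorem (equivalently, deleting and contracting edges preserves a planar embedding). The single-edge graph is planar, so condition (3) holds. For closure under $k$-subdivision, we replace the edge $uv$ of a planar embedding by $k$ internally disjoint paths $u s_i v$ drawn inside a thin neighborhood of the original curve for $uv$. These curves can be nested without crossings, so the result is planar. For closure under $1$-sum, we embed $G_2$ so that $v$ lies on its outer face, which is always possible by re-embedding through a stereographic projection. We then place this drawing inside a small disk within a face of $G_1$ incident to $u$ and identify $v$ with $u$. No crossings are introduced. This settles (1) and (2).

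For the second point, we use the \cNP-hardness of the maximum planar subgraph problem: given a graph $G$ and $\ell$, decide whether $G$ has a planar subgraph with at least $\ell$ edges. This is a classical result (Liu and Geldmacher; see Garey and Johnson, problem GT27). It is exactly \textsc{Decide($\Pi$)} for $\Pi$ the planar graphs, except that \textsc{Decide($\Pi$)} requires a connected input. We handle this in one of two ways. One option is to cite the hardness for connected inputs directly, since the standard reduction from Hamiltonian path produces connected graphs. Alternatively, given an arbitrary instance $(G,\ell)$, we pick one vertex from each component and join them in a path. This adds $c-1$ edges, where $c$ is the number of components, and we set $\ell' = \ell + c - 1$. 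Any planar subgraph of $G$ together with these bridge edges is still planar, since bridges joining planar pieces preserve planarity via $1$-sums. Conversely, deleting the bridges from a planar subgraph of the new graph with at least $\ell'$ edges leaves at least $\ell$ edges of $G$, and the result is still planar because planarity is minor-closed.

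The main obstacle is this hardness source, specifically pinning down a correct citation and the connectivity adjustment. The closure properties are routine. With both ingredients in place, Theorem~\ref{thm:minor_closed_hardness} yields that \textsc{planar graph reconfiguration} is \cNP-hard.
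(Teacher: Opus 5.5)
Your proposal is correct and follows the paper's route: check that planar graphs form a special minor-closed class, then invoke Theorem~\ref{thm:minor_closed_hardness} with the \cNP-hardness of the planar subgraph problem from Garey and Johnson. Your embedding arguments for closure under subdivision and $1$-sum, and your bridge-path padding to meet the connectedness requirement of \textsc{Decide($\Pi$)}, fill in details the paper leaves as ``easily verified''; the paper does not address the connectedness point explicitly.
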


\begin{proof}
    Let $\Pi$ be the class of planar graphs. It is well known that planar graphs are closed under taking minors, and one can easily verify that planar graph is \emph{special}, that is, closed under $k$-subdivision, closed under $1$-sum, and containing a graph with at least one edge. The result then follows directly from Theorem \ref{thm:minor_closed_hardness} using the fact that \textsc{Decide($\Pi$)} is \cNP-Hard \cite[p.~197]{garey_79}.
\end{proof}

\section{Reconfiguring with Extra Resources}\label{sec:technical3}

%

Given the intractability results above, we want to extend the existing model of \textsc{Subgraph Reconfiguration} to settings where these problems may become tractable. We draw inspiration from robot motion planning \cite{gao_23}, which studies the rearrangement of objects on a tabletop using a robotic arm. In that setting, objects may be temporarily moved to a ``buffer'' space outside the workspace due to physical constraints, motivating the minimum running buffer problem, which asks for the minimum buffer space that must be in use at any one time in order to rearrange the tabletop.
In a more related setting of \textsc{Bin Packing Reconfiguration} \cite{kam_25}, the authors also proposes a resource-focused notion of reconfiguration as future direction, in which one is given extra resources (e.g. additional bins) to facilitate the reconfiguration. 
Building on these ideas, we introduce the setting of subgraph reconfiguration with extra resources, where we allow extra buffer space during reconfiguration by permitting tokens to be placed into and retrieved from a buffer.
We formally define below what it means to reconfigure a subgraph with extra buffer space.

\begin{definition}\label{def:buffer_reconf}
Let $(G, E_s, E_t)$ be an instance of a \textsc{$\Pi$ reconfiguration} problem. 
We say $(G, E_s, E_t)$ is \emph{reconfigurable with $k$ extra buffer space} if there exists a sequence $E_s = E_0, E_1, \cdots, E_\ell = E_t$ where, $\abs{E_s} = \abs{E_t}$, and for every $i \in \{1, \cdots \ell\}$, $E_i$ satisfies $\Pi$ and either:
\begin{enumerate}
    \item $\abs{E_{i} \Delta E_{i-1}} = 2$ and $\abs{E_{i}} = \abs{E_{i-1}}$, or 
    \item $\abs{E_{i} \Delta E_{i-1}} = 1$ and $\abs{E_s} - k \leq \abs{E_i} \leq \abs{E_s}$
\end{enumerate}
The \emph{minimum required buffer} of an instance is the smallest $k$ for which the instance is reconfigurable with $k$ extra buffer space. 
If no such $k$ exists, we say the \emph{minimum required buffer} is infinity. 
\end{definition}
Intuitively, (1) corresponds to the usual reconfiguration step, (2) allows us to put a token into (or retrieve from) a size $k$ buffer. 

Although our extra buffer setting resembles the existing Token Addition-Removal (TAR) model \cite{ito11}, there are subtle differences. Our model is more restrictive. In TAR, tokens are added or removed one at a time, with additions permitted before removals, usually subject to lower and upper bounds on the sizes of the intermediate configurations. 
It is more restrictive in that we can only retrieve (add) as many tokens as we have already stored (removed).

With this framework, we can view some of the existing results in the work of Hanaka et al. \cite{hanaka_20} in terms of extra buffer space needed. More specifically, we have the following results, which follow almost directly from the proofs of Theorems 2, 3, and 6 of \cite{hanaka_20}.
\begin{restatable}{theorem}{buffersimple}\label{thm:buffer_simple}
    $(\ast)$ The following statements are true when the input graph is connected:
    \begin{enumerate}
        \item The minimum required buffer of any instance of \textsc{tree reconfiguration} is $0$.
        \item The minimum required buffer of any instance of \textsc{cycle reconfiguration} is $\infty$.
        \item For any sufficiently large $n$, there exists a size-$n$ instance of \textsc{path reconfiguration} with minimum required buffer $\Omega(n)$.
    \end{enumerate}
\end{restatable}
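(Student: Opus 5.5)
I would treat the three parts separately. In each case I would reread the corresponding argument of Hanaka et al.\ and check how buffer moves interact with it.

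\textbf{Part (1): trees.} The minimum required buffer is at least $0$ by definition. So it suffices to show that every instance with a connected input graph is reconfigurable without any buffer. This is the content of the linear-time result of Hanaka et al.\ (their Theorem~2): when $G$ is connected and $E_s, E_t$ induce trees with $\abs{E_s}=\abs{E_t}$, a reconfiguration sequence always exists. I would recall the inductive argument. If the trees share a vertex, pick a leaf edge of $E_s \setminus E_t$ whose removal keeps the tree connected and avoids $E_t$. Jump its token onto an edge of $E_t \setminus E_s$ adjacent to the current tree. This keeps a tree and increases $\abs{E \cap E_t}$. If the trees are disjoint, first ``walk'' the tree along a shortest path in $G$ toward $E_t$ by repeatedly moving a leaf edge to the next path edge. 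Connectivity of $G$ is exactly what makes this walk possible.

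\textbf{Part (2): cycles.} The key observation is that a cycle has no proper nonempty subgraph that is a cycle on fewer edges obtained by deleting edges. Hence any buffer move (type (2) in Definition~\ref{def:buffer_reconf}) that removes a token from a cycle leaves a path, which is invalid. A buffer move can therefore never occur, because every intermediate configuration must satisfy $\Pi$. So buffer space gives nothing beyond ordinary token jumping. In the token jumping model, Hanaka et al.\ (Theorem~3) note that moving one token off a cycle $C$ onto another edge never yields a cycle unless the configuration is unchanged. So the only reconfigurable instances have $E_s=E_t$. The statement ``$\infty$'' should then be read for instances with $E_s\neq E_t$. I would state this caveat, or argue that trivial instances are excluded by convention.

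\textbf{Part (3): paths.} This is the main obstacle. I would adapt the NP-hardness gadget of Hanaka et al.\ (Theorem~6) into an explicit family, or better, build a direct one. Take two long paths $P_1, P_2$, each with $L$ edges, attached at a common vertex to a high-degree ``fork''. Concretely, use a spider whose center $r$ has several legs. Set $E_s=E(P_1)\cup\{$one leg$\}$ and $E_t=E(P_2)\cup\{$another leg$\}$, arranged so that the source and target paths pass through $r$ in incompatible ways. Moving tokens one at a time along $G$ would require passing a path end ``through'' $r$, which needs a long stretch of free edges. The lower bound argument goes as follows. Any configuration with $\abs{E_s}-k$ tokens must still be a path. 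Transitioning from using $P_1$ to using $P_2$ requires at some moment a path containing no edge of the middle section of $P_1$. Since $G$ has no room elsewhere (the tree is a spider, so all paths lie in at most two legs), the token count at that moment is at most the length of the shorter available legs. That count is about $L-\Omega(L)$, forcing $k=\Omega(L)=\Omega(n)$. For the upper bound side (that the instance is reconfigurable with $O(n)$ buffer, so finite), store tokens in the buffer until only a short path remains, slide it over, and retrieve them. The delicate step is designing leg lengths so that the counting really forces $\Omega(n)$ tokens into the buffer simultaneously. I would first try a spider with three legs of lengths $L, L, L$, with source equal to legs 1 and 2 and target equal to legs 2 and 3 but shifted by a tail. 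I would then verify the count by checking which paths of a given length exist in a spider.
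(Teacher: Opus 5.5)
Your proposal follows the paper's proof. Parts (1) and (2) rest on the same results of Hanaka et al.\ plus the observation that the first buffer move from a cycle leaves a path; your caveat that instances with $E_s=E_t$ must be excluded is a correct refinement that the paper glosses over. For Part (3) the paper uses exactly your three-legged spider with equal legs, source on two legs and target on two legs sharing one, with no shift needed. The precise count is this: at the step where a token first lands on the new leg, the resulting path must avoid one of the two source legs, so the configuration just before has at most $L+1$ edges, which forces $2L-(L+1)=L-1=\Omega(n)$ tokens into the buffer, and $L-1$ also suffices.
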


By modifying the construction used in the hardness proof of Theorem \ref{thm:pathwidth_k}, \ref{thm:treewidth_reconf_np_hard}, \ref{thm:planar_reconf_np_hard}, we also get the following result. Recall that the core idea of these proofs stems from constructing an input graph for which reconfiguring between the source and target subgraphs would require moving a large number of tokens into some special regions of the input graph. With slight modification, the construction can be used to show the minimum required buffer is $\Omega(n)$, by simply removing those special regions from the input graph and adding an appropriate number of new vertices, vertices with no incident edge, to the input graph.

\begin{restatable}{theorem}{bufferpwtw}\label{thm:buffer_pw_tw}
    $(\ast)$ For any sufficiently large $n$, the following statements are true:
    \begin{enumerate}
        \item There exists a size-$n$ instance of \textsc{connected $k$-bounded pathwidth reconfiguration} with minimum required buffer $\Omega(n)$ when the input graph is connected.
        \item There exists a size-$n$ instance of \textsc{$k$-bounded treewidth reconfiguration} with minimum required buffer $\Omega(n)$.
        \item There exists a size-$n$ instance of the \textsc{planar graph reconfiguration} with minimum required buffer $\Omega(n)$.
    \end{enumerate}
\end{restatable}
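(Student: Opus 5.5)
The plan for all three parts is the same: reuse the hardness constructions, but delete the ``special region'' that was able to absorb many tokens (the spiked copy of $G$ in Theorem~\ref{thm:pathwidth_k}, and the graph $G$ glued to $H'$ in Theorem~\ref{thm:minor_closed_hardness}). Once that region is gone, every token that has to leave the rigid structure must go into the buffer. For each part we choose the free parameter (the path length $f$ or the subdivision parameter $\ell$) to be $\Theta(n)$, and then pad to size exactly $n$. Since we only claim a lower bound, a minimum required buffer of $\infty$ is also acceptable.

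\textbf{Parts 2 and 3.} Let $\Pi$ be $k$-bounded treewidth (with $k\ge 2$) or planarity. Fix a forbidden minor $H$ of $\Pi$ with the minimum number of edges; for example $K_{k+2}$ or $K_5$ works, and in any case $H$ has constant size. Build $H'$ exactly as in the proof of Theorem~\ref{thm:minor_closed_hardness}, but take the input graph to be $G_A = H'$ plus isolated vertices, with the same $E_s$ and $E_t$. Then $|E(H')| = \Theta(\ell)$, so choosing $\ell = \Theta(n)$ and padding with isolated vertices gives size exactly $n$. Validity of $E_s$ and $E_t$ is shown verbatim as before.

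For the lower bound, consider the first step at which $\ell+2$ tokens lie on $E(uv,\ell+1,u)\cup E(uv,\ell+1,v)$. The minor argument already in that proof shows that for some $ab\ne uv$, at least $2(\ell+1)$ tokens of $E(ab,2(\ell+1),a)\cup E(ab,2(\ell+1),b)$ have been removed. These tokens can now only sit on the $uv$-subdivision edges (at most $\ell+2$ of them) or in the buffer. Hence the buffer holds at least $\ell = \Omega(n)$ tokens at that moment.

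\textbf{Part 1.} The input graph must be connected here, so isolated vertices are not allowed. Instead, take $G_{path}$ from the proof overview of Theorem~\ref{thm:pathwidth_k} and remove the spiked copies of $G$, keeping only the perfect ternary tree $T$ of depth $k$ with paths of length $f$ attached at its leaves. This graph is connected and has exactly $|E_s| + f$ edges, so the only unoccupied edges at any time are those of $P_{T\setminus S}$ plus whatever tokens are in the buffer. We take $f = \Theta(n)$; for $k = 1$ this is just the caterpillar fork of Definition~\ref{def:aux-cater} without $S$. To make the size exactly $n$, we adjust the length of one attached path in a branch that plays no role in the argument, rather than adding extra edges, since extra edges would create slack.

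The key claim, which is the step where the backward direction of Theorem~\ref{thm:pathwidth_k} does the real work, is this: whenever tokens sit on more than a constant number of edges of $P_{T\setminus S}$ while $\Theta(f)$ tokens remain on $P_{S\setminus T}$ (or on the other branches), the token-induced graph has pathwidth greater than $k$ or is disconnected. This follows from the standard fact that a tree with three branches at a vertex, each of pathwidth at least $k$, has pathwidth at least $k+1$, applied inductively down the ternary tree.

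Given the claim, consider the moment a token is placed at depth about $f/2$ on $P_{T\setminus S}$. Connectivity forces the entire prefix of $P_{T\setminus S}$ up to that point to be occupied, so $\Omega(f)$ tokens are in $P_{T\setminus S}$. The claim then forces $\Omega(f)$ tokens to have been vacated from the rest of the structure. The edge count shows that the only free edges available outside $P_{T\setminus S}$ are those vacated edges themselves, so the displaced tokens cannot be parked anywhere in the graph and must be in the buffer. This gives a buffer of size $\Omega(f) = \Omega(n)$.

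I expect the main obstacle to be stating the pathwidth obstruction precisely for general $k$, in particular checking that partially emptied branches still carry enough structure to force pathwidth $k+1$. For this I would reuse the appendix argument for Theorem~\ref{thm:pathwidth_k} essentially unchanged.
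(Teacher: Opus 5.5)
Your overall plan is the paper's. Parts 2 and 3 are exactly its argument: drop $G$ from $G_A$, pad with isolated vertices, and count at the first moment $\ell+2$ tokens sit on the $uv$-subdivision. For $k\ge 2$, your $(k,f)$-claw with almost-claw source and target is its instance.

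The $k=1$ instance you name is wrong, however. The fork of Definition~\ref{def:aux-cater} with $S$ removed is a spider centred at $r$. Its source occupies legs of lengths $(2n+1,1,2)$ towards $a,b,c$. It can be reconfigured with no buffer:
\begin{enumerate}
\item Jump the token on $cd$ to $bp_2^{2n}$. This gives the spider $(2n+1,2,1)$, which is a caterpillar because one leg has length $1$.
\item Move the far-end token of the $a$-leg to the next free edge of the $b$-leg, one at a time. The $c$-leg stays at length $1$, so every intermediate graph is a caterpillar.
\item Finish by jumping $ap_1^{2n}$ onto $cd$.
\end{enumerate}
The obstruction needs all three legs at the root to be long, and the fork's $c$-leg has constant length. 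Use the $(1,f)$-claw instead. It is what your general description gives for $k=1$, and it is essentially the paper's $k=1$ instance.

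Your key claim for Part 1 is also stated with the wrong hypothesis, and the final count misplaces the displaced tokens. Keeping $\Theta(f)$ tokens on $P_{S\setminus T}$ is not forbidden. Suppose some third tail is completely empty and $P_{S\setminus T}$ and $P_{T\setminus S}$ are both half full. Then the token graph has pathwidth $k$ by point 4 of Lemma~\ref{lem:pw_structural}. That state needs buffer $f$, but it contradicts your claim as written.

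The correct statement is as follows. Suppose the token graph is connected, contains the edge of $P_{T\setminus S}$ at its leaf and all tree edges, and has a token on every other tail. Then connectivity yields a $(k,1)$-claw, which has pathwidth $k+1$. If instead some tree edge is unoccupied, connectivity already empties a whole tail. So whenever $P_{T\setminus S}$ is occupied, some \emph{entire} other tail is empty.

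That is what your count needs. The edges outside $P_{T\setminus S}$ number exactly $\abs{E_s}$. Hence the buffer load equals the number of empty edges outside $P_{T\setminus S}$ minus the number of tokens on $P_{T\setminus S}$. Displaced tokens \emph{can} be parked on $P_{T\setminus S}$, so ``$\Omega(f)$ vacated'' would not by itself beat the roughly $f/2$ tokens you place there. A whole empty tail does, and already at the first moment that leaf edge of $P_{T\setminus S}$ is occupied it forces a buffer of at least $f-1$.

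Finally, you are right not to pad with isolated vertices in Part 1; the paper does, which breaks the connectivity requirement. But lengthening a tail cannot reach every $n$. For a tree input the size $2\abs{E(G)}+1+2\abs{E_s}$ is always odd. You need a small non-tree modification, e.g.\ a chord near the free end of a tail lying in both $E_s$ and $E_t$, which changes the count by only $O(1)$.
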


Motivated by the contrast between reconfiguring trees, which require no extra buffer space, and cycles, which cannot be reconfigured even with additional buffer space, we turn our attention to cactus graphs, or cacti, which combine the structure of both trees and cycles. 
A cacti is a graph, not necessarily connected, in which each edge belongs to at most one cycle. 
We note that Theorem~\ref{thm:minor_closed_hardness} does not apply to cacti, since they are not \emph{special}, and their reconfiguration complexity remains an interesting direction for future work.

\begin{figure}[t]
\centering
\begin{tikzpicture}[
    every node/.style={circle, draw, fill=white, inner sep=1pt, minimum size=4pt, font=\footnotesize},
    bg/.style={gray!55, line width=0.6pt},
    src/.style={blue, line width=1.6pt},
    tgt/.style={red!80!black, line width=1.6pt},
    scale=1.0
]

\begin{scope}[xshift=0cm]
    \node (r1) at (-1,1)  {$r_1$};
    \node (r2) at (1,1)   {$r_2$};
    \node (r3) at (1,-1)  {$r_3$};
    \node (r4) at (-1,-1) {$r_4$};
    \node (h)  at (0,0)   {$h$};

    \draw[bg] (r1) -- (r2) -- (r3) -- (r4) -- (r1);
    \draw[bg] (h) -- (r1); \draw[bg] (h) -- (r2);
    \draw[bg] (h) -- (r3); \draw[bg] (h) -- (r4);

    \draw[src] (r1) -- (r2); \draw[src] (r2) -- (h); \draw[src] (h) -- (r1);
    \draw[src] (r3) -- (r4); \draw[src] (r4) -- (h); \draw[src] (h) -- (r3);

    \node[draw=none, fill=none] at (0,-1.6) {$E_s$};
\end{scope}

\begin{scope}[xshift=3.6cm]
    \node (s1) at (-1,1)  {$r_1$};
    \node (s2) at (1,1)   {$r_2$};
    \node (s3) at (1,-1)  {$r_3$};
    \node (s4) at (-1,-1) {$r_4$};
    \node (g)  at (0,0)   {$h$};

    \draw[bg] (s1) -- (s2) -- (s3) -- (s4) -- (s1);
    \draw[bg] (g) -- (s1); \draw[bg] (g) -- (s2);
    \draw[bg] (g) -- (s3); \draw[bg] (g) -- (s4);

    \draw[tgt] (s2) -- (s3); \draw[tgt] (s3) -- (g); \draw[tgt] (g) -- (s2);
    \draw[tgt] (s1) -- (s4); \draw[tgt] (s4) -- (g); \draw[tgt] (g) -- (s1);

    \node[draw=none, fill=none] at (0,-1.6) {$E_t$};
\end{scope}

\end{tikzpicture}
\caption{Illustration of the instance $(G, E_s, E_t)$ on $G = W_5$ with a minimum
required buffer of $1$. The source $E_s$ is shown in {\color{blue}blue} and the
target $E_t$ in {\color{red!80!black} red}.}
\label{fig:w5_example}
\end{figure}

Before presenting our result, we make an observation that makes cacti an interesting candidate to study. 
Consider the instance $(G, E_s, E_t)$ where $G = W_5$ is the wheel on $5$ vertices, with rim $\set{r_1, r_2, r_3, r_4}$ and hub $h$. 
Let $E_s = \set{r_1 r_2,\, r_2 h,\, h r_1,\, r_3 r_4,\, r_4 h,\, h r_3}$ consist of two opposite triangles, and let $E_t = \set{r_2 r_3,\, r_3 h,\, h r_2,\, r_1 r_4,\, r_4 h,\, h r_1}$ consist of the other two. See Figure \ref{fig:w5_example} for an illustration.
This instance is not reconfigurable in the usual setting, but becomes reconfigurable with one extra buffer space. 
We will later show that a constant-sized buffer suffices for reconfiguring cacti whose source and target are triangular cacti, one of the simplest forms of cacti in which each cycle is a triangle and each edge is part of a triangle.

\begin{restatable}{theorem}{buffercactus}\label{thm:buffer_cactus}
    $(\ast)$ The minimum required buffer for any instance $(G, E_s, E_t)$ of \textsc{cacti reconfiguration} is at most $O(c_{max})$, where $c_{max}$ is the larger of the number of cycles in $E_s$ or $E_t$.
    Furthermore, if we restrict $E_s$ and $E_t$ to be triangular cacti, the minimum required buffer is at most $6$.
\end{restatable}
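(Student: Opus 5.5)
The first bound should follow from a matroid argument once the cycles are opened. Let $c_s$ and $c_t$ be the numbers of cycles in $E_s$ and $E_t$, and let $c_{\max}=\max(c_s,c_t)$.

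\textbf{Step 1 (open the cycles).} Because each edge of a cactus lies in at most one cycle, removing one edge from every cycle of $E_s$ leaves a forest $F_s$ with $|F_s|=|E_s|-c_s$. Removing edges never creates a cycle, so every intermediate set is again a cactus. I will therefore put these $c_s$ edges into the buffer one at a time, using rule (2) of Definition~\ref{def:buffer_reconf}. I then remove $c_{\max}-c_s$ further edges of $F_s$ into the buffer, which gives a forest $F'_s$ of size $|E_s|-c_{\max}$ and uses exactly $c_{\max}$ buffer slots.

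\textbf{Step 2 (target side).} Symmetrically, I choose a forest $F'_t\subseteq E_t$ of size $|E_t|-c_{\max}=|F'_s|$ by deleting one edge from each cycle of $E_t$ and then $c_{\max}-c_t$ further edges.

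\textbf{Step 3 (forest to forest).} Forests are the independent sets of the graphic matroid of $G$, and every forest is a cactus. I then go from $F'_s$ to $F'_t$ by token jumps only, by induction on $|F'_s\setminus F'_t|$. Pick $e\in F'_t\setminus F'_s$. If $F'_s+e$ is acyclic, jump any token from $F'_s\setminus F'_t$ onto $e$. Otherwise $F'_s+e$ contains a unique cycle. That cycle has an edge $f\notin F'_t$, since $F'_t$ is acyclic, and $f\neq e$. The set $F'_s-f+e$ is a forest, so the jump from $f$ to $e$ is valid. Each step shrinks the symmetric difference.

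\textbf{Step 4 (refill).} I retrieve the $c_{\max}$ buffered tokens onto the edges of $E_t\setminus F'_t$. Every intermediate set is a subset of $E_t$, hence a cactus, since subgraphs of cacti are cacti. This gives the $O(c_{\max})$ bound, in fact exactly $c_{\max}$.

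For the triangular-cacti bound of $6$, buffering one edge per triangle is too expensive, so the plan is a local, triangle-by-triangle exchange. I will maintain the invariant that the current configuration is
\[
(\text{a set of complete triangles of } E_t)\ \cup\ (\text{a set of complete triangles of } E_s),
\]
with the two sets edge-disjoint, at most two triangles open at any time, and all other edges forming a forest. In each phase I pick a triangle $T$ of $E_t$ not yet present and dismantle triangles of the current configuration that share edges or conflict with $T$. Dismantling means opening each such triangle by buffering one edge, so that it becomes a path. I then use the matroid exchange of Step 3, restricted to the acyclic part, to route tokens onto the edges of $T$. Finally I close $T$ and retrieve buffered tokens to re-close or consume the opened triangles.

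The key counting claim is that at most two triangles are ever open, each costing at most $3$ buffered tokens when fully dismantled, for a total of at most $6$. The remaining tokens only ever lie on a forest or on complete triangles, so the configuration stays a cactus throughout.

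I expect the main obstacle to be this constant-buffer phase. Closing $T$ may create a second cycle through an edge already in another triangle, and a token jumped onto a forest edge may close a cycle sharing an edge with a current triangle. I would handle this by choosing the next target triangle $T$ to be a leaf of the block structure of the remaining target cactus, and by always opening the source triangle adjacent to $T$ before inserting $T$'s edges. The degree-of-freedom argument of Step 3 then needs to be redone inside the matroid obtained by contracting the already-completed triangles, so that exchanges never create a cycle meeting an existing triangle. I would also check the wheel example of Figure~\ref{fig:w5_example} to see that some buffer is genuinely required.
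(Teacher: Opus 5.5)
Your first bound (Steps 1--4) is correct and follows the paper's argument. You buffer $c_{\max}$ edges to reach equal-size forests, reconfigure them by graphic-matroid exchange, and refill onto $E_t\setminus F'_t$. Your explicit exchange step and the padding when $c_s\neq c_t$ (possible since $|E_s|=|E_t|\ge 3c_{\max}$) only spell out what the paper leaves implicit.

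The bound of $6$ for triangular cacti, however, is not established. It rests entirely on the unproved claim that at most two triangles are ever open and that freed tokens can then be routed onto $T$ by graphic-matroid exchanges.

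\emph{Why this is hard.} Triangular cacti are not independent sets of a matroid. They are solutions of a matroid parity problem, where each triangle is a pair of edges spanning its three vertices, and such solutions have no exchange axiom. Already in the $W_5$ example of Figure~\ref{fig:w5_example}, no triangle of $E_s$ can be traded directly for a triangle of $E_t$.

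\emph{Where your plan breaks.} Inserting $T=abc$ requires $a,b,c$ to lie in distinct components of the rest of the configuration. Separating them across complete source triangles costs two tokens per triangle, so a phase can free four tokens while $T$ absorbs only three. You also cannot re-close the dismantled triangles afterwards: that reconnects $a,b,c$ and puts an edge of $T$ on a second cycle. Their leftover tokens must therefore be parked somewhere. Nothing in your plan (leaf ordering, contracting finished triangles) prevents this surplus from accumulating over many phases, so the invariant ``at most two open triangles'' is a conjecture, not a step.

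\emph{The missing ingredient} is the one the paper uses.
\begin{enumerate}
\item First move one whole triangle of $E_s$ into the buffer. The remaining $k-1$ triangles then form a non-maximum parity solution, since $E_s$ itself has $k$.
\item Invoke the matroid parity reconfiguration theorem of Bousquet et al. It guarantees that two equal-size parity solutions admitting a strictly larger one are connected by single triangle-for-triangle swaps.
\item Realize each swap by buffering the old triangle's three tokens and placing them on the new triangle.
\item At the end, place the stored triangle on the triangle removed from $E_t$.
\end{enumerate}
This gives $3+3=6$. Without this theorem, or an equivalent augmenting-path argument for graphic matroid parity, your constant-buffer phase remains unproved.
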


The main idea to proving the first claim is to reduce to the reconfiguration of forests, which is always possible without any extra buffer when the source and target are forests of the same graph and of equal size. 
This is because we can view the forests as independent sets of the graphic matroid, so reconfiguration follows from the symmetric exchange property for independent sets of equal size.
Removing an appropriate set of $c_{\max}$ edges from the source yields a forest subgraph $F_s$ with $\abs{E_s} - c_{\max}$ edges, and similarly removing an appropriate set of $c_{\max}$ edges from the target yields a forest subgraph $F_t$ with $\abs{E_t} - c_{\max}$ edges. 
We move the $c_{\max}$ removed edges of the source into the buffer to obtain $F_s$, reconfigure $F_s$ into $F_t$, and then
reinsert the edges in buffer to the missing target edges $E_t \setminus F_t$. 
This requires an extra buffer of size $c_{\max}$.

For the second claim, the main idea to proving it is to view a triangular cactus as a solution to a matroid parity instance (see \cite{lovasz_86, szigeti_03} for details) and to apply the \textsc{Matroid Parity Reconfiguration} result of \cite{bousquet_23}, which guarantees that two parity solutions of the same size are reconfigurable whenever a parity solution of larger size exists. 
Given an instance $(G, E_s, E_t)$, we construct $E_s'$ and $E_t'$ by removing one triangle from $E_s$ and $E_t$ respectively. 
Viewing $E_s'$ and $E_t'$ as parity solutions, the instance $(G, E_s', E_t')$ is always reconfigurable, since $E_s$ itself is a strictly larger parity solution. 
Each matroid parity reconfiguration step can be viewed as the movement of a single triangle, which can be realize using $3$ extra buffer space in our setting, storing the three tokens of the triangle in the buffer and then placing them onto the target triangle. 
Together with the $3$ buffer space holding the initially removed triangle, an extra buffer of size $6$ suffices.

\section{Conclusion}\label{sec:conclusion}

Our work showed that reconfiguration of bounded pathwidth, bounded treewidth, and planar subgraphs remains intractable. 
Along the way, we proved a general theorem for establishing hardness in \emph{special} minor-closed graph classes, which we expect to find utility beyond the results shown here. 
In light of these hardness results, we introduced the extra-buffer setting to understand how additional space can turn a non-reconfigurable instance into a reconfigurable one. 
We showed that some instances require $\Omega(n)$ buffer to reconfigure subgraphs of bounded pathwidth, bounded treewidth, and planar graphs, whereas $O(1)$ buffer suffices for reconfiguring cacti when the source and target are both triangular cacti.

Future work could identify other \emph{special} minor-closed graph classes to which Theorem \ref{thm:minor_closed_hardness} applies, and more generally, devise a similar general theorem for showing PSPACE-completeness. 
Another natural direction is to consider what other graph properties are intractable normally, but becomes reconfigurable with a constant, but non-zero, extra buffer space. 
It would be interesting to determine whether \textsc{Cacti Reconfiguration} requires only constant buffer space in general.
More broadly, we hope our extra resource setting will lead to the design of efficient algorithms parameterized by the amount of extra buffer space needed.



\bibliography{bib}

\appendix

\section{Graphs with Bounded Pathwidth}\label{app:path_width_k}

Our goal in this section is to show a complete proof of the following theorem:

\pathwidthbound*

We will organize the proof into three sections. 
First, we will provide the proof for Lemma \ref{lem:spike_hamiltonian} and introduce two technical lemmas about pathwidth in Section \ref{app:technical_lemmas} that will be used extensively later in the proof. 
In Section \ref{app:proof_setup}, we set up the necessary definitions for the proof, such as the reduction graph for the reconfiguration problem, and prove some results about them. and finally, we will proceed to prove the main theorem in Section \ref{app:main_result}.

\subsection{Technical Lemmas}\label{app:technical_lemmas}

\spikedhamiltonian*

\begin{proof}
    Note that, as a caterpillar is a tree, and $H$ uses at least one spike of each vertex in $G$, we must have that $H$ contains a spanning tree of $G$. 
    In $H$, consider following the path starting at $p_1$ until the path diverges (some vertex has degree $\geq 3$), or until the path ends (some vertex has degree $1$).
    
    Suppose for a contradiction that the path diverges at a vertex $t$ of degree $\geq 3$.
    Note that the earliest vertex of divergence is $v$.
    As the path has length $\geq 2$, let $w$ be the vertex encountered in the path prior to $t$ and note that $w$ has degree $\geq 2$.
    Let $t u_1, t u_2$ be two edges from the divergence of the path.
    Then, as each vertex of $G$ uses at least one spike, in particular $u_1$ and $u_2$ each use a spike and therefore have degree $\geq 2$.
    Then this means that $w, t, u_1, u_2$ are all in the spine of $H$.
    But the spine is a path, and $t$ still has degree $\geq 3$, a contradiction.
    
    Therefore, we must have that the path ends.
    So the part of $H$ forming a spanning tree of $G$ actually forms a path using all the vertices of $G$, with the vertex $v$ as one of its endpoints.
    So $G$ contains a Hamiltonian $v$-path. 
\end{proof}

\begin{lemma}\label{lem:pw_lower_bound}
    Let $G$ be a connected graph, $k \in \mathbb{Z}_{\geq 1}$.
    If there exists a vertex $v$ such that $G \setminus \{v\}$ results in at least $3$ connected components $C_1, C_2, C_3$ where $pw(C_i) \geq k$ for all $i \in [3]\}$, then $pw(G) \geq k+1$.
\end{lemma}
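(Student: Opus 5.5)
This is the classical "three branches" lemma, in the style of Scheffler/Ellis–Sudborough–Turner. I will argue by contradiction. Suppose $pw(G) \leq k$, and fix a path decomposition $(B_1, \ldots, B_m)$ of $G$ of width at most $k$. For each $i \in [3]$, restricting every bag to $V(C_i)$ gives a path decomposition of $C_i$ of width at most $k$. Since $pw(C_i) \geq k$, this restriction has some bag of size exactly $k+1$. That bag must be a full bag $B_{j_i}$ of the original decomposition with $B_{j_i} \subseteq V(C_i)$. In particular $v \notin B_{j_i}$, and $B_{j_i}$ meets no other component $C_{i'}$.

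Relabel so that $j_1 < j_2 < j_3$, and focus on the middle component $C_2$. The plan is to show that $C_2$ has a path decomposition of width at most $k-1$, contradicting $pw(C_2) \geq k$.

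First, I use connectivity. $G$ is connected and $v$ is a cut vertex separating the $C_i$, so each $C_i$ contains a neighbour of $v$. Since $C_1$ and $C_3$ are connected, the union of the bags meeting $C_1$ forms a contiguous interval containing $j_1$, and similarly for $C_3$ around $j_3$. Consider the subgraph $G[V(C_1) \cup \{v\} \cup V(C_3)]$. It is connected, because $v$ has a neighbour in both $C_1$ and $C_3$. Hence the bags meeting this vertex set form one contiguous interval containing both $j_1$ and $j_3$. Therefore every bag $B_j$ with $j_1 \leq j \leq j_3$ contains a vertex of $C_1 \cup C_3 \cup \{v\}$, and none of these vertices lies in $C_2$.

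Next, I handle the bags of the restricted decomposition $B_j \cap V(C_2)$. If $j$ lies in $[j_1, j_3]$, the bag has size at most $|B_j| - 1 \leq k$. If $j$ lies outside $[j_1, j_3]$, then $B_j \cap V(C_2)$ may have size $k+1$. I claim such bags can be discarded, which I expect to be the main obstacle. Every vertex of $C_2$ appears in some bag, and $C_2$ has a vertex in $B_{j_2}$, which lies strictly inside the interval. Since $C_2$ is connected, its bags form a contiguous interval $I_2$ containing $j_2$. The difficulty is that $I_2$ might extend past $j_1$ or $j_3$.

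To deal with this, I will choose the witnesses $j_1, j_3$ more carefully instead of arbitrarily. Take the component whose full bags appear earliest and the one whose full bags appear latest. Then argue with the intervals of the three connected sets $C_i \cup \{v\}$: each is a contiguous interval of bags, and any two of them intersect only in bags containing $v$. If $C_2$ had a full bag beyond $j_3$, we could swap the roles of $C_2$ and $C_3$. So I pick $C_2$ to be a component none of whose full bags is extreme: take $j_1$ as the minimum full-bag index over all components and $j_3$ as the maximum. Since there are three components and at most two of them own these extremes, some component $C_2$ has all its full bags strictly between $j_1$ and $j_3$.

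With this choice, I truncate the restricted decomposition of $C_2$ to the bags with indices in $[j_1, j_3]$. I need every vertex and edge of $C_2$ to still be covered. For an edge $xy$ of $C_2$ covered only at some index $j > j_3$, contiguity of $I_2$ together with $j_2 < j_3$ forces $x$ or $y$ to persist through $B_{j_3}$. But $B_{j_3} \subseteq V(C_3)$, which is a contradiction. A symmetric argument handles indices below $j_1$. The result is a decomposition of $C_2$ of width at most $k-1$, contradicting $pw(C_2) \geq k$, and hence $pw(G) \geq k+1$.
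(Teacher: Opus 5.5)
Your proof is correct and is essentially the paper's argument: take full bags $B_{j_1}\subseteq V(C_1)$, $B_{j_2}\subseteq V(C_2)$, $B_{j_3}\subseteq V(C_3)$ with $j_1<j_2<j_3$, and use connectivity through $v$ to force the middle bag to meet $V(C_1)\cup V(C_3)\cup\{v\}$. Your contiguity-of-intervals justification for that step is cleaner than the paper's, which asserts without justification that some vertex of $B_{i_1}$ shares a bag with $v$. Note, however, that the contradiction is already complete once you show every $B_j$ with $j_1\le j\le j_3$ meets $V(C_1)\cup V(C_3)\cup\{v\}$, since $B_{j_2}\subseteq V(C_2)$. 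The truncation, the re-choice of extreme witnesses, and the ``main obstacle'' are therefore unnecessary: the bag-interval $I_2$ of $C_2$ contains $j_2$ but neither $j_1$ nor $j_3$, so it lies inside $(j_1,j_3)$ automatically.
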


\begin{proof}
    This roughly follows the proof of Theorem 4 in \cite{scheffler_90}.
    Suppose, for a contradiction, that there exist three components $C_1$, $C_2$, $C_3$, such that $pw(C_i) \geq k$, but $pw(G) \leq k$. 
    It suffices to consider the case where $pw(G) = k$ since $C_1 \subset G$ and $pw(C_1) \geq k$.
    If the pathwidth of $pw(G) = k$, and $pw(C_1) \geq k$, $pw(C_2) \geq k$, and $pw(C_3) \geq k$, then any optimal path-decomposition for $G$ will contain a bag $B_{i_1}$ with size $k + 1$ where $B_{i_1} \subseteq V(C_1)$, and similarly for $B_{i_2} \subseteq V(C_2)$, and $B_{i_3} \subseteq V(C_3)$.
    Without loss of generality, let's assume $i_1< i_2 < i_3$.
    
    Note that $B_{i_1}$ and $B_{i_3}$ belong to the same connected component of $G \setminus B_{i_2}$, since $v \not \in B_{i_2}$.
    Then, by property $1$ of path-decomposition, there must be some vertex $u \in V(B_{i_1})$ and $w \in V(B_{i_3})$ where $\set{u,v} \subseteq B_{j_1}$ and $\set{w,v} \subseteq B_{j_2}$ for some $j_1, j_2 \in \mathbb{Z}_{\geq 1}$. 
    Specifically, we have $j_1 < i_2$ due to property $2$ of path-decomposition, because otherwise, we would have $u \in B_{i_1}$, $u \in B_{j_1}$, $u \not \in B_{i_2}$, contradicting $i_1 < i_2 < j_1$.
    Similarly, for $j_2$, we have $i_2 < j_2$, because otherwise, we would have $w \in B_{j_2}$, $w \in B_{i_3}$, $u \not \in B_{i_2}$, contradicting $j_2 < i_2 < i_3$.
    However, this is again a contradiction, because we now have $v \in B_{j_1}$, $v \in B_{j_2}$, $v \in B_{i_2}$, but $j_1 < i_2 < j_2$, which fails to satisfy property $2$ of path-decomposition.
    So, $pw(G) \geq k+1$ as desired. 
    
\end{proof}

\begin{lemma}\label{lem:pw_structural}
    Let $k \in \mathbb{Z}_{\geq 1}$.
    Let $T$ be a perfect ternary tree of depth $k$, with root $r$ and leaf vertices $\set{p_1, \cdots, p_{3^k}}$, and let $G_i$ be some connected graph for each $i \in [3^k]$.
    Define 
    \[
        G = T \oplus_{p_1, v_1} G_1 \oplus_{p_2, v_2} \cdots \oplus_{p_{3^{k}}, v_{3^{k}}} G_{3^{k}},
    \]
    where $v_i$ is some vertex in $V(G_i)$, and let 
    \[
        d_{min} := \min_{i \in [3^k]} pw(G_i) \text{ and } d_{max} := \max_{i \in [3^k]} pw(G_i),
    \]
    then the following holds:
    \begin{enumerate}
        \item $pw(G) \geq k + d_{min}$.
        \item $pw(G) \leq k + d_{max} + 1$.
    \end{enumerate}
    Furthermore, suppose there exists a $v_i$-path decomposition for each $G_i$, then the following holds as well:
    \begin{enumerate}
        \setcounter{enumi}{2}
        \item $pw(G) \leq k + d_{max}$, and there exists a $r$-path decomposition of width at most $k + d_{max}$.
        \item If $|d_{max} - d_{min}| \leq 1$, then $pw(G) = k + d_{min}$.
    \end{enumerate}
\end{lemma}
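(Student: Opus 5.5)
The plan is induction on the depth $k$ for all four items. For $j \le k$ and a node $x$ at depth $k-j$ of $T$, write $G^{(x)}$ for the subtree rooted at $x$ (a perfect ternary tree of depth $j$) with the graphs $G_i$ glued at its leaves. The base case $j=0$ is just $G^{(x)} = G_i$. For general $j$, deleting $x$ from $G^{(x)}$ leaves three components. Each component is the corresponding child subtree $G^{(y)}$, of depth $j-1$. Since $G^{(x)}$ is connected (each $G_i$ is connected), Lemma~\ref{lem:pw_lower_bound} applies.

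\textbf{Lower bound (item 1).} By induction, $pw(G^{(y)}) \ge (j-1) + d_{min}$ for each child $y$. Lemma~\ref{lem:pw_lower_bound} then gives $pw(G^{(x)}) \ge j + d_{min}$. When $d_{min}=0$ the lemma's hypothesis needs $k \ge 1$, so I will handle the first level separately. For $j=1$ with $d_{min}=0$, the bound $pw \ge 1$ holds trivially because the graph contains an edge. Item 1 then follows at $x=r$, $j=k$.

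\textbf{Upper bounds (items 2 and 3).} I first prove item 3, using the $v_i$-path decompositions, by the standard concatenation construction. Suppose that for each child $y_1,y_2,y_3$ of $x$ we have a $y_\ell$-path decomposition of $G^{(y_\ell)}$ of width at most $(j-1)+d_{max}$.
\begin{enumerate}
\item Add $x$ to every bag of all three decompositions; this raises the width by one.
\item Concatenate the three decompositions in order.
\item Prepend bags $\{x\}$ and $\{x,y_1\}$ so that $x$ lies in the first bag. The edges $xy_2$ and $xy_3$ are covered because $y_\ell$ is in the first bag of its block and $x$ is in every bag.
\end{enumerate}
Each $y_\ell$ appears only in its own block, so the contiguity condition holds. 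The resulting width is at most $j+d_{max}$, and $x$ is in the first bag, so we obtain an $x$-path decomposition. The base case is exactly the given $v_i$-path decompositions.

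Item 2 uses the same construction without the first-bag assumption. At the leaves, take an arbitrary optimal decomposition of $G_i$ and add $v_i$ to every bag, which costs $+1$ and produces a $v_i$-path decomposition of width at most $d_{max}+1$. Running the induction from there gives $k+d_{max}+1$.

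\textbf{Equality (item 4).} Item 4 as literally stated, with $|d_{max}-d_{min}|\le 1$, looks problematic: items 1 and 3 only give $k+d_{min} \le pw(G) \le k+d_{max}$. I would check whether the intended hypothesis is $d_{max}=d_{min}$, in which case the statement is immediate. Otherwise I would try to sharpen the lower bound using the $v_i$-decompositions, but I expect the statement needs care or rewording here. I regard this step, together with the edge case of Lemma~\ref{lem:pw_lower_bound} when $d_{min}=0$, as the main obstacle.
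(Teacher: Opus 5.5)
Your arguments for items 1--3 are fine. Item 1 is the paper's induction via Lemma~\ref{lem:pw_lower_bound}, and your one-line treatment of $d_{min}=0$ at the first level is correct, so that edge case is not an obstacle. Your recursive `add the root to every bag' construction gives the same bounds as the paper's bags $B_i\cup D_{i,j}$ and $(B_i\setminus\{v_i\})\cup D_{i,j}$. The genuine gap is item 4 in the case $d_{max}=d_{min}+1$, which you leave unproved and suggest may need rewording. It is true as stated, and your proposed remedy points the wrong way. The lower bound $k+d_{min}$ from item 1 is already tight; what is missing is a better \emph{upper} bound. Your construction cannot give it: inserting $x$ into every bag of all three child blocks costs $+1$ at every level on top of the worst child, so it never gets below $k+d_{max}$. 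In fact your item-3 argument never uses the first-bag property, since $x$ lies in every bag. The $v_i$-path-decomposition hypothesis does its real work in item 4.

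The missing idea is that the root need only be inserted into the bags of \emph{one} child block. Let $r_1,r_2,r_3$ be the children of $r$ and $C_1,C_2,C_3$ the corresponding components of $G\setminus\{r\}$, labelled so that $C_1$ contains a leaf graph of pathwidth $d_{min}$. Build the decomposition in this order:
\begin{enumerate}
\item an $r_2$-path decomposition of $C_2$ listed in reverse, so that $r_2$ lies in its last bag;
\item the bag $\{r,r_2\}$;
\item the bags $\{r\}\cup D$, for each bag $D$ of an optimal decomposition of $C_1$;
\item the bag $\{r,r_3\}$;
\item an $r_3$-path decomposition of $C_3$.
\end{enumerate}
The edges $rr_2$ and $rr_3$ are covered by the two connector bags and $rr_1$ by the middle block, and $r$ occupies a contiguous interval. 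By induction on $k$, $pw(C_1)\le (k-1)+d_{min}$: this is trivial for $k=1$, where $C_1$ is a single leaf graph, and otherwise item 4 applies to $C_1$, whose leaf pathwidths lie in $\{d_{min},d_{min}+1\}$. So the middle bags have size at most $k+d_{min}+1$. The outer blocks have width at most $(k-1)+d_{max}=k+d_{min}$ by item 3; for $k=1$ they are just the given $v_i$-path decompositions of the two outer leaf graphs. Hence $pw(G)\le k+d_{min}$. This case matters downstream. With $d_{min}=0$ and $d_{max}=1$ it is exactly what the paper uses to show that an almost $(k,l)$-claw has pathwidth $k$ (Corollary~\ref{cor:properties_of_claw}). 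It is also what bounds the pathwidth of the intermediate configurations in the forward direction of Theorem~\ref{thm:pathwidth_k}.
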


\begin{proof}
    We will prove each item individually.

    \vspace{2mm}
    \noindent
    \textbf{Point 1:} $pw(G) \geq k + d_{min}$
    \vspace{2mm}

    We proceed by induction on $k$. 
    Let $k=1$ and $r$ be the root of $T$, the depth-$1$ perfect ternary tree.
    Consider $G \setminus \{r\}$ and the three resulting components $C_1$, $C_2$, $C_3$.
    Note that the $C_i$'s are exactly the connected graphs $G_i$ that we $1$-summed to the leaves of $T$.
    Furthermore, $pw(C_i) \geq d_{min}$ by definition.
    So, by Lemma \ref{lem:pw_lower_bound}, $pw(G) \geq d_{min} + 1 = k + d_{min}$.
    Now, assume that this holds for all $k=n$ for $n \geq 1$.
    We want to show that this holds for $k = n+1$, namely, $pw(G) \leq k+1 + d_{min}$.
    Again, consider the $3$ components $C_1$, $C_2$, $C_3$ resulting from $G \setminus \{r\}$, where $r$ is the root of the depth-$k+1$ perfect ternary tree.
    By induction, we know that $pw(C_i) \geq k + d_{min}$ for all $i \in [3]$, since each $C_i$ is a perfect ternary tree of depth $k$ where its leaves are $1$-summed with graphs with pathwidth at least $d_{min}$.
    Using Lemma \ref{lem:pw_lower_bound} again, we have $pw(G) = k + d_{min} + 1 = k + 1 + d_{min}$ as required.

    \vspace{2mm}
    \noindent
    \textbf{Point 2:} $pw(G) \leq k + d_{max} + 1$
    \vspace{2mm}
    Here, we provide an explicit construction for the path-decomposition of $G$ with width $k + d_{max} + 1$.
    First, we show that any depth $k$ perfect ternary tree has a path-decomposition of width $k$.
    Let the vertices of $T$ be labelled as follows. 
    \[
        V(T) = \{v_{0,1}, v_{1,1}, v_{1,2}, v_{1,3}, v_{2,1}, \cdots, v_{2,9}, \cdots, v_{k,1}, \cdots, v_{k,3^{k}}\}
    \]
    where $v_{0,1}$ is the root node of $T$, and each $v_{i,j}$ denotes the $j$-th vertex at the $i$-th layer.
    Furthermore, the edges are defined in a natural way with respect to the vertex ordering in each layer, namely, 
    \[
        E(T) = \bigcup_{i \in [k-1]} \bigcup_{j \in [3^i]} \{ v_{i,j} v_{i+1, 3^j-2}, ~ v_{i,j} v_{i+1, 3^j-1}, ~ v_{i,j} v_{i+1, 3^j} \}.
    \]
    In particular, one such path-decomposition to consider vertices in the path in each iteration of a depth-first-search over the tree $T$.
    Concretely, we have a width-$k$ path-decomposition $(B_i: i \in [3^k])$ of $T$, where
    \begin{align*}
        B_1 &= \{v_{0,1}, v_{1,1}, v_{2,1} \cdots v_{k,1}\} \\
        B_2 &= \{v_{0,1}, v_{1,1}, v_{2,1} \cdots v_{k,2}\} \\
        B_3 &= \{v_{0,1}, v_{1,1}, v_{2,1} \cdots v_{k,3}\} \\
        B_4 &= \{v_{0,1}, v_{1,1}, v_{2,1} \cdots v_{k-2, 1}, v_{k-1, 2}, v_{k,4}\} \\
        &\vdots \\
        B_{3^k} &= \{v_{0,1}, v_{1,3}, v_{2,9} \cdots v_{k-2, 3^{k-2}}, v_{k-1, 3^{k-1}}, v_{k,3^{k}}\}
    \end{align*}
    
    For each $i \in [3^k]$, since $pw(G_i) \leq d_{max}$ by definition, we know that there exists a path-decomposition of $G_i$ of width $d_{max}$, say $(D_{i,1}, \cdots, D_{i, m_i})$ for some $m_i \in \mathbb{Z}$, where each $D_{i,j} \subseteq V(G_i)$ with $|D_{i,j}| \leq d_{max}+1$.
    Then, to get the desired path-decomposition of $G$, we can glue the decompositions of $T$ and each $G_i$ together to get $(B_{i,j}: i \in [3^k], j \in [m_i])$, where 
    \begin{equation}\label{eqt:bag_ij}
        B_{i,j} = B_i \cup D_{i,j}.
    \end{equation} 
    Here, we implicitly assume an ordering of indices where $(i,j) < (i', j')$ when $i < i'$ or when $i = i'$ and $j < j'$.
    
    We will show that this is a valid path-decomposition. 
    Note that for each edge $uv \in E(G)$, we have $uv \in E(T)$ or $uv \in E(G_i)$ for some $i \in [3^k]$. 
    Either way, $\{u,v\}$ will be included in some $B_i$ or $D_{i,j}$, which means it is included in some $B_{i,j}$ by construction, so the first property of Definition \ref{def:pathwidth} is satisfied. 
    For the second property, we suppose, for the contrary, that there exists some indices $(i_1, j_1), (i_2, j_2), (i_3, j_3)$, where $(i_1, j_1) < (i_2, j_2) < (i_3, j_3)$, and some $v \in V(G)$ such that $v \in B_{i_1, j_1}$ and $v \in B_{i_3, j_3}$, but $v \not\in B_{i_2, j_2}$. 
    There are two cases to consider. 
    If $v \in V(T)$, then this would imply $v \in B_{i_1}$, $v \not\in B_{i_2}$, and $v \in B_{i_3}$, which contradicts that $(B_i: i\in[3^k])$ is a valid path-decomposition of $T$.
    Otherwise, if $v \in V(G_i) \setminus V(T)$ for some $i \in [3^k]$, then this would imply $v \in D_{i_1, j_1}$, $v \not\in D_{i_2, j_2}$, and $v \in D_{i_3, j_3}$. Note that $D_{i,j} \cap D_{i',j'} = \emptyset$ when $i \not= i'$ as they are bags corresponding to path-decompositions of different graphs, which implies $i_1 = i_2 = i_3$.
    However, this now contradicts the fact $(D_{i_1,j}: j \in [m_{i_1}])$ is a path-decomposition of $G_{i_1}$.
    So, the second property is satisfied as well. The third property is clear by construction, and so, $(B_{i,j}: i \in [3^k], j \in [m_i])$ is a valid path-decomposition. Moreover, since 
    \begin{equation}\label{eqt:path_decomp_inequality_1}
        \max_{i,j} |B_{i,j}| \leq \max_{i,j} |B_i \cup D_{i,j}| \leq \max_{i,j} |B_i| + |D_{i,j}| \leq k + d_{max} + 2,        
    \end{equation}
    our decomposition has width at most $k + d_{max} + 1$.

    \vspace{2mm}
    \noindent In the following points, we will add the assumption that \textit{there exists a $v_i$-path decomposition for each $G_i$.}
    
    \vspace{2mm}
    \noindent
    \textbf{Point 3:} $pw(G) \leq k + d_{max}$
    \vspace{2mm}
    
    We will show this by a very similar construction to the one in point $2$. Specifically, instead of constructing each bag $B_{i,j} = B_{i} \cup D_{ij}$ as in Equation \ref{eqt:bag_ij}, we utilize the assumption that there exists a $v_i$-path decomposition for each $G_i$.
    In particular, we define each $B_{i,j}$ as follows
    \begin{equation}\label{eqt:bag_ij_optimized}
        B_{i,j} = (B_{i} \setminus \{v_i\}) \cup D_{ij}.
    \end{equation}
    
    We will show that this is a valid path-decomposition. 
    Note that for each edge $uv \in E(G)$, we have $uv \in E(T)$ or $uv \in E(G_i)$ for some $i \in [3^k]$. 
    Suppose $v_i \in \{u,v\}$ for some $i \in [3^k]$, and without loss of generality, $u = v_i$.
    If $v \in V(G_i)$, then $uv$ is included in $D_{i,1}$, otherwise, if $v \in V(T)$, then $uv \in B_i$ by construction.
    Note that $B_{i,1} = (B_{i} \setminus {v_i}) \cup D_{i,1}$, and by assumption $v_i \in D_{i,1}$, we have $B_i \subseteq B_{i,1}$ and $D_{i,1} \subseteq B_{i,1}$, which implies $uv \in B_{i,1}$.
    Now, suppose $v_i \not \in \{u,v\}$ for any $i \in [3^k]$, the analysis in point (2) follows, where $\{u,v\}$ will be included in some $B_i \setminus \{v_i\}$ or $D_{i,j}$, so $\{u,v\}$ is included in some $B_{i,j}$ by construction.
    Hence, the first property of Definition \ref{def:pathwidth} is satisfied.
    
    For the second property, we suppose, for the contrary, that there exists some indices $(i_1, j_1), (i_2, j_2), (i_3, j_3)$, where $(i_1, j_1) < (i_2, j_2) < (i_3, j_3)$, and some $v \in V(G)$ such that $v \in B_{i_1, j_1}$ and $v \in B_{i_3, j_3}$, but $v \not\in B_{i_2, j_2}$. 
    There are three cases to consider here. The first two is same as the previous analysis in point (2).
    \begin{enumerate}[\alph*]
        \item If $v \in V(T) \setminus \{v_i: i \in [3^k]\}$, then this would imply $v \in B_{i_1}$, $v \not\in B_{i_2}$, and $v \in B_{i_3}$, which contradicts that $(B_i: i\in[3^k])$ is a valid path-decomposition of $T$.
        \item If $v \in V(G_i) \setminus \{v_i: i \in [3^k]\}$ for some $i \in [3^k]$, then this would imply $v \in D_{i_1, j_1}$, $v \not\in D_{i_2, j_2}$, and $v \in D_{i_3, j_3}$. Note that $D_{i,j} \cap D_{i',j'} = \emptyset$ when $i \not= i'$ as they are bags corresponding to path-decompositions of different graphs, which implies $i_1 = i_2 = i_3$.
        However, this now contradicts the fact $(D_{i_1,j}: j \in [m_{i_1}])$ is a path-decomposition of $G_{i_1}$.
        \item If $v \in \{v_i: i \in [3^k]\}$, then this would imply $v \in B_{i_1, j_1}$, $v \not\in B_{i_2, j_2}$, and $v \in B_{i_3, j_3}$. Note that by construction, if $v \in B_{i,j}$, then $v \not \in B_{i',j'}$ for any $i' \not= i$ and $j' \in [m_{i'}]$, and so, $i_1 = i_2 = i_3$. 
        Let $i = i_1$.
        By construction, if $v$ is in $B_{i,j}$ for any $j \in [m_{i}]$, $v$ must also be in $D_{i,j}$.
        However, $v \in D_{i, j_1}$, $v \not\in D_{i, j_2}$, and $v \in D_{i, j_3}$ would then contradict that $(D_{i,j}: j \in [m_{i}])$ is a path-decomposition of $G_{i}$.
    \end{enumerate}
    So, the second property is satisfied as well, and thus, $(B_{i,j}: i \in [3^k], j \in [m_i])$ is a valid path-decomposition. 
    Moreover, since 
    \begin{equation}\label{eqt:path_decomp_inequality_2}
        \max_{i,j} |B_{i,j}| \leq \max_{i,j} |B_i \setminus \{v_i\} \cup D_{i,j}| \leq \max_{i,j} |B_i-1| + |D_{i,j}| \leq k + d_{max} + 1,        
    \end{equation}
    our decomposition has width at most $k + d_{max}$.
    Furthermore, this is a $r$-path decomposition, where $r$ is the root of $T$, since $r \in B_1 \subseteq B_{1,1}$ and $r \not= v_1$.

    \vspace{2mm}
    \noindent
    \textbf{Point 4:} If $|d_{max} - d_{min}| \leq 1$, then $pw(G) = k + d_{min}$
    \vspace{2mm}
    
    Let's start with the simple case where $d_{max} = d_{min}$. Since all $G_i$ has a $v_i$-path decomposition, by point $(3)$, we have $pw(G) \leq k + d_{max} = k + d_{min}$. Combine this with point $(2)$, we have $pw(G) = k + d_{min}$ as desired.
    
    Now, on to the case where $d_{max} = d_{min} + 1$. We will first prove the following claim.
    
    We will prove this by induction on $k$.
    Let $k = 1$ and $r$ be the root of $T$, and $r_1, r_2, r_3$ be the three vertices adjacent to $r$.
    In particular, $r_1, r_2, r_3$ are also the leaf vertices of $T$.
    Without loss of generality, suppose $pw(G_1) = d_{min}$ and $pw(G_3) = d_{max}$, while $pw(G_2)$ can be either.
    By assumption, there is a $v_i$-path decomposition for each $G_i$.
    Let $(B_{1,1} \cdots B_{1,m_1})$, $(B_{2,1} \cdots B_{2,m_2})$, $(B_{3,1} \cdots B_{3,m_3})$ be the $v_1, v_2, v_3$-path decompositions for $G_1, G_2, G_3$ respectively.
    Note that by construction of $G$, $v_1, v_2, v_3$ can also be identified as $r_1, r_2, r_3$ due to the $1$-sum operation.
    Then, we have the following path-decomposition $(B_1, \cdots B_m)$ for $G$ where
    \begin{alignat*}{3}
        B_1 &= B_{2,m_2} \qquad & B_{m_2+1} &= \{r,r_2\} \qquad & B_{m_1+m_2+2} &= \{r,r_3\}  \\
        B_2 &= B_{2,m_2-1} \qquad & B_{m_2+2} &= \{r\}\cup B_{1,1} \qquad & B_{m_1+m_2+3} &= B_{3,1}  \\
        &\;\; \vdots && \;\;\vdots && \;\;\vdots \\
        B_{m_2} &= B_{2,1} \qquad & B_{m_1+m_2+1} &= \{r\}\cup B_{1,m_1} \qquad & B_{m_1+m_2+m_3+2} &= B_{3,m_3}.
    \end{alignat*}
    We show that this is a valid path-decomposition (Definition \ref{def:pathwidth}). 
    For the first property, consider any edge $uv \in E(G)$. 
    If $r \in \{u,v\}$, then $uv$ is either $rr_1$, $rr_2$, or $rr_3$. 
    Since $rr_2 \in B_{m_2 + 1}$, $rr_3 \in B_{m_1+m_2+2}$, and $rr_1 \in B_{j}$ for some $j \in [m_2+2, m_1+m_2+1]$, we must have $uv \in B_{i}$ for some $i \in [3^k]$.
    If $r \not \in \{u,v\}$, then $uv \in G_i$ for some $i \in [3]$, which means it is in the path-decomposition of $G_i$, so it is in the path-decomposition of $G$ by construction.
    For the second property, we can easily verify that since $r_2 \in B_{2,1} = B_{m_2}$ (adjacent to the bag $B_{m_2+1} = \{r, r_2\}$), $r_3 \in B_{3,1} = B_{m_1 + m_2 + 3}$ (adjacent to the bag $B_{m_1+m_2+2} = \{r, r_3\}$), and $r \in B_j$ for all $j \in [m_2+1, m_1+m_2+2]$ but nowhere else.
    Note that $\max_{i} |B_i| = \max\{2, d_{max}+1, d_{min}+2\} = d_{min} + 2$, where the second equality follows from $2 \leq d_{min} + 2$ and $d_{min} + 1 = d_{max}$.
    So, the path decomposition above has width $d_{min} + 1 = k + d_{min}$ as required.
    
    Now, assume this holds for all $k = n$ for $n \geq 1$.
    We want to show that this holds for $k = n+1$.
    By assumption, there must be at least one $G_i$ with $pw(G_i) = d_{min}$ with a $v_i$-decomposition for some $i \in [3^k]$.
    Consider $G \setminus \set{r}$, where $r$ is the root of the depth-$k$ ternary tree $T$.
    Let $r_1, r_2, r_3$ be the three vertices adjacent to $r$ and we let $C_1, C_2, C_3$ be the three components resulting from $G \setminus \set{r}$, with $r_1 \in V(C_1)$, $r_2 \in V(C_2)$, and $r_2 \in V(C_2)$, respectively. 
    Without loss of generality, assume $v_i$ belongs to the first component $C_1$.
    By point $(3)$, we have $pw(C_2) \leq (k-1) + d_{max} = k + d_{max} - 1 = k + d_{min}$ with a $r_2$-path decomposition of $C_2$ of width at most $k + d_{min}$, say $(B_{2,1} \cdots B_{2,m_2})$.
    Similarly, we have $pw(C_3) \leq k + d_{min}$ with a $r_3$-path decomposition of $C_3$ of width at most $k + d_{min}$, say $(B_{3,1} \cdots B_{3,m_3})$.
    By induction, we have $pw(C_1) = (k-1) + d_{min} = k + d_{min} - 1$, and thus, a path-decomposition of width $k + d_{min} - 1$, say $(B_{1,1} \cdots B_{1,m_1})$.
    
    Then, using a similar construction as the base case, with each $B_{i,j}$ replaced with the new definition, we have the following path-decomposition for $G$,
    \begin{alignat*}{3}
        B_1 &= B_{2,m_2} \qquad & B_{m_2+1} &= \{r,r_2\} \qquad & B_{m_1+m_2+2} &= \{r,r_3\}  \\
        B_2 &= B_{2,m_2-1} \qquad & B_{m_2+2} &= \{r\}\cup B_{1,1} \qquad & B_{m_1+m_2+3} &= B_{3,1}  \\
        &\;\; \vdots && \;\;\vdots && \;\;\vdots \\
        B_{m_2} &= B_{2,1} \qquad & B_{m_1+m_2+1} &= \{r\}\cup B_{1,m_1} \qquad & B_{m_1+m_2+m_3+2} &= B_{3,m_3}.
    \end{alignat*}
    The analysis of it being a valid path-decomposition is almost identical to the one above.
    Note that $\max_{i} |B_i| = \max\{2, k + d_{min} + 1\} = k + d_{min} + 1$, where the second equality follows from $k \geq 1$. So, the path decomposition above has width $k + d_{min}$ as required.
    
\end{proof}

\subsection{Proof Setup}\label{app:proof_setup}
\begin{definition}
    For $k, l \geq \mathbb{Z}_{\geq 1}$, we can define a \emph{$(k,l)$-claw} $C$ as a perfect ternary tree $T$ of depth $k$ with root $r$. where, for each leaf $v$, we $1$-sum it with a path of length $l$ at its ends. 
    Furthermore, we say each of those paths as a tail of $C$ and $r$ as the root of $C$.
    An \emph{almost $(k,l)$-claw} is a $(k,l)$-claw with one tail removed.
\end{definition}

\begin{proposition}\label{prop:claw_size}
    A $(k,l)$-claw has $\frac{3^{k+1}-1}{2} + 3^kl -1$ edges.
\end{proposition}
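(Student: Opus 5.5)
The plan is to count the edges of the $(k,l)$-claw $C$ in two disjoint groups: those of the underlying perfect ternary tree $T$, and those contributed by the tails. This split is valid because each tail is attached by a $1$-sum at a single vertex. The $1$-sum identifies only a vertex and never an edge, so $E(C)$ is the disjoint union of $E(T)$ and the edge sets of the $3^k$ tails.

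For the tree part, we first count vertices of $T$ layer by layer. A perfect ternary tree of depth $k$ has exactly $3^i$ nodes at depth $i$, for $0 \leq i \leq k$. This holds because maximality of the node count forces every non-leaf node to have exactly three children and every leaf to sit at depth $k$. Summing the geometric series gives
\[
\abs{V(T)} = \sum_{i=0}^{k} 3^i = \frac{3^{k+1}-1}{2}.
\]
Since $T$ is a tree, $\abs{E(T)} = \abs{V(T)} - 1 = \frac{3^{k+1}-1}{2} - 1$.

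For the tail part, $T$ has $3^k$ leaves, namely the nodes at depth $k$. Each leaf is $1$-summed with a path of length $l$, which by the paper's convention has exactly $l$ edges. The tails therefore contribute $3^k l$ edges in total.

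Adding the two counts gives $\frac{3^{k+1}-1}{2} + 3^k l - 1$, as claimed. There is no real obstacle here. The only points needing care are the justification that the layer sizes are exactly $3^i$ (from the definition of a perfect ternary tree as one with the maximum number of nodes) and the convention that a path of length $l$ has $l$ edges.
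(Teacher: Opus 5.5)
Your proof is correct and follows the paper's argument: both split $E(C)$ into the edges of the perfect ternary tree plus the $3^k l$ tail edges and sum a geometric series. The only difference is that the paper counts tree edges directly as $\sum_{i=1}^{k} 3^i$ (one edge per non-root node), whereas you count vertices and subtract one, which is the same computation.
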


\begin{proof}
    There are $(\sum_{i=1}^{k}3^i)$ edges in the perfect ternary tree and $3^k$ paths with $l$ edges, so, by expanding the sum of the geometric series, we have 
    \[
        \sum_{i=1}^{k}3^i + 3^kl = \frac{3^{k+1}-1}{2} + 3^kl - 1
    \]
    edges in total.
    
\end{proof}

\begin{corollary}\label{cor:properties_of_claw}
    For $k, l \in \mathbb{Z}_{\geq 1}$, the following statements about pathwidth are true:
    \begin{enumerate}
        \item A perfect ternary tree of depth $k$ has pathwidth $k$.
        \item A $(k,l)$-claw has pathwidth $k+1$.
        \item An almost $(k,l)$-claw has pathwidth $k$.
    \end{enumerate}
\end{corollary}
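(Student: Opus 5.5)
Each of the three points will be derived from Lemma~\ref{lem:pw_structural}, taking the graphs $G_i$ $1$-summed at the leaves to be trivial graphs, paths, or a mix of the two.

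For point (1), view the perfect ternary tree $T$ of depth $k$ as $T$ with a single-vertex graph $G_i$ glued at each leaf $p_i$, with $v_i=p_i$. Each $G_i$ has pathwidth $0$ and trivially admits a $v_i$-path decomposition, namely the single bag $\set{v_i}$. Then $d_{\min}=d_{\max}=0$, and point (4) of Lemma~\ref{lem:pw_structural} gives $pw(T)=k$. (One could instead use the DFS decomposition from point (2) for the upper bound together with point (1) for the lower bound.)

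For point (2), let $G_i$ be the path of length $l\ge 1$, glued at one of its endpoints $v_i$. A path with at least one edge has pathwidth $1$, and the decomposition with bags $\set{x_1x_2},\set{x_2x_3},\ldots$ starting at the endpoint $v_i=x_1$ is a $v_i$-path decomposition of width $1$. Hence $d_{\min}=d_{\max}=1$, and point (4) gives $pw=k+1$.

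For point (3), the almost claw has one tail removed, which is the same as replacing one $G_i$ by a single vertex. The resulting family has $d_{\min}=0$ and $d_{\max}=1$, every member still has a $v_i$-path decomposition, and $|d_{\max}-d_{\min}|\le 1$. Point (4) then yields $pw=k+d_{\min}=k$.

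The main step to check is that the lemma's hypotheses hold in these degenerate cases. First, a single vertex must count as a connected graph of pathwidth $0$. Second, the induction in point (4), when $d_{\max}=d_{\min}+1$, needs at least one minimum-pathwidth component; the almost claw supplies exactly one. If the appeal to point (4) in case (3) looks shaky, I would instead argue directly. For the lower bound, the almost claw contains $T$ as a subgraph, so $pw\ge k$ by point (1). For the upper bound, I would build the decomposition explicitly as in the point (4) base case: place the branch containing the missing tail in the middle, carrying the root, and sweep the two full branches first and last, using $r_i$-path decompositions of width $k$ obtained from point (3) of the lemma.
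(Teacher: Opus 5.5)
Your proposal is correct and follows the paper's proof. Both arguments instantiate Lemma~\ref{lem:pw_structural} with single-vertex graphs, endpoint-rooted paths, or a mix of the two at the leaves, and in parts (1) and (2) your appeal to point~(4) is the same as the paper's use of points~(1) and~(3), since point~(4) is proved exactly that way when $d_{\max}=d_{\min}$. In part (3), point~(4) (with point~(1) supplying the lower bound) is the right citation, whereas the paper's reference to point~(3) there only gives the upper bound $k+1$.
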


\begin{proof}
    We will address each item individually.
    \begin{enumerate}
        \item We use Lemma \ref{lem:pw_structural} by considering each $G_i$ as a singleton graph (which has pathwidth $0$), then by point $1$ and $3$ in Lemma \ref{lem:pw_structural}, the pathwidth is $k$.
        \item A $(k,l)$ claw $C$ consists of a perfect ternary tree of depth $k$ where each leaf is $1$-summed with a path of length $l$ at one of its end vertex. 
        Since a length $l$ path has pathwidth $1$, and it has a path decomposition of width $1$ starting with the end vertex, so by points $1$ and $3$ of Lemma \ref{lem:pw_structural}, $pw(C) = k + 1$.
        \item An almost $(k,l)$ claw consists of a perfect ternary tree of depth $k$ where each leaf is $1$-summed with a path of length $l$, except for one leaf, say $v$. To use Lemma \ref{lem:pw_structural}, we can view an almost $(k,l)$-claw as $1$-summing a singleton graph to $v$ and a length $l$ path for all other leaves. This means $d_{min} = 0$ and $d_{max} = 1$, so by point 3 Lemma \ref{lem:pw_structural}, we have pathwidth equals to $k$. 
    \end{enumerate}
\end{proof}

Here, we define the reduction graph that forms the input graph for the \textsc{connected $k$-bounded pathwidth reconfiguration} problem in the reduction.

\begin{definition}\label{def:aux-pathwidth}
    Let $G = (V, E)$ be a non-empty graph with $v \in V(G)$ and let $n = \abs{V(G)}$.
    We define the reduction graph $G_{path}(v)$ as follows:
    \begin{itemize}
        \item Let $g(k,l) = 3^{k-1}l^3$ and $f(k,l) = 3^{k-1} \cdot (l-1 + l \cdot g(k,l)) + 1$.
        \footnote{The parameters are made somewhat loose for improved readability.}
        \item Let $C$ be a $(k, f(k,n))$-claw with root $r$ and leaf vertices $\{t_1, \cdots, t_{3^k}\}$.
        \item For all $i \in [3^k]$, define a new graph $G_i$ to be a copy of $G$, where $v_{G_i} \in V(G_i)$ corresponds to $v$ of $G$, and let $S_i$ be a $g(k,n)$-spiked graph of $G_i$ with $v_{i} \in V(S_i)$ corresponding to $v_{G_i}$ of $G_i$.
    \end{itemize}
    Then, $G_{path}(v)$ is formed by performing $C \oplus_{t_{i}, v_{i}} S_i$ iteratively for each $i \in [3^k]$.
\end{definition}

\begin{figure}[!ht]
    \centering
    \includesvg[width=0.7\textwidth]{figures/pathwidth_reduction.svg}
    \caption{Example of the reduction graph $G_{path}(v)$}
\end{figure}

\subsection{Main Result}\label{app:main_result}

\pathwidthbound*

\begin{proof}
    To prove it is $\cNP$-hard, we show a polynomial-time reduction from the \textsc{Hamiltonian $v$-path} problem to the \textsc{connected $k$-bounded pathwidth reconfiguration} problem.
    As mentioned before, it suffices to consider the case where $k \geq 2$, since Theorem \ref{thm:caterpillar} has already addressed the case $k=1$.
    Let $(G, v)$ be an instance of the \textsc{Hamiltonian $v$-path} problem, where $G$ is non-empty.
    Using $G$, we create the reduction graph $G_{path}(v)$ from Definition \ref{def:aux-pathwidth}.
    Let $T$ be the ternary tree in $G_{path}(v)$ and let $P_i$ denote the tail in the $(k, f(k,n))$-claw $C$ that contains $t_i$, for all $i \in [3^k]$. 
    We write $P_{i} = p_{0}^{i} \cdots p_{f(k,n)}^{i}$ to be length $f(k,n)$ path $1$-summed to the leaf $t_i \in V(T)$, and in particular, we assume $p_{f(k,n)}^{i}$ is the vertex $1$-summed with $t_i$. 
    Let
    \[
        E_s = E(C) \setminus E(P_{3^{k-1}+1}) \text{ and } E_t = E(C) \setminus E(P_{1}),
    \]
    our goal is to show that $(G_{path}(v), E_s, E_t)$ is reconfigurable if and only if $G$ has a Hamiltonian path starting at $v$.
    When convenient, we refer to the path $P_1$ as $P_{S \setminus T}$ and $P_{3^{k-1} + 1}$ as $P_{T \setminus S}$, since those are the paths in $E_s \setminus E_t$ and $E_t \setminus E_s$ respectively.
    
    But first, we need to prove a claim that will be useful in the forward direction of the proof.
    \begin{figure}[!tbp]
        \centering
        \subfloat{\includesvg[width=0.45\textwidth]{figures/pathwidth_reduction_source.svg}\label{fig:pathwidth_source}}
        \qquad
        \subfloat{\includesvg[width=0.45\textwidth]{figures/pathwidth_reduction_target.svg}\label{fig:pathwidth_target}}
        \caption{Source subgraph \textcolor{blue}{$E_s$} (left) and target subgraph \textcolor{red}{$E_t$} (right) in $G_{path}(v)$.}
    \end{figure}

    \begin{claim}\label{claim:pw_count}
        Let $M$ be the total number of edges in all of $S_i$ except for the spikes at one of the vertices, namely $M = \left(\sum_{i=1}^{3^{k-1}} \abs{E(S_i)}\right) - g(k,n)$. Then, $M < f(k,n)-1$.
    \end{claim}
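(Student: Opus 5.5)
We bound $M$ directly by counting the edges of a single spiked graph and then multiplying by the number of copies, $3^{k-1}$. Each $S_i$ is the $g(k,n)$-spiked graph of a copy $G_i$ of $G$. By Definition \ref{def:spiked_graph}, $\abs{E(S_i)} = \abs{E(G)} + n\cdot g(k,n)$, where $n = \abs{V(G)}$. Since $G$ is simple, we have $\abs{E(G)} \leq \binom{n}{2} \leq n-1 + \frac{(n-1)(n-2)}{2}$. The cleanest way to line this up with the shape of $f$ is to aim for the bound
\[
\abs{E(S_i)} \leq (n-1) + n\cdot g(k,n) + \binom{n-1}{2}.
\]
Here the term $n-1$ counts a spanning-tree's worth of edges, the term $n\,g(k,n)$ counts the spikes, and $\binom{n-1}{2}$ counts the excess edges of $G$.

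Summing over $i \in [3^{k-1}]$ and subtracting the $g(k,n)$ spikes at one vertex gives
\[
M \leq 3^{k-1}\bigl((n-1) + n\,g(k,n)\bigr) + 3^{k-1}\binom{n-1}{2} - g(k,n).
\]
By definition, $f(k,n) - 1 = 3^{k-1}\bigl(n-1 + n\,g(k,n)\bigr)$. It therefore suffices to show
\[
3^{k-1}\binom{n-1}{2} < g(k,n) = 3^{k-1} n^3 .
\]
This holds for every $n \geq 1$, since $\binom{n-1}{2} < n^3$. The inequality is strict, so we obtain $M < f(k,n)-1$.

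The only delicate point is bookkeeping: we must make sure $\abs{E(G)}$ is charged against the $-g(k,n)$ slack rather than double-counted. The approach is to write $\abs{E(G)} = (n-1) + (\abs{E(G)}-n+1)$ and absorb the excess $3^{k-1}(\abs{E(G)}-n+1) \leq 3^{k-1}\binom{n-1}{2}$ into the subtracted $g(k,n) = 3^{k-1}n^3$. Since the paper notes that its parameters are deliberately loose, this crude estimate should suffice with plenty of room. The degenerate case $n=1$ can be checked directly: there $\abs{E(G)}=0$ and $M = 3^{k-1}g - g < 3^{k-1}g = f-1$.
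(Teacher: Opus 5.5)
Your proposal is correct and follows essentially the same argument as the paper. The paper also bounds $\abs{E(G)}$ by $\abs{E(K_n)} = \frac{n(n-1)}{2}$ and reduces the claim to $3^{k-1}\left(\frac{n(n-1)}{2} - n + 1\right) < g(k,n) = 3^{k-1}n^3$; the term $\frac{n(n-1)}{2} - n + 1 = \frac{(n-1)(n-2)}{2}$ is exactly your excess term. Your separate treatment of $n=1$ is harmless but already covered by the general inequality.
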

    
    \begin{proof}
        Note that since $n \geq 1$,
        \begin{align*}
            (f(k,n)-1) - M &= 3^{k-1} \left[(n-1 + n \cdot g(k,n)) - (n \cdot g(k,n) + \abs{E(G)}\right] + g(k,n) \\
            &\geq 3^{k-1} \left[(n-1 + n \cdot g(k,n)) - (n \cdot g(k,n) + \abs{E(K_n)}) \right] + g(k,n) \\ 
            &= 3^{k-1}[n-1 - \abs{E(K_n)}] + g(k,n) \\
            &= 3^{k-1}(n-1) - 3^{k-1}\frac{n(n-1)}{2} + 3^{k-1}n^3 \\
            &= 3^{k-1}(n^3 - (\frac{n(n-1)}{2} - n + 1)) > 0,
        \end{align*}
        where $K_n$ is the complete graph with $n$ vertices.
        So, we have $f(k,n) - 1 > M$ as desired.
        
    \end{proof}

    \noindent With this, we are finally ready to prove the main result. 
    \begin{description}
        \item[$(\Leftarrow)$] 
        Suppose that $G$ has a Hamiltonian path starting at $v$. 
        Then, for each $1 \leq i \leq 3^{k-1}$, let $e_{1}^{i} \cdots e_{n-1}^i$ denote the edges on the Hamiltonian path of $G_i$ in $S_i$.
        This is possible since each $S_i$ is a spiked graph of $G_i$, which is a copy of $G$.
        Note that the number of edges on each Hamiltonian path is $n - 1$ and that the number of spikes for each $S_i$ is $n \cdot g(k,n) = n \cdot 3^{k-1}n^3$.
        Furthermore, note that each of the tail $P_i$ has $f(k,n) = 3^{k-1} \cdot (n - 1 + n \cdot g(k,n)) + 1$ edges.
        
        Define the following variables:
        \begin{itemize}
            \item $m_H := n - 1$ for the size of Hamiltonian $v$-path
            \item $m_S := n \cdot g(k,n) = n \cdot 3^{k-1}n^3$ for the total number of spikes in each $S_i$
            \item $m := m_H + m_S$.
        \end{itemize}
        Then, one possible reconfiguration sequence from $E_s$ to $E_t$ in $G_{path}(v)$ is to move all the edges in path $P_1$ to each of the $S_i$'s while maintaining the connected $k$-bounded pathwidth property.
        \begin{enumerate}
            \item Let $i = 1$. While $i \leq 3^{k-1}$, we want to move $m$ tokens to each $S_i$.
                \begin{enumerate}
                    \item Let $a = (i - 1) \cdot m$.
                    \item First, we move the token $p_{a}^{1} p_{a + 1}^{1}$ to $e_1^{i}$, $p_{a + 1}^{1} p_{a + 2}^{1}$ to $e_2^{i}$, and so on, until $p_{a + m_H -1}^1 p_{a + m_H}^1$ to $e_{m_H}^{i}$. This is essentially moving the tokens on the tail $P_1$ to the edges on the Hamiltonian $v$-path in $S_i$.
                    \item Then, we move each of $p_{a + m_H}^1 p_{a + m_H + 1}^1, \cdots, p_{a + m - 1}^1 p_{a + m}^1$ to each of the spikes in $S_i$ sequentially. After this step, there should be a token on all the spikes in $S_i$.
                    \item Let $i = i + 1$.
                \end{enumerate}
            \item We move the token from $p_{f(k,n)-1}^1 p_{f(k,n)}^1$ to $p_{f(k,n)-1}^{3^{k-1} + 1} p_{f(k,n)}^{3 ^{k-1} + 1}$. This is moving the last token on the tail $P_{S \setminus T}$ to the tail $P_{T \setminus S}$, that is, the tail that is in the target but not in the source configuration.
            \item Finally, we symmetrically move all the tokens that we moved to each $S_i$ in previous steps back to the tail $P_{T \setminus S}$.
         \end{enumerate}
   
        We claim that each step of the reconfiguration sequence induces a graph that satisfies the property \textsc{connected $k$-bounded pathwidth}. 
        It suffices to consider token jumps performed in step $a$ of our reconfiguration sequence, since the remaining steps are symmetric in the sense that the graph induced is isomorphic to a graph induced by a previous reconfiguration step. 
        It is easy to see that at each step, the graph induced by the tokens is connected.
        Note that $E_s$ and $E_t$ both induce a pathwidth $k$ graph by Corollary \ref{cor:properties_of_claw}, since they are an almost $(k,l)$-claw.

        Now, consider the $j^{th}$ token jump in step $(a)$ for some $1 \leq j \leq 3^{k-1} \cdot m$. 
        Let $G'$ be the subgraph induced by the tokens on $G$ right after the $j^{th}$ token jump.
        Note that $G'$ is a perfect ternary tree of depth $k$ where each leaf node is $1$-summed with a graph of pathwidth $1$, except for the leaf $t_{3^k-1}+1$, where we can consider it as $1$-summed with the singleton graph, which has pathwidth $0$.
        For each tail $P_i = p_{0}^{i} \cdots p_{f(k,n)}^{i}$, the graph induced by the tokens on $P_i$ forms a path with $p_{f(k,n)}^{i}$ as one of its endpoints (or a singleton, in the case of $P_{3^{k-1}+1}$), which means it has a $p_{f(k,n)}^{i}$-path decomposition.
        Furthermore, by construction, each leaf nodes $t_{i}$ for $i \in \set{2 \cdot 3^{k+1}+1 \cdots 3^k}$ is $1$-summed with a caterpillar with $p_{f(k,n}^{i}$ as one of its ends.
        This is true since extending the tail $P_i$ with a Hamiltonian path remains a path (e.g. step $(b)$), and viewing it as a caterpillar, it has $p_{f(k,n}^{i}$ as one of its ends. 
        We then add spikes to the path in step $(c)$, but this does not change the ends of the caterpillar.
        It is not hard to see that for any caterpillar $C$, with $e \in V(C)$ as one of its ends, there is an $e$-path decomposition of $C$. 
        
        Note that each leaf node $t_i$ is also identified as $p_{f(k,n)}^{i}$ due to the $1$-sum operation.
        So, from the analysis above, we have that each of the graphs $1$-summed with the leaf node $t_i$ has a $t_i$-path decomposition.
        By point $(4)$ of Lemma \ref{lem:pw_structural}, we have that $pw(G') = k + d_{min} = k$, so each subgraph induced in our reconfiguration sequence has pathwidth $\leq k$.
        Hence, our sequence is a valid reconfiguration sequence, and so $(G_{path}(v), E_s, E_t)$ is reconfigurable.

        \item[$(\Rightarrow)$] Suppose that $(G_{path}(v), E_s, E_t)$ is reconfigurable.
        Then, for any reconfiguration sequence, there exists a step during the reconfiguration in which a token is placed on the edge $e' = p_{f(k,n)}^{3^{k-1}+1} p_{f(k,n)-1}^{3^{k-1}+1}$. 
        This is the first edge of the tail $P_{T \setminus S}$.
        Note that each token-induced subgraph in our reconfiguration sequence has to be connected by definition of \text{connected $k$-bounded pathwidth}.
        
        Consider the earliest step $t$ in the sequence where a token is placed on $e'$.
        Since $t$ is the earliest step, by the connectivity constraint, no tokens have been placed on $P_{T \setminus S}$ before step $t$.
        In particular, for all reconfiguration steps before step $t$, the tokens that were moved away from the $(k,l)$-claw must be placed on edges in the spiked graphs, which contain $\sum_{i=1}^{3^{k-1}} \abs{E(S_i)})$ edges. 
        We claim the following:
        \begin{claim}\label{claim:pw_ternary_tree}
            For any step of the reconfiguration sequence before step $t$, the token-induced subgraph contains the depth $k-1$ ternary tree rooted at $r$ (the same root of the depth $k$ ternary tree).
        \end{claim}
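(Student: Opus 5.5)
The plan is to combine the connectivity requirement on every intermediate configuration with a counting argument on the number of tokens. Let $T'$ be the depth-$(k-1)$ ternary tree rooted at $r$ inside $T$. The key structural fact is that every edge of $T$ is a bridge of $G_{path}(v)$. This holds because the graph is obtained from the tree $C$ by $1$-summing the spiked graphs $S_i$ at leaves of the tails, so no cycle of $G_{path}(v)$ uses an edge of $T$.

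Fix any step before $t$ and suppose, for contradiction, that some edge $e = xy$ of $T'$ carries no token, where $x$ is at depth $j$ and $y$ is at depth $j+1 \le k-1$. Deleting $e$ splits $G_{path}(v)$ into two components: $A$, which contains $r$, and $B$, which contains $y$. Since $e$ is a bridge and the token-induced subgraph must be connected, all $|E_s|$ tokens lie entirely inside $A$ or entirely inside $B$. Moreover, before step $t$ no token lies on $P_{T\setminus S}$, by the choice of $t$ and connectivity as noted above. By Proposition~\ref{prop:claw_size}, $|E_s| = |E(C)| - f(k,n)$.

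\emph{Case: tokens inside $B$.} The subtree rooted at $y$ has at most $3^{k-1}$ leaves. So $B$ contains at most $3^{k-1}$ tails, fewer than $3^{k-1}$ tree edges of $T$, and spiked-graph edges numbering at most $\sum_i |E(S_i)| = M + g(k,n)$. By Claim~\ref{claim:pw_count}, this last quantity is less than $f(k,n) - 1 + g(k,n)$. Since $|E_s| \ge (3^k - 1) f(k,n)$ and $k \ge 2$, this total is far too small, because $g(k,n) < f(k,n)$ follows directly from the definitions.

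\emph{Case: tokens inside $A$.} Because $y$ has depth at most $k-1$, the subtree at $y$ contains at least $3$ leaves. Hence at least $3$ tails, each with $f(k,n)$ edges, lie in $B$ and are unavailable. Possibly one of these is $P_{T\setminus S}$. Otherwise $P_{T\setminus S}$ lies in $A$ and is also unavailable. Either way, the edges of $C$ usable in $A$ number at most $|E(C)| - 3f(k,n)$, not counting $P_{T\setminus S}$ twice, and the edge $e$ itself is also lost. Adding all spiked-graph edges, which number less than $f(k,n) - 1 + g(k,n)$ by Claim~\ref{claim:pw_count}, the capacity of $A$ is below $|E(C)| - 2f(k,n) + g(k,n) < |E(C)| - f(k,n) = |E_s|$. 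This is a contradiction.

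The main obstacle is the bookkeeping in the second case. One must make sure that $P_{T\setminus S}$ is excluded exactly once, and that the spiked graphs counted are only those actually attached in the construction. I would use the bound $\sum_i |E(S_i)| = M + g(k,n)$, which fits the indexing of Claim~\ref{claim:pw_count}. If more spiked graphs are attached, the bound must be adjusted, but the slack $f(k,n) \gg g(k,n)$ should still absorb this, since losing two full tails costs $2f(k,n)$ against at most one tail's worth of spiked capacity.
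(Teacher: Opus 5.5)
Your argument is correct and is essentially the paper's. An unoccupied edge of the depth-$(k-1)$ tree is a bridge, so connectivity empties at least two full tails of $E_s$ besides $P_{T\setminus S}$, and Claim~\ref{claim:pw_count} shows the $3^{k-1}$ spiked graphs cannot absorb that many tokens; your two-case capacity count (tokens above or below the cut) spells out what the paper compresses into one line.

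There are two slips, neither of which affects the main argument. First, the subtree under a depth-$1$ vertex has $(3^k-3)/2 \ge 3^{k-1}$ edges of $T$, not fewer than $3^{k-1}$; this is negligible against $f(k,n)$ anyway. Second, your closing hedge is wrong. Spiked graphs on all $3^k$ tails would have total capacity roughly $3f(k,n)$, not one tail's worth, and the count would no longer close. So the restriction to the $3^{k-1}$ branch-$3$ tails is genuinely needed.
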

        \begin{proof}
            Suppose not. Then, it means there is at least a token on a leaf edge of the depth $k-1$ ternary tree that has been moved elsewhere.
            By the connectivity constraint, we need to move away all tokens on at least $2$ tails, since $1$ tail could be the one remove by construction of the almost $(k,l)$-claw in $E_s$.
            However, we know that $\sum_{i=1}^{3^{k-1}} \abs{E(S_i)}) < 2f(k,n)$ by Claim \ref{claim:pw_count}, since $\sum_{i=1}^{3^{k-1}} \abs{E(S_i)}) < f(k,n) - 1 + g(k,n) < 2f(k,n)$.
            This is a contradiction since we don't have enough space to move away all those tokens. 
        \end{proof}
        Now, suppose, for the contrary, that at step $t$, there exists a token on some edge on each of the tails $P_i$, for $i \in [3^k]$. 
        By Claim \ref{claim:pw_ternary_tree}, we know each step on the sequence before step $t$ contains the depth $k-1$ ternary tree rooted at $r$.
        So, by the connectivity constraint, these tokens on the $3^k$ tails, along with the depth $k-1$ ternary tree, induce a graph which contains a $(k,1)$-claw as a subgraph, so it has pathwidth $\geq k+1$ by Corollary \ref{cor:properties_of_claw}, contradicting the pathwidth $\leq k$ constraint.
        This means, at step $t$, there must be a tail $P'$ where there is no token placed on it.
        This implies, at step $t-1$, right before moving the token to $e'$, at least $f(k,n)-1$ tokens on $P'$ have already been moved to some other edges on $G_{path}(v)$.
        
        Due to the connectedness property, we cannot move any tokens to edges of $P_{T \setminus S} \setminus \set{e'}$ before first moving a token to $e'$.
        Since step $t$ is the first step for which $e'$ has a token placed on it, the only place for the tokens on $P'$ to move to is the edges on each of the $g(k,n)$-spiked graph, namely $S_i$ for $i \in [3^{k-1}]$. 
        Since we have moved away at least $f(k,n)-1$ tokens of $P'$ away at step $t-1$, by Claim \ref{claim:pw_count}, there is at least a token on some spike of $v$ for every vertex $v \in \bigcup_{i=1}^{3^{k-1}} V(G_i)$.
        
        We make another observation. 
        Let $U$ be the subgraph induced by the tokens at step $t$. 
        For $i \in [3^{k-1}]$, let $R_i$ denote the token-induced subgraph constrained to the edges in the $g(k,n)$-spiked graph $S_i$ and the tail $P_{2 \cdot 3^{k-1} + i}$. 
        Then $U$ and all of the $R_i$'s are non-empty connected subgraphs of $G_{path}(v)$.
        
        Recall that $r$ is the root of the ternary tree $T$, with adjacent vertices $\set{r_1, r_2, r_3}$.
        Let $C_1, C_2, C_3$ be the connected components we get from $G_{path}(v) \setminus \{r\}$, where $r_i \in C_i$ for $i \in [3]$.
        
        Now, consider the edge $rr_3$. 
        Suppose, for the contrary, that there is no token on $rr_3$ at step $t$ of the reconfiguration sequence. 
        Then, since $U$ is connected, there are $2$ possibilities.
        \begin{description}
            \item[Case 1:] There are no tokens placed on any of the edges of $C_3$.
            Recall that each $R_i$, for $i \in [3^{k-1}]$, are non-empty.
            Since the $R_i$'s are subgraphs of $C_3$, this is not possible.
            \item[Case 2:] There are no tokens placed on any of the edges of $E(C_1) \cup E(C_2) \cup \set{rr_1, rr_2}$.
            Note that at any point in the reconfiguration, there is the same number of tokens on $G_{path}(v)$.
            In the very least, there should be no tokens placed on $C_1 \cup \set{rr_1}$, so all tokens that are initially on $C_1$ and $rr_1$ have to move elsewhere on $G_{path}(v)$.
            There are $\abs{E(C_1) \cup \set{rr_1}} = \frac{3^{(k-1)+1}-1}{2} + 3^{k-1} f(k,n) - 1 + 1 = \frac{3^{k}-1}{2} + 3^{k-1} f(k,n)$ edges on $C_1$ and $rr_1$ initially.
            However, there are at most 
            \begin{align*}
                \abs{E(P_{T \setminus S})} + \sum_{i=1}^{3^{k-1}}\abs{E(S_i)} &= f(k,n) + \sum_{i=1}^{3^{k-1}} (\abs{E(G_i)} + n \cdot g(k,n)) \\
                &\leq f(k,n) + 3^{k-1} \cdot (n^2 + n \cdot g(k,n)) \\
                &= f(k,n) +  3^{k-1} \cdot [(n-1 + n \cdot g(k,n)) + (n^2-n+1)] \\
                &= f(k,n) + f(k,n) + 3^{k-1} (n^2-n+1) \\
                &\leq 2f(k,n) + (3^{k-1} n^3) \\
                &= 2f(k,n) + g(k,n) \\
                &\leq 3f(k,n)
            \end{align*}
            slots for the tokens to move to.
            Furthermore, note that 
            \begin{align*}
                \abs{E(B'_1) \cup \set{rv_1}} &= \frac{3^{k}-1}{2} + 3^{k-1} f(k,n) \\
                &\geq 4 + 3^{k-1}f(k,n) \tag{since $k \geq 2$} \\
                &> 3f(k,n).
            \end{align*}
            So, this case is also impossible.
        \end{description}
        This is a contradiction since both cases are impossible.
        So, there is a token on $rr_3$ and, consequently, $U$ has to contain the edge $rr_3$.
        Since there is a token on edge $rr_3$ and each of the spiked graphs $S_i$'s for $i \in [3^{k-1}]$, by the connectivity of $U$, there must be a token on all the edges in the $(k-1, f(k,n))$-claw contained in $C_3$. 
        
        Now, suppose, for a contradiction, that $pw(R_{i}) \geq 2$ for all $i \in [3^{k-1}]$.
        Then, by point 1 of Lemma $\ref{lem:pw_structural}$, we have a ternary tree of depth $k-1$, where each of its leaves are $1$-summed with connected graphs of pathwidth at least $2$, so this means $U$ contains a subgraph with pathwidth at least $k-1+2 = k+1$, contradicting the pathwidth $\leq k$ constraints.

        So, there exists some $i \in [3^{k-1}]$ where $R_i$ is a connected graph with pathwidth $1$ (i.e. a caterpillar).
        Without loss of generality, we assume $R_i = R_1$.
        Recall from the previous observation, there is at least a token on some spike at $s$ for every vertex $s \in \bigcup_{i=1}^{3^{k-1}} V(S_1)$.
        Note that $R_1$ is a subgraph of the $1$-sum of path $P_{2 \cdot 3^{k-1}+1}$ and a $g(k,n)$-spiked graph.
        Furthermore, note that $f(k,n) \geq 2$ since $k \geq 2$, by assumption, and $n \geq 1$ since $G$ is a non-empty connected graph.
        As mentioned, there is a token on all the edges in the $(k-1, f(k,n))$-claw contained in $C_3$, so $R_1$ must contain all the edges of the path $P_{2 \cdot 3^{k-1}+1}$, which has length $f(k,n) \geq 2$.
        
        Since $R_1$ is a connected pathwidth $1$ graph (equivalently, a caterpillar), and the path $P_{2 \cdot 3^{k-1}+1}$ has length $\geq 2$, and there is a token on each $s$-spike for each vertex $s \in V(G_{1})$.
        Then, by Lemma \ref{lem:spike_hamiltonian}, there is a Hamiltonian $v_{G_1}$-path on $G_1$, where $v_{G_1}$ is the vertex we $1$-summed with the tail $P_{2 \cdot 3^{k-1}+1}$.
        Since $G_1$ is just a copy of $G$, there is a Hamiltonian $v$-path for $G$ as desired.
    \end{description}

    To conclude, we show that this is a polynomial-time reduction.
    Let the size of $G$ be $N + M$, where $N = \abs{V(G)}$ and $M = \abs{E(G)}$.
    Since $M < N^2$, it suffices to show that our input $(G_{path}(v), E_s, E_t)$ has size polynomial to $N$.
    By the construction of $G_{path}(v)$, it contains a $(k, f(k,n))$-claw subgraph and $3^{k-1}$ $g(k,N)$-spiked graphs of $G$.
    By Proposition \ref{prop:claw_size}, the $(k, f(k,n))$-claw subgraph has $\frac{3^{k+1}-1}{2} + 3^k \cdot f(k,N) - 1$ edges, thus $\frac{3^{k+1}-1}{2} + 3^k \cdot f(k,N)$ vertices. 
    The spiked graphs have $3^{k-1} \cdot (N + N \cdot g(k,N))$ vertices in total. 
    Recall $g(k,N) = 3^{k-1}N^3$ and $f(k,N) = 3^{k-1} \cdot (N-1 + N \cdot g(k,N)) + 1$. Then,  
    \begin{align*}
        \abs{V(G_{path}(v))} &\leq \frac{3^{k+1}-1}{2} + 3^k \cdot f(k,N) + 3^{k-1} \cdot (N + N \cdot g(k,N)) \\
        &\leq 3^{2k} + 3^{k}(3N \cdot g(k,N)) + 3^{k} \cdot (3N \cdot g(k,N)) \\
        &\leq 3^{2k}(N \cdot g(k,N)) + 3^{2k}(N \cdot g(k,N)) + 3^{2k}(N \cdot g(k,N)) \\
        &\leq 3^{3k} \cdot N^4
    \end{align*}
    Since $k$ is a constant, the size of $G_{path}(v)$ is polynomial in $N$.
    Similarly, note that the size of $E_s, E_t$ are both polynomially bounded above by $\abs{E(G_{path}(v))} \leq \abs{V(G_{path}(v))}^2$.
    Hence, we have a polynomial-time reduction as desired.
\end{proof}

\section{Graphs with Bounded Treewidth and Planar Graphs}\label{app:treewidth}

\subsection{Omitted Proofs in Section \ref{subsec:reconf_treewidth}}\label{app:omitted_treewidth}

\twkstruct*

\begin{proof}
    First, we prove part $1$. Let $T_G, T_H$ be a minimal tree-decomposition for $G$ and $H$, respectively. 
    Without loss of generality, let the $1$-summed vertex be $u$. 
    For each bag in $T_H$ that contains the vertex $v$, rename the vertex $v$ as $u$.
    Let $B_G$ and $B_H$ be bags in $T_G$ and $T_H$ that contain the vertex $u$, respectively.
    Create a new tree $T$ by joining $T_G$ and $T_H$ by an edge through the vertices corresponding to the bags $B_G$ and $B_H$.
    Then, $T$ is a valid tree-decomposition of $G \oplus_{u,v} H$.
    Since the maximum bag size of $T$ is the same as that of $T_H$ or $T_G$, $tw(G \oplus_{u,v} H) \leq k$. 
    
    Now, for part $2$. Let $T$ be a minimal tree-decomposition of $G$.
    Let $G'$ be the resultant graph of the $m$-subdivision and $s_1, \cdots, s_m$ be the subdivided vertices.
    For each bag $B$ containing $uv$, we define a new bag $B_i = \set{u,v,s_i}$ such that $B_i$ is adjacent to $B$ for all $i \in [m]$.
    We call this new tree $T'$.
    Note that $T'$ is a valid tree-decomposition of $G'$ and the maximum bag-size is still $\leq k+1$, since the new bags are only of size $3$, so $tw(G') \leq k$.
\end{proof}

\section{Reconfiguring with Extra Resources}

\subsection{Omitted Proofs in Section \ref{sec:technical3}}\label{app:extra_buffer}

\buffersimple*

\begin{proof}\label{pf:buffer_simple}
    For (1), this is true because an instance of \textsc{tree reconfiguration} is always reconfigurable when $G$ is connected \cite{hanaka_20}, so the minimum required buffer is $0$. 
    
    For (2), cycle reconfiguration is not possible under the usual token-jumping rule \cite{hanaka_20}, and moving any edge into the buffer only results in a path. Thus, the minimum required buffer is infinity.
    
    For (3), we can use a slight variant of \textsc{Path reconfiguration} instance used in \cite{hanaka_20}. 
    We suppose $n \geq 21$, and let $m = \floor{\frac{n-21}{10}}+3$ (so $m \geq 2$).
    Define the input graph to be $G = P_1 \oplus_{p_1, p_2} P_2 \oplus_{p2, p_3} P_3$, where $P_1$, $P_2$, $P_3$ are paths of lengths $m$ respectively, and $p_1$, $p_2$, $p_3$ are one of their endpoints, respectively.
    Let $E_s = E(P_1) \cup E(P_3)$ and $E_t = E(P_2) \cup E(P_3)$, then the instance $(G, E_s, E_t)$ of \textsc{Path reconfiguration} has size $\ell := (3m+1) + 3m + 2m + 2m = 10m + 1$.
    It is easy to see that $(G, E_s, E_t)$ has minimum required buffer $m-1$ since any edge we try to move from $E(P_1)$ to $E(P_2)$ either disconnects the path or makes it into a tree. 
    The size $m-1$ buffer will be used to shorten $E_s$ from a length $2m$ path down to a path with length $m+1$, keeping $P_2$ and one edge in $P_1$. Then we reconfiguration the last remaining edge in $P_1$ to $P_3$, and add all tokens in the buffer to the rest of $P_3$. 
    Now, define a new instance $(G', E_s, E_t)$ where $G'$ is simply $G$ with $n-\ell$ new vertices (not incident to any edge), which makes this an instance of size $n$.
    Note that we can reconfigure $(G', E_s, E_t)$ with $k$ buffer if and only if $(G, E_s, E_t)$ is also reconfigurable with $k$ buffer, since there is no extra added edge in $G'$.
    So, $(G', E_s, E_t)$ is a size-$n$ instance with minimum required buffer $m-1 = \Omega(n)$ as required.
\end{proof}

\bufferpwtw*

\begin{proof}
    For \textsc{connected $k$-bounded pathwidth reconfiguration}, we consider a slight variant of the constructed instance $(G'_{path}(v), E_s, E_t)$ used in the proof of Theorem \ref{thm:pathwidth_k} in Appendix \ref{app:main_result}.

    For $k = 1$, we can use the same proof as the one for the minimum required buffer of \textsc{path reconfiguration} with one single change.
    We will provide an almost verbatim proof here.
    We suppose $n \geq 31$, and let $m = \floor{\frac{n-31}{10}}+3$ (so $m \geq 3$).
    Define the input graph to be $G = P_1 \oplus_{p_1, p_2} P_2 \oplus_{p2, p_3} P_3$, where $P_1$, $P_2$, $P_3$ are paths of lengths $m$ respectively, and $p_1$, $p_2$, $p_3$ are one of their endpoints, respectively.
    Let $E_s = E(P_1) \cup E(P_3)$ and $E_t = E(P_2) \cup E(P_3)$, then the instance $(G, E_s, E_t)$ of \textsc{Path reconfiguration} has size $\ell := (3m+1) + 3m + 2m + 2m = 10m + 1$.
    It is easy to see that $(G, E_s, E_t)$ has minimum required buffer $m-2$ (instead of $m-1$ as in the above proof) since any edge we try to move from $E(P_1)$ to $E(P_2)$ either disconnects the path or makes it into a pathwidth $2$ graph.
    Now, define a new instance $(G', E_s, E_t)$ where $G'$ is simply $G$ with $n-\ell$ new vertices (not incident to any edge), which makes this an instance of size $n$.
    Note that we can reconfigure $(G', E_s, E_t)$ with $k$ buffer if and only if $(G, E_s, E_t)$ is also reconfigurable with $k$ buffer, since there is no extra added edge in $G'$.
    So, $(G', E_s, E_t)$ is a size-$n$ instance with minimum required buffer $m-2 = \Omega(n)$ as required.

    For any fixed $k \geq 2$, we need a slightly different construction, but the structure remains similar.
    We suppose $n \geq 14 \cdot 3^k - 8$, and let $m = \floor{\frac{n- (6 \cdot 3^k -4)}{4 \cdot 3^k -2}}$ (so $m \geq 2$).
    Define the input graph $G$ to be a $(k,m)$-claw and let $E_s$ and $E_t$ be an almost $(k,m)$-claw with the first path in the first and second branch removed, respectively.
    Noting that by Proposition \ref{prop:claw_size}
    \begin{align*}
        \abs{V(G)} &= (3^{k+1}-1)/2 + 3^k m \\
        \abs{E(G)} &= (3^{k+1}-1)/2 + 3^k m - 1 \\
        \abs{E_s} &= (3^{k+1}-1)/2 + 3^k m - 1 - m\\
        \abs{E_t} &= (3^{k+1}-1)/2 + 3^k m - 1 - m, 
    \end{align*}
    then the instance $(G, E_s, E_t)$ of \textsc{Path reconfiguration} has size $\ell := (6 \cdot 3^k -4) + (4 \cdot 3^k -2) \cdot m$. 
    Using the same argument as the proof in \ref{thm:pathwidth_k}, we know $(G, E_s, E_t)$ has minimum required buffer of $m-1$ since any edge we try to move from $E_s \setminus E_t$ to $E_t \setminus E_s$ either disconnects the subgraph or makes it into a pathwidth $k+1$ subgraph.
    Now, define a new instance $(G', E_s, E_t)$ where $G'$ is simply $G$ with $n-\ell$ new vertices (not incident to any edge), which makes this an instance of size $n$.
    Note that we can reconfigure $(G', E_s, E_t)$ with $k$ buffer if and only if $(G', E_s, E_t)$ is also reconfigurable with $k$ buffer, since there is no extra added edge in $G'$.
    So, $(G', E_s, E_t)$ is a size-$n$ instance with minimum required buffer $m-1 = \Omega(n)$ as required.

    Now, for \textsc{$k$-bounded treewidth reconfiguration}, we consider a slight variant of the constructed instance used in the proof of Theorem \ref{thm:minor_closed_hardness} in Appendix \ref{app:treewidth}.
    Let $H$ be a forbidden minor for the class of $k$-bounded treewidth with the minimum number of edges, and let $uv$ and $st$ be two edges of $H$. We know this exists from the proof of \ref{thm:minor_closed_hardness} since it is a \emph{special} minor-closed graph class.
    Note that $n_H = \abs{V(H)}$ and $m_H := \abs{E(H)}$ are constants with respect to the input, since it depends only on the graph class, not the input instance. 
    We suppose $n \geq n_H + 31 m_H - 22$, and let $m = \floor{\frac{n - (n_H + 17 m_H - 11)}{14 m_H - 11}}$ (so $m \geq 1$). 
    Define the instance $(G, E_s, E_t)$ as follows.
    Let $G$ be $H$ by performing a $2(m+1)$-subdivision on all edges except $uv$ and a $(m+1)$-subdivision on $uv$, and let $E_s = E(H) \setminus (S_{uv}^{\uparrow} \cup S_{uv}^{\downarrow})$ and $E_t = E(H) \setminus S_{st}^{\uparrow}$.
    Note that 
    \begin{align*}
        \abs{V(G)} &= n_H + (m_H - 1) \cdot (2m+2) + (m+1) \\
        &= (n_H + 2m_H - 1) + (2 m_H -1) \cdot m \\
        \abs{E(G)} &= m_H + (m_H - 1) \cdot (4m+4) + (2m+2) \\
        &= (5m_H - 2) + (4 m_H -2) \cdot m \\
        \abs{E_s} &= \abs{E(G)} - 2(m+1) \\
        &= (5m_H - 4) + (4 m_H - 4) \cdot m \\
        \abs{E_t} &= \abs{E(G)} - 2(m+1) \\
        &= (5m_H - 4) + (4 m_H - 4) \cdot m,
    \end{align*}
    so the instance $(G, E_s, E_t)$ has size $\ell := (n_H + 17 m_H - 11) + (14 m_H - 11) \cdot m$.
    Using the same argument as the proof in \ref{thm:minor_closed_hardness}, we know $(G, E_s, E_t)$ has minimum required buffer $m$ since by a counting argument, we need at least $m$ tokens to be removed from $G$ or somewhere during the reconfiguration sequence, the token-induced subgraph will contain the forbidden minor $H$ as a minor.
    Now, define a new instance $(G', E_s, E_t)$ where $G'$ is simply $G$ with $n-\ell$ new vertices (not incident to any edge), which makes this an instance of size $n$.
    Note that we can reconfigure $(G', E_s, E_t)$ with $k$ buffer if and only if $(G', E_s, E_t)$ is also reconfigurable with $k$ buffer, since there is no extra added edge in $G'$.
    So, $(G', E_s, E_t)$ is a size-$n$ instance with minimum required buffer $m = \Omega(n)$ as required.

    The $\Omega(n)$ lower bound on the minimum required buffer for \textsc{Planar Graph Reconfiguration} follows from the same argument as \textsc{$k$-bounded treewidth reconfiguration}.
\end{proof}

\buffercactus*

\begin{proof}
    Suppose we have a buffer of size $c_{\max}$, and recall that $\abs{E_s} = \abs{E_t}$, where each of $E_s$ and $E_t$ has at most $c_{\max}$ cycles. 
    Removing an appropriate set of $c_{\max}$ edges from the source yields a forest subgraph $F_s \subseteq E_s$ with $\abs{E_s} - c_{\max}$ edges, and similarly removing $c_{\max}$ edges from the target yields a forest subgraph $F_t \subseteq E_t$ with the same number of edges. 
    Since $F_s$ and $F_t$ are equal-size forests of $G$, hence equal-size independent sets of the graphic matroid, they are reconfigurable into one another by the symmetric exchange property for independent sets.

    We construct the reconfiguration sequence as follows. First, place the $c_{\max}$ tokens of $E_s \setminus F_s$ into the buffer to obtain $F_s$, reconfigure $F_s$ into $F_t$, which we know is always possible by the argument above, and then move the $c_{\max}$ tokens in the buffer onto the missing target edges $E_t \setminus F_t$. 
    This requires an extra buffer of size at most $c_{\max}$ as desired.

    Now suppose both the source $E_s$ and $E_t$ are triangular cacti, that is, all of their components are connected graphs in which every cycle is a triangle and every edge is part of a triangle.
    The core idea of the proof is to view a triangular cactus as a solution to a matroid parity instance, which we refer to \cite{lovasz_86, szigeti_03} for details.
    First, we construct $E_s'$ and $E_t'$ by removing one triangle from $E_s$ and from $E_t$ respectively, and consider the reconfiguration instance $(G, E_s', E_t')$.
    Then, viewing $E_s'$ and $E_t'$ as solutions to a matroid parity instance, we invoke the \textsc{Matroid Parity Reconfiguration} result of \cite{bousquet_23}, which states that any two matroid parity solutions of the same size are reconfigurable into one another provided that a strictly larger parity solution exists. 
    Such a larger solution exists here, since $E_s$ itself is one. 
    This implies there is a reconfiguration sequence from $E_s'$ to $E_t'$ that moves one triangle at a time, where moving a triangle relocates its three tokens to another triangle on the graph $G$ and the token-induced subgraph remains a triangular cacti, and thus cacti.
    We implement such triangle movement by first storing the three tokens of the triangle into the buffer and then placing those three stored tokens at the intended position of the target triangle, which implies the sequence from $E_s'$ to $E_t'$ requires at most $3$ extra buffer space.
    
    Finally, recall that $E_s'$ and $E_t'$ were obtained by removing one triangle from $E_s$ and $E_t$ respectively. 
    We store the three tokens of the triangle removed from $E_s$ in the buffer, so the buffer starts with three tokens, and then we follow the reconfiguration sequence from $E_s'$ to $E_t'$ as above, and finally we place the three initially stored tokens onto the triangle removed from $E_t$. 
    Since the first triangle's three tokens remain in the buffer throughout the reconfiguration from $E_s'$ to $E_t'$, which itself uses at most $3$ extra buffer space, this means at most $6$ tokens can ever be in the buffer simultaneously. 
    Hence, when the source and target are triangular cacti, the minimum required buffer for any instance of \textsc{cacti reconfiguration} is at most $6$, as required.
\end{proof}

\end{document}